\documentclass[12pt]{article}

\usepackage[sort&compress]{natbib}
\usepackage{mathrsfs}
\usepackage{amsfonts}
\usepackage{tabularray}
\usepackage{amssymb}
\usepackage{dsfont,footmisc}
\usepackage[all]{xy}

\usepackage{graphicx}
\usepackage{amsmath}  
\usepackage{rotating}
\usepackage{appendix}
\usepackage{longtable}
\usepackage{booktabs}
\usepackage{multirow}

\usepackage{color}
\usepackage{subcaption}
\usepackage{comment}
\usepackage{tabularx}
\usepackage[font=small]{caption}%
\usepackage[colorlinks,linkcolor=red,anchorcolor=blue,citecolor=blue]%
{hyperref}

\newtheorem{theorem}{Theorem}[section] 
\newtheorem{lemma}{Lemma}[section] 
\newtheorem{remark}{Remark}[section] 
\newtheorem{corollary}{Corollary}[section] 
\newtheorem{example}{Example}[section]
\numberwithin{equation}{section}
\newtheorem{definition}{Definition}[section]
\allowdisplaybreaks[4]

\newenvironment{proof}[1][Proof]{\noindent \textbf{#1.} }{\  \rule{0.5em}{0.5em}}
\begin{document}
\baselineskip=16pt
\title{The conditional higher moment risk measure: second-order asymptotics with FGM contagion\thanks{This work  is supported by the National Social Science Fund of China (24BTJ034), the Provincial Natural Science Research Project of Anhui Colleges (2022AH050067, 2024AH050037) and the  doctoral research initiation fund of Anhui University (S020318033/015).}}

\author{Haifan Hu$^{a}$, Bingzhen Geng$^{a}$, Jiajun Liu$^{b}$ and Shijie Wang$^{a}$\\
{\small $^{a}$ School of Big Data and Statistics, Anhui University}\\
{\small $^{b}$ Department of Financial and Actuarial Mathematics, Xi'an Jiaotong-Liverpool University}
}

\date{}
\maketitle

\begin{abstract}
    This paper investigates second-order asymptotic expansions for the conditional higher moment ($\mathrm{CoHM}$) coherent risk measure under a Farlie–Gumbel–Morgenstern ($\mathrm{FGM}$) dependence structure, capturing a weak contagion between a primary loss risk and a reference risk. Assuming that the primary risk belongs to the Fr\'echet, Weibull, or Gumbel maximum domain of attraction, we systematically derive second-order asymptotic expansions using extreme value theory and second-order regular variation theory. Compared with existing first-order results, our refined approximations capture higher-order tail behavior and dependence effects more accurately. Numerical simulations confirm that the second-order asymptotics substantially reduce approximation errors, especially at extreme confidence levels. Empirical applications to insurance claim data further illustrate the practical superiority of the second-order approach.
\end{abstract}

\section{Introduction}
In the modern regulatory framework in insurance and finance, risk assessment fundamentally relies on coherent risk measures that capture extreme losses while satisfying desirable axiomatic properties. Value-at-Risk ($\mathrm{VaR}$) and Expected Shortfall ($\mathrm{ES}$) remain the cornerstones of Basel III and Solvency II, yet each has well-documented limitations: $\mathrm{VaR}$ fails to be subadditive and ignores the severity of losses beyond its quantile, while $\mathrm{ES}$, though coherent, provides only the first moment of tail losses and offers no flexibility to accommodate different degrees of risk aversion.

To overcome these shortcomings, 
\citet{Krokhmal2007} introduced the higher moment ($\mathrm{HM}$) coherent risk measure, which incorporates tail variability through an $L^k$ norm. By allowing the investor to choose a risk aversion parameter $k\ge 1$, the HM measure penalizes extreme events more heavily and nests $\mathrm{ES}$ as the special case $k=1$.
The $\mathrm{HM}$ risk measure can flexibly adjust sensitivity to extreme losses, allowing it to accommodate varying levels of risk aversion, and thus demonstrates broad applicability in portfolio optimization, capital allocation, and engineering reliability analysis. Subsequent theoretical investigations on $\mathrm{HM}$ risk measure have been conducted by \citet{Dentcheva2010} on Kusuoka representations of higher-order dual risk measures, by \citet{Chen2011} on post-optimality for mean-risk stochastic programs, and by \citet{Krokhmal2011} on modeling and optimization of risk. Whereas, for applications of $\mathrm{HM}$ risk measure, the reader is referred to \citet{Matmoura2013} for multistage option portfolio optimization, \citet{Vinel2017} for certainty equivalent measures of risk, \citet{Kouri2019} for higher moment buffered probability, and \citet{Gomez2022} for the gradient allocation principle based on the higher moment risk measure.

Despite the flexibility of unconditional $\mathrm{HM}$, real-world risk positions are rarely isolated. Consider two random variables (r.v.s) $X$ and $Y$, where $X$ represents a risk position and $Y$ a risk factor summarizing relevant sources of risk. Ignoring their dependence can severely underestimate tail risk. With a focus on extreme risks, we examine the conditional measurement of $X$ given the realization of an extreme scenario for $Y$. To this end, \cite{LiuYi2025} recently proposed a conditional version of the $\mathrm{HM}$ measure, denoted as $\mathrm{CoHM}_{p,q}(X|Y)$, where the conditioning event $\{Y > y_p\}$ captures such an extreme scenario. This measure combines three layers of information: the scenario risk level $p$ (how extreme the factor $Y$ is), the confidence level $q$ (tail probability of the primary loss $X$), and the risk aversion parameter $k$.

In a broad context, the above stylized framework for conditional risk measurement generalizes the unconditional $\mathrm{HM}$ risk measure by explicitly accounting for the additional risk contribution from a reference risk $Y$ to the primary loss risk $X$. From a systemic risk perspective, interpreting $Y$ as the extreme loss of the entire financial system and $X$ as the loss of an individual institution, $\mathrm{CoHM}$ measures the institution's conditional tail risk in the event of a systemic crisis. A large $\mathrm{CoHM}$ value signals a high vulnerability to systemic shocks, making it a useful tool to identify systemically important institutions and allocate systemic risk charges. From a general insurance perspective, $\mathrm{CoHM}$ acts as a conditional tail risk multiplier of one business line when another suffers a large loss, quantifying the tail contagion effect between lines. If $\mathrm{CoHM}$ is substantially larger than its unconditional counterpart, an extreme event amplifies the expected loss of the other line, providing a conditional capital multiplier that suggests increasing capital reserves under extreme scenarios. Furthermore, $\mathrm{CoHM}$ offers a pricing basis for cross-line triggered reinsurance (e.g., paying the other line contingent on a catastrophe in one line), thereby reflecting the reinsurer's true extreme exposure.

A large strand of post-crisis research has developed systemic risk measures that condition on extreme events. Prominent examples include $\mathrm{CoVaR}$ and $\mathrm{CoES}$ \citep{Adrian2016} as well as their further refinements \citep{Girardi2013, MainikSchaanning2014}, $\mathrm{SES}$ and $\mathrm{MES}$ \citep{Acharya2017}, and $\mathrm{SRISK}$ \citep{Acharya2012,Brownlees2017}. For comprehensive overviews, see \citet{Bisias2012}, \citet{Benoit2017} and  \citet{DeBandt2002}. We also refer to \citet{Asimit2018}, \citet{Qin2021}, \citet{Chen2022}, and \citet{LiuYi2025}, whose asymptotic frameworks are analogous to ours, as well as \citet{TXZ2026}, who investigated the $\mathrm{CoHM}$ risk measure under a nonstandard bivariate regular variation structure as $p\uparrow 1$ (with $q$ and $k$ fixed) and proposed asymptotically consistent estimators, thereby complementing the first‑order results of \citet{LiuYi2025} obtained under regime $q\uparrow 1$.

Most of the aforementioned conditional risk measures are based on a single reference risk, which does not reflect the reality that multiple risks may collapse simultaneously in financial markets. This limitation calls for multivariate systemic risk measures, which have attracted increasing attention; see, e.g., \citet{Lee2013}, \citet{Sun2018}, and \citet{Ling2019}. Following the idea of multi‑$\mathrm{CoVaR}$ introduced by \citet{Wen2025}, we further propose a multi-conditional higher moment ($\mathrm{MCoHM}$) risk measure. Instead of conditioning on an extreme event of a single reference risk, the $\mathrm{MCoHM}$ measure conditions on the joint extreme event of several reference risks. Concretely, for a given set of reference variables, each at its own scenario risk level, the $\mathrm{MCoHM}$ measure is defined as the minimal value of a threshold plus a scaled conditional $L^k$ norm of the excess loss, where the conditioning event requires all reference risks to exceed their respective high quantiles. This definition naturally generalizes the bivariate $\mathrm{CoHM}$ measure and helps capture risk contagion from multiple sources to the primary loss. Following the reasoning in \citet{LiuYi2025}, one can verify that the $\mathrm{MCoHM}$ measure is also coherent and shares analogous properties with its bivariate counterpart.

\citet{LiuYi2025} conducted a first-order asymptotic analysis of \(\mathrm{CoHM}_{p,q}(X|Y)\) as the confidence level \(q\) approaches one, assuming the primary risk belongs to the Fréchet, Weibull or Gumbel maximum domain of attraction ($\mathrm{MDA}$). Although these first-order results help regulators understand downside risk, they suffer from limited accuracy, especially at extreme confidence levels. With the growing demand for precise risk assessment, second-order (and higher-order) asymptotic analyses have become necessary. In this paper, we focus on the same FGM weak contagion framework and develop second-order asymptotic expansions for the \(\mathrm{CoHM}\) risk measure. We assume that the primary loss \(X\) lies in the MDA of the Fréchet, Weibull, or Gumbel distribution and that its tail satisfies a second-order regular variation condition. We emphasize that the FGM dependence structure is notable for its extensive applications in risk management and extreme value theory, owing to its analytical tractability and flexibility. It provides a practical tool for modeling the joint distributions of multiple loss variables, allowing portfolio risk assessment, extreme-event prediction, capital allocation, and enhanced risk defense capabilities. As demonstrated by \citet{WeiYuan2016}, the FGM distribution captures weak contagion without altering expected losses, while significantly affecting the tail distribution of portfolio losses, which is a desirable feature for low‑default portfolio risk management. These features have made the FGM copula a popular choice in quantitative risk management, as evidenced by studies on risk aggregation and capital allocation \citep{Cossette2013, BLIERWONG2023102}, ruin analysis under dependence \citep{Chen2015}, and second‑order asymptotics of risk concentration \citep{MaoYang2015}.

We derive explicit second-order expansions for the three $\mathrm{MDA}$ cases using extreme value theory and second-order regular variation theory. These expansions incorporate the FGM dependence parameter, marginal tail indices, and second-order parameters, characterizing how weak contagion and tail curvature influence the convergence rate. Relative to first-order results, our formulas substantially improve accuracy, often reducing approximation errors by substantial improvements in accuracy, as demonstrated by simulation studies. We further extend the analysis to the multi‑conditional ($\mathrm{MCoHM}$) setting, retaining the same second‑order structure. Empirical applications to light‑tailed motor insurance and heavy‑tailed fire loss data validate the FGM copula for modeling inter-line weak contagion, and the $\mathrm{CoHM}$ risk contribution metrics provide actionable benchmarks for differentiated regulatory capital allocation and reinsurance pricing.

The rest of this paper is organized as follows. Section \ref{CoHM} first introduces some definitions on $\mathrm{HM}$, $\mathrm{CoHM}$ and $\mathrm{MCoHM}$ risk measures, then gives the first-order asymptotics of $\mathrm{CoHM}$ risk measure. Section \ref{Second_asy} presents the second-order asymptotics of $\mathrm{CoHM}$ risk measure and subsequently conducts some in-depth discussions. Section \ref{Numerical_ill} gives numerical examples to show whether and how much the first-order expansions have been improved by the second-order analogs. Section \ref{Empirical_ana} provides an empirical analysis to illustrate the application of $\mathrm{CoHM}$ risk measure. Section \ref{Conclusion} concludes the work. Finally, Appendix \ref{Appendix} prepares some lemmas and provides all the proofs. 

\section{The CoHM risk measure}\label{CoHM}

\subsection{Notational conventions}

Let $(\Omega, \mathcal{F}, P)$ be a probability space. All r.v.s are real-valued and defined on this space. Throughout this paper, all limit relations are understood as $q \uparrow 1$ or $t \uparrow \infty$ unless otherwise specified.

For a r.v. $X$ with distribution function $F$, denote by $\overline{F}(x) = 1 - F(x)$ its survival function. For $q \in (0,1)$, the quantile function of $F$ is defined as

\[
F^{\leftarrow}(q) = \inf\{x \in \mathbb{R} : F(x) \geq q\},
\]
where, by convention, $\inf \emptyset$ is the right endpoint of the support set. The tail quantile function $U(\cdot)$ of $F$ is defined as

\begin{equation}\label{eq:quantile}
U(t) = \left(\frac{1}{\overline{F}}\right)^{\leftarrow}(t) = F^{\leftarrow}\left(1 - \frac{1}{t}\right), \qquad t \geq 1. 
\end{equation}

For two positive functions $f$ and $g$, we write $f \sim g$ if $\lim f/g = 1$, and $f = o(g)$ if $\lim f/g = 0$. For any $x, y \in \mathbb{R}$, set $x \vee y = \max\{x, y\}$.

\subsection{Definition of the HM risk measure}\label{sec:2.2}

Let $X$ be a real-valued r.v. representing a risk position in the loss-profit style, and let $F$ denote its distribution function. To avoid trivialities, we assume throughout that $X$ is not degenerate at a constant. Following \citet{Krokhmal2007}, for $X \in L^k$ with $k \geq 1$, the higher moment ($\mathrm{HM}$) coherent risk measure of $X$ at confidence level $0 < q < 1$ is defined as
\begin{eqnarray}
\mathrm{HM}_q(X) = \inf_{x \in \mathbb{R}} \left\{ x + \frac{1}{1 - q} \left\| (X - x)_+ \right\|_{L^k} \right\}, \label{eq:HM}
\end{eqnarray}
where $\| (X - x)_+ \|_{L^k} = \left(E\left[(X - x)_+^k\right]\right)^{1/k}$ denotes the $L^k$ norm, and $(X - x)_+ = \max\{X - x, 0\}$.

The effective region for the infimum in \eqref{eq:HM} is $-\infty < x \leq \hat{x}$, where $\hat{x} = \inf\{y \in \mathbb{R} : P(X \leq y) = 1\}$ denotes the essential supremum of $X$. Moreover, if $\hat{x} = \infty$, or if $\hat{x} < \infty$ and $\hat{\pi} = P(X = \hat{x}) < (1 - q)^k$, then the effective region reduces to $-\infty < x < \hat{x}$. Following \citet{Bellini2012}, any minimizer $x^* = x_q^* < \hat{x}$ of \eqref{eq:HM} is called an Orlicz quantile of $X$ at the level $q$.

When $k = 1$, the $\mathrm{HM}$ risk measure coincides with the well-known $\mathrm{ES}$ at the confidence level $q$:
\[
\mathrm{HM}_q(X) = \mathrm{ES}_q(X) = x^* + \frac{1}{1 - q} E\left[(X - x^*)_+\right],
\]
where any minimizer $x^*$ belongs to the interval $[F^{\leftarrow}(q), F^{\rightarrow}(q)]$ with $F^\rightarrow$ denoting the right generalized inverse of $F$ defined as $F^\rightarrow(q)=\sup\{x\in\mathbb{R}:F(x)\leq q\}.$ This interval may collapse to a single point unless $q$ corresponds to a flat portion of the distribution $F$.

For $k > 1$, assuming either $\hat{x} = \infty$ or $q \in (0, 1 - \hat{\pi}^{1/k})$ when $\hat{x} < \infty$, the HM risk measure admits the explicit representation
\[
\mathrm{HM}_q(X) = x^* + \frac{1}{1 - q} \left\| (X - x^*)_+ \right\|_{L^k},
\]
where $x^* = x_q^* < \hat{x}$ is the unique minimizer of \eqref{ratio_solution}, uniquely determined by the first-order condition

\begin{eqnarray}
\frac{\left(E\left[(X - x)_+^{k-1}\right]\right)^k}{\left(E\left[(X - x)_+^k\right]\right)^{k-1}} = (1 - q)^k. \label{ratio_solution}
\end{eqnarray}

The strict inequality $x^* < \hat{x}$ follows from Lemma \ref{lem:first+0} in the Appendix \ref{Appendix}. For further details, we refer to Example 2.2 of \citet{Krokhmal2007}, equation (22) of \citet{Dentcheva2010}, and Theorem 2.1 of \citet{TangYang2012}.

Beyond its coherence and law invariance, the HM risk measure offers considerable flexibility in risk assessment through its two key parameters: the confidence level $q \in (0,1)$, which governs the tail region under consideration, and the risk aversion parameter $k \geq 1$, which controls the sensitivity to extreme outcomes and ensures compatibility with higher-order stochastic dominance. The interplay between $q$ and $k$ allows the HM measure to blend the features of monetary risk measures (through $q$) and variability risk measures (through $k$), making it a versatile tool for capturing both loss magnitude and tail uncertainty. We refer to Subsection 2.4 of \citet{Krokhmal2007}, Subsections 2.2 and 2.4 of \citet{Gomez2022}, and Section 4 of \citet{Pichler2024} for detailed justifications of these properties.

\begin{remark}\label{rem:HM-HG}
The Haezendonck-Goovaerts ($\mathrm{HG}$) risk measure, introduced by \citet{Haezendonck1982} within the framework of Orlicz-norm premium principles, is closely related to the HM risk measure. Specifically, when the Young function is chosen as $\phi(t) = t^k$ for $k \geq 1$, the $\mathrm{HG}$ risk measure at level $q$ coincides with the HM risk measure at an adjusted confidence level. More precisely, for any integrable r.v. $Z$, one has

\[
\mathrm{HG}_q(Z) = \mathrm{HM}_{\tilde{q}}(Z), \qquad \tilde{q} = 1 - (1 - q)^{1/k},
\]
or equivalently,
\[
\mathrm{HM}_q(Z) = \mathrm{HG}_{1 - (1 - q)^k}(Z).
\]
This equivalence will be exploited in Section 3 to derive the second-order asymptotic expansions of the $\mathrm{CoHM}$ risk measure by leveraging known results for the $\mathrm{HG}$ risk measure.
\end{remark}

\subsection{Definition of the CoHM risk measure} \label{def_CoHM}

Consider a pair of real-valued random vectors $(X,Y)$ with marginal distribution functions $F$ and $G$, respectively. Assume $X \in L^k$ for some $k \geq 1$. Let $y_p = G^{\leftarrow}(p)$ denote the $p$-th quantile of $Y$ for $0 < p < 1$, and suppose $P(Y > y_p) > 0$. The conditional higher moment ($\mathrm{CoHM}$) risk measure of $X$ at confidence level $0 < q < 1$, with risk aversion parameter $k \geq 1$, and conditional on the extreme event $\{Y > y_p\}$, is defined as

\begin{eqnarray}
\mathrm{CoHM}_{p,q}(X|Y) = \inf_{x \in \mathbb{R}} \left\{ x + \frac{1}{1 - q} \left( E\left[(X - x)_+^k \mid Y > y_p\right] \right)^{1/k} \right\}. \label{eq:CoHM}
\end{eqnarray}

In \eqref{eq:CoHM}, the conditioning event $\{Y > y_p\}$ with $p$ close to 1 captures an extreme scenario of the reference risk $Y$. The CoHM measure thus integrates three layers of information: the scenario risk level $p$, which determines the extremity of the conditioning event; the confidence level $q$, which specifies the tail probability of the primary loss $X$; and the risk aversion parameter $k$, which governs the sensitivity to extreme outcomes.

The effective region for the infimum in \eqref{eq:CoHM} is identical to that of the unconditional HM risk measure discussed in Subsection 2.2, namely $-\infty < x \leq \hat{x}$, where $\hat{x}$ denotes the essential supremum of $X$. If $\hat{x} = \infty$, or if $\hat{x} < \infty$ and $\hat{\pi} = P(X = \hat{x}) < (1 - q)^k$, then the effective region becomes $-\infty < x < \hat{x}$. Following the terminology of \citet{Bellini2012}, any minimizer $x^* = x_{p,q}^* < \hat{x}$ of \eqref{eq:CoHM} is called a conditional Orlicz quantile of $X$ at the level $q$ given $\{Y > y_p\}$.

When $k = 1$, the CoHM risk measure reduces to the conditional expected shortfall ($\mathrm{CoES}$):
\[
\mathrm{CoHM}_{p,q}(X|Y) = \mathrm{CoES}_{p,q}(X|Y) = x^* + \frac{1}{1 - q} E\left[(X - x^*)_+ \mid Y > y_p\right],
\]
where any minimizer $x^*$ lies in the interval $[H_p^{\leftarrow}(q), H_p^{\rightarrow}(q)]$, with $H_p$ denoting the conditional distribution of $X$ given $\{Y > y_p\}$. 

For $k > 1$, the minimizer $x^*$ is unique and characterized by the first-order condition
\[
\frac{\left(E\left[(X-x)_+^{k-1} \mid Y > y_p\right]\right)^k}{\left(E\left[(X-x)_+^k \mid Y > y_p\right]\right)^{k-1}} = (1 - q)^k. 
\]

A key observation, which will be instrumental in our asymptotic analysis, is that the $\mathrm{CoHM}$ risk measure can be expressed as an unconditional HM risk measure under a modified probability measure. Indeed, for each fixed $p \in (0,1)$, define the conditional probability measure
\[
\tilde{P}_p(A) = P(A \mid Y > y_p), \qquad A \in \mathcal{F}. 
\]
Then \eqref{eq:CoHM} can be rewritten as
\[
\mathrm{CoHM}_{p,q}(X|Y) = \inf_{x \in \mathbb{R}} \left\{ x + \frac{1}{1 - q} \left( E^{\tilde{P}_p}\left[(X - x)_+^k\right] \right)^{1/k} \right\} = \mathrm{HM}_q^{\tilde{P}_p}(X), 
\]
where the superscript $\tilde{P}_p$ indicates that expectations are taken under the conditional measure $\tilde{P}_p$. Equivalently, treating the conditional distribution $X \mid \{Y > y_p\}$ as a probability distribution in its own right, we have the fundamental identification
\begin{eqnarray}
\mathrm{CoHM}_{p,q}(X|Y) = \mathrm{HM}_q(X \mid Y > y_p). \label{eq:CoHM-as-HM}
\end{eqnarray}
The representation $\mathrm{CoHM}_{p,q}(X|Y) = \mathrm{HM}_q^{\tilde{P}_p}(X)$ in \eqref{eq:CoHM-as-HM} immediately implies that the $\mathrm{CoHM}$ risk measure inherits the coherence, law invariance, and Orlicz-quantile characterization of the unconditional $\mathrm{HM}$ risk measure. This observation is central to our analysis, as it allows the direct application of existing second-order results for the HM risk measure to the conditional setting, with the conditional measure $\tilde{P}_p$ playing the role of the reference probability.

\begin{remark}\label{rem:CoHM-asymptotic-regimes}
The $\mathrm{CoHM}$ risk measure defined in \eqref{eq:CoHM} is amenable to two distinct asymptotic regimes, each corresponding to a different source of extremity. The first regime, studied by \citet{LiuYi2025} and further refined in the present paper, considers $q \uparrow 1$ with $p$ fixed, capturing the extremity of the primary loss $X$ given a fixed extreme scenario of $Y$. The second regime, investigated by \citet{TXZ2026}, considers $p \uparrow 1$ with $q$ fixed, capturing the effect of an increasingly extreme conditioning event on the risk assessment of $X$. Our second-order expansions in Section 3.1 focus on the first regime, while Section \ref{Joint_asy} discusses the interplay between these two asymptotic perspectives and the iterated limit structure.
\end{remark}

Following the concept of risk contribution measures introduced by \citet{Adrian2016} in the context of $\mathrm{CoVaR}$ and further developed by \citet{Zhang2024} for distortion risk measures, we introduce two complementary risk contribution measures associated with the $\mathrm{CoHM}$ coherent risk measure. The first is the absolute risk contribution
\begin{equation}\label{eq:Delta}
\Delta \mathrm{CoHM}_{p,q}(X,Y) = \mathrm{CoHM}_{p,q}(X|Y) - \mathrm{HM}_q(X),
\end{equation}
which measures the incremental tail risk attributable to conditioning on the extreme event $\{Y > y_p\}$. To facilitate comparison across different scales of risk, we also introduce the relative risk contribution
\begin{equation}\label{eq:DeltaR}
\Delta \mathrm{CoHM}_{p,q}^{R}(X,Y) =
\begin{cases}
\dfrac{\Delta \mathrm{CoHM}_{p,q}(X,Y)}{\mathrm{HM}_q(X)}, & \text{if } \hat{x} = \infty, \\[0.8cm]
\dfrac{\Delta \mathrm{CoHM}_{p,q}(X,Y)}{\hat{x} - \mathrm{HM}_q(X)}, & \text{if } \hat{x} < \infty,
\end{cases} 
\end{equation}
provided that $\mathrm{HM}_q(X), \hat{x}-\mathrm{HM}_q(X) > 0$, respectively. This relative measure captures the proportional amplification of the risk position due to the extreme scenario of $Y$, and is particularly useful when the unconditional risk measure $\mathrm{HM}_q(X)$ varies substantially across different assets or market conditions.

The absolute contribution $\Delta \mathrm{CoHM}_{p,q}(X,Y)$ in \eqref{eq:Delta} serves as a direct analog of the $\Delta \mathrm{CoVaR}$ measure of \citet{Adrian2016}, while the relative contribution $\Delta \mathrm{CoHM}_{p,q}^{R}(X,Y)$ in \eqref{eq:DeltaR} provides a dimensionless counterpart similar to the systemic risk ratio studied by \citet{Zhang2024}. Both quantities will be analyzed asymptotically in subsection \ref{subsection First-order asymptotics}, where we derive explicit first-order expansions that reveal their sensitivity to the FGM dependence parameter $r$ and the tail index $\alpha$.

\subsection{Extension to the multivariate case: MCoHM}\label{multivariate CoHM}

Extending the bivariate case, consider a primary loss $X$ and $n$ reference risks $Y_1, \ldots, Y_n$ with marginal distribution functions $G_1, \ldots, G_n$, respectively. For a confidence level $q \in (0,1)$ and a vector $\mathbf{p} = (p_1, \ldots, p_n) \in (0,1)^n$, define $\mathbf{y}_{\mathbf{p}} = (G_1^{\leftarrow}(p_1), \ldots, G_n^{\leftarrow}(p_n))$. The multi-conditional higher moment ($\mathrm{MCoHM}$) risk measure is defined as
\begin{eqnarray}
\mathrm{MCoHM}_{\mathbf{p},q}(X|\mathbf{Y}) = \inf_{x \in \mathbb{R}} \left\{ x + \frac{1}{1 - q} \left( E\left[(X - x)_+^k \mid \mathbf{Y} > \mathbf{y}_{\mathbf{p}}\right] \right)^{1/k} \right\}, 
\end{eqnarray}
where inequality $\mathbf{Y} > \mathbf{y}_{\mathbf{p}}$ is understood componentwise. This measure conditions on the joint extreme event of several reference risks and naturally generalizes the bivariate $\mathrm{CoHM}$ measure. Following the reasoning in \citet{LiuYi2025} and the discussion in Subsection \ref{def_CoHM}, one can verify that $\mathrm{MCoHM}$ is coherent and shares analogous properties with its bivariate counterpart. The second-order asymptotic analysis of this multivariate extension is developed in Section \ref{MCOHM}.

\subsection{First-order asymptotics}\label{subsection First-order asymptotics}
In extreme value theory, the concept of max-domains of attraction plays a fundamental role in characterizing the tail behavior of a distribution. A distribution function \(F\) with an upper endpoint \(\hat{x}\le\infty\) is said to belong to the $\mathrm{MDA}$ of an extreme value distribution \(W\), denoted by \(F\in\mathrm{MDA}(W)\), if there exist some normalizing constants \(c_n>0\) and \(d_n\in\mathbb{R}\) such that \(F^n(c_nx+d_n)\) converges weakly to \(W(x)\) as \(n\to\infty\). The famous Fisher-Tippett theorem (see \citet{Fisher1928}) states that $W$ has to be one of the Fr\'echet, Weibull, and Gumbel distributions whose standard forms are given by, respectively,
\begin{align}
\mbox{Fr\'echet case:} &\quad\quad \quad \quad  \displaystyle \Phi_\alpha(x)=\exp\{-x^{-\alpha}\}, \quad\quad \quad \quad  x>0,~\alpha>0,\nonumber\\
\mbox{Weibull case:} &\quad\quad \quad \quad \displaystyle \Psi_\alpha(x)=\exp\{-|x|^\alpha\}, \quad\quad \quad \quad  x\leq 0,~\alpha>0,\nonumber\\
\mbox{Gumbel case:} &\quad\quad \quad \quad \displaystyle \Lambda(x)=\exp\{-e^{-x}\}, \quad\quad \quad \quad ~~~ x\in\mathbb{R}.\nonumber
\end{align}

For the Fr\'echet case, due to Theorem 3.3.7 of \citet{Embrechts1997}, a distribution \(F\) belongs to \(\mathrm{MDA}(\Phi_\alpha)\) if and only if its upper endpoint \(\hat{x}=\infty\) and its tail is regularly varying with the index \(-\alpha\), written as $\overline{F}\in\mathrm{RV}_{-\alpha}$, that is,
\begin{equation}\label{eq:4}
\lim_{x\to\infty}\frac{\overline{F}(xy)}{\overline{F}(x)}=y^{-\alpha},\qquad y>0.\nonumber
\end{equation}
 One can easily see that \(\overline{F}\in\mathrm{RV}_{-\alpha}\) implies that the tail of $F$ decays like a power law, which is typical for heavy‑tailed risks such as Pareto distributions.

For the Weibull case, it follows from Theorem 3.3.12 of \citet{Embrechts1997} that a distribution \(F\) belongs to \(\mathrm{MDA}(\Psi_\alpha)\) if and only if its upper endpoint \(\hat{x}\) is finite, and
\begin{equation}\label{eq:5}
\lim_{x\to \infty}\frac{\overline{F}\left(\hat{x}-\frac{1}{xy}\right)}{\overline{F}\left(\hat{x}-\frac{1}{x}\right)}=y^{-\alpha},\qquad y>0.\nonumber
\end{equation}
Clearly, the above relation is equivalent to \(\overline{F}\Big(\hat{x}-\frac{1}{\cdot}\Big)\in\mathrm{RV}_{-\alpha}\). Thus, \(\mathrm{MDA}(\Psi_\alpha)\) always describes distributions with a finite right endpoint, such as the Beta distribution.

For the Gumbel case, by Theorem 3.3.27 of \citet{Embrechts1997}, a distribution \(F\) with \(\hat{x}\le\infty\) belongs to \(\mathrm{MDA}(\Lambda)\) if and only if there exists a positive auxiliary function \(a(\cdot)\) on $(-\infty, \hat {x})$ such that
\begin{equation}\label{eq:6}
\lim_{x\uparrow\hat{x}}\frac{\overline{F}(x+ya(x))}{\overline{F}(x)}=e^{-y},\qquad y\in\mathbb{R}.   
\end{equation}
Commonly, the auxiliary function can be chosen as the mean excess function \(a(x)=\mathbb{E}[X-x\mid X>x]\) that satisfies \(a(x)=o(x)\) if \(\hat{x}=\infty\) and \(a(x)=o(\hat{x}-x)\) if \(\hat{x}<\infty\). 

For the $\mathrm{CoHM}$ risk measure defined in (\ref{eq:CoHM})  with a bivariate risk vector \((X,Y)\)  that follows a FGM joint distribution in \eqref{eq:9}.  The bivariate FGM distribution was
initially proposed by \citet{M1956} and subsequently investigated by \citet{F1960} and \citet{G1960}. Specifically, let \(F\) and \(G\) be the marginal distribution functions of  \(X\) and \(Y\), respectively. The joint distribution function of the FGM family is of the form
\begin{equation}\label{eq:9}   
\Pi(x, y) = F(x) G(y) \left[ 1 + r \left(1 - F(x)\right) \left(1 - G(y)\right) \right], \quad r \in [-1, 1],  
\end{equation}
where the parameter $r$ measures the dependence between \(X\) and \(Y\). This structure captures weak contagion since it does not alter the expected losses but affects the tail of the portfolio loss.

 Under this framework, we can derive the first-order asymptotics results from Theorems 3.1-3.3 of \citet{LiuYi2025}. Hereafter, denote by $\Gamma(\cdot)$ the Gamma function, that is, $\Gamma(x)=\int_0^\infty t^{x-1}e^{-t}dt$ for $x>0$. Denote by $B(\cdot,\cdot)$ the Beta function, that is, $B(a,b)=\int_0^1x^{a-1}(1-x)^{b-1}dx$ for $a,b>0$.  The following first-order relations hold as $q\uparrow1$:

\subsubsection*{Fr\'echet Case}
If $F \in \mathrm{MDA}(\Phi_\alpha)$ with $\alpha > k$, then

\[
\mathrm{CoHM}_{p,q}(X|Y) \sim c_1 m_1^{1/\alpha} F^{\leftarrow}(1 - (1 - q)^k), 
\]
where
\begin{equation}\label{eq:c_1} 
c_1 = \frac{\alpha(\alpha - k)^{k/\alpha - 1}}{k^{(k-1)/\alpha}} \left(B(\alpha - k, k)\right)^{1/\alpha}, \qquad 
\end{equation}
 and $m_1 = 1 + rG(y_p).$ The corresponding unconditional $\mathrm{HM}$ risk measure is obtained by setting $r = 0$ (i.e., $m_1 = 1$), yielding
\[
\mathrm{HM}_q(X) \sim c_1 F^{\leftarrow}(1 - (1 - q)^k). 
\]
Consequently, for $r\neq 0,$ the first-order asymptotic of the risk contribution $\Delta \mathrm{CoHM}_{p,q}(X,Y)$ defined in (\ref{eq:Delta}) is given by
\begin{eqnarray*}
\Delta \mathrm{CoHM}_{p,q}(X,Y) \sim c_1 \left(m_1^{1/\alpha} - 1\right) F^{\leftarrow}(1 - (1 - q)^k). 
\end{eqnarray*}
This reveals that the amplification effect due to conditioning on the extreme event of $Y$ is proportional to the quantile $F^{\leftarrow}(1 - (1 - q)^k)$ and scales with the factor $(m_1^{1/\alpha} - 1)$, which depends on the FGM dependence parameter $r$ and the reference risk level $p$ through $G(y_p)$.
Moreover, using \eqref{eq:DeltaR} and the first-order asymptotics above, the relative risk contribution admits the simple limiting form

\begin{equation}
\Delta \mathrm{CoHM}_{p,q}^{R}(X,Y) \sim m_1^{1/\alpha} - 1. \label{DeltaCoHM_F}
\end{equation}

This result is particularly striking: in the Fr\'echet case, the relative risk contribution converges asymptotically to a constant that depends only on the FGM parameter $r$ and the reference risk level $p$, independent of the confidence level $q$. This asymptotic constancy implies that for sufficiently extreme levels of primary loss $X$, the proportional amplification of risk due to the conditioning of the extreme event of $Y$ stabilizes at a value determined solely by the tail index $\alpha$ and the dependence structure. This provides a robust, scale-invariant benchmark for systemic risk assessment that does not require precise knowledge of the tail quantile $F^{\leftarrow}(1 - (1 - q)^k)$.
\subsubsection*{Weibull Case}
If $F \in \mathrm{MDA}(\Psi_\alpha)$ with $\alpha > 0$ and finite upper endpoint $\hat{x}$, then
\[
\hat{x} - \mathrm{CoHM}_{p,q}(X|Y) \sim c_2 m_1^{-1/\alpha} \left(\hat{x} - F^{\leftarrow}(1 - (1 - q)^k)\right),
\]
where
\begin{equation}\label{eq:c_2} 
c_2 = \frac{\alpha k^{(k-1)/\alpha}}{(\alpha + k)^{k/\alpha + 1}} \left(B(\alpha + 1, k)\right)^{-1/\alpha}.
\end{equation}
The unconditional $\mathrm{HM}$ risk measure satisfies
\[
\hat{x} - \mathrm{HM}_q(X) \sim c_2 \left(\hat{x} - F^{\leftarrow}(1 - (1 - q)^k)\right). 
\]
Therefore, for $r\neq 0,$ the first-order asymptotic of the risk contribution $\Delta \mathrm{CoHM}$ is
\begin{eqnarray*}
\Delta \mathrm{CoHM}_{p,q}(X,Y) \sim c_2 \left(1-m_1^{-1/\alpha} \right) \left(\hat{x}-F^{\leftarrow}(1 - (1 - q)^k)\right). 
\end{eqnarray*}
Since $F^{\leftarrow}(1 - (1 - q)^k) \uparrow \hat{x}$ as $q \uparrow 1$, the term $\left(F^{\leftarrow}(1 - (1 - q)^k) - \hat{x}\right)$ is positive and so is $\Delta \mathrm{CoHM}_{p,q}(X,Y)$ because $1-m_1^{-1/\alpha} >0$ when $r > 0$. This confirms that conditioning on the extreme event of $Y$ reduces the distance to the upper endpoint, thus increasing the risk measure. For the relative risk contribution, we obtain
\begin{equation}
\Delta \mathrm{CoHM}_{p,q}^{R}(X,Y) \sim 1- m_1^{-1/\alpha} . \label{DeltaR_W}
\end{equation}
Here, the amplification factor is determined by the same dependence structure but enters with a negative exponent, reflecting the bounded nature of Weibull-type distributions. When $r > 0$, the limiting value is positive, indicating that the risk contribution relative to the unconditional measure increases (i.e., the distance to the endpoint shrinks proportionally), consistent with the intuition that positive dependence amplifies tail risk. When $r < 0$, the relative contribution decreases, reflecting the risk-reducing effect of negative dependence.

\subsubsection*{Gumbel Case}
If $F \in \mathrm{MDA}(\Lambda)$ with upper endpoint $\hat{x} \leq \infty$, then for $\hat{x} = \infty$,
\[
\mathrm{CoHM}_{p,q}(X|Y) \sim F^{\leftarrow}\left(1 -(1 - q)^k \right), 
\]
and for $\hat{x} < \infty$,
\[
\hat{x} - \mathrm{CoHM}_{p,q}(X|Y) \sim \hat{x} - F^{\leftarrow}\left(1 - (1 - q)^k\right).
\]
The unconditional $\mathrm{HM}$ risk measure is obtained for $\hat{x} = \infty$,
\[
\mathrm{HM}_q(X) \sim F^{\leftarrow}\left(1 -(1 - q)^k \right). 
\]
\begin{remark}
Indeed, if $\hat{x} = \infty$, following Theorem 3.2 (i) of \citet{LiuYi2025}, then
\[
\mathrm{CoHM}_{p,q}(X|Y) \sim F^{\leftarrow}\left(1 - m_1^{-1} \frac{k^k}{\Gamma(k+1)}(1 - q)^k\right). 
\]
Observe that
\[
F^{\leftarrow}\left(1 - m_1^{-1} \frac{k^k}{\Gamma(k+1)}(1 - q)^k\right)=U\left(m_1 \frac{\Gamma(k+1)}{k^k}(1 - q)^{-k}\right).
\]
Since $F\in\mathrm{MDA}(\Lambda)$ implies that $U\in\mathrm{ERV}_0$ with an auxiliary function $a(\cdot)$, we have
\begin{align*}
F^{\leftarrow}\left(1 - m_1^{-1} \frac{k^k}{\Gamma(k+1)}(1 - q)^k\right)&=U\left((1-q)^{-k}\right)+a\left((1-q)^{-k}\right)\log\left(m_1\frac{\Gamma(k+1)}{k^k}\right)(1+o(1))\\
&=F^{\leftarrow}\left(1 -(1 - q)^k \right)(1+o(1)),
\end{align*}
where the last step holds due to $a(\cdot)=o(U(\cdot))$ when $\hat{x} = \infty$. Therefore, from this point of view, our result is in accordance with that of \citet{LiuYi2025}. For the case of $\hat{x}<\infty$, it follows from Theorem 3.2 (ii) of \citet{LiuYi2025} that
\[
\hat{x} - \mathrm{CoHM}_{p,q}(X|Y) \sim \hat{x} - F^{\leftarrow}\left(1 - m_1^{-1} \frac{k^k}{\Gamma(k+1)}(1 - q)^k\right).
\]
By an argument similar to above, one can conclude that our result is also in accordance with that of \citet{LiuYi2025}.  This leads to an interesting fact: first-order asymptotics of $\mathrm{CoHM}_{p,q}(X\mid Y)$ are insensitive to FGM dependence structure.

\end{remark}

Unlike the Fr\'echet and Weibull cases, the Gumbel case does not yield a simple constant limit for risk contribution measures. Lemma A.5 of \citet{LiuYi2025} shows that $F^{\leftarrow}(1-\cdot)$ is slowly varying at $0_+$. Consequently, we derive that
\begin{equation}
\lim_{q\uparrow1}\Delta \mathrm{CoHM}_{p,q}^{R}(X,Y) =\lim_{q\uparrow1}\frac{\mathrm{CoHM}_{p,q}(X|Y) - \mathrm{HM}_q(X)}{\mathrm{HM}_q(X)}=0. \label{DeltaR_G}
\end{equation}
Thus, in the Gumbel case, the relative risk contribution vanishes asymptotically, indicating that the conditioning effect is of a smaller order than the unconditional risk measure itself. For $\hat{x} < \infty$, a similar argument with the appropriate auxiliary function yields the same qualitative conclusion.

\begin{remark}
The first-order asymptotic formulas for the absolute and relative risk contributions in \eqref{DeltaCoHM_F}--\eqref{DeltaR_G} reveal a fundamental distinction among the three MDA classes. In the heavy-tailed Fr\'echet case and the bounded Weibull case, the relative risk contribution $\Delta \mathrm{CoHM}_{p,q}^{R}(X,Y)$ converges to a non-trivial constant that depends only on the FGM dependence parameter $r$, the reference risk level $p$, and the tail index $\alpha$. This constancy provides a scale-invariant benchmark that is robust to the exact choice of the confidence level $q$ for the extreme $q$. In contrast, for the light-tailed Gumbel case, the relative contribution vanishes asymptotically, reflecting the slower growth of the quantile function relative to the auxiliary function. These results have important practical implications: in heavy-tailed settings, the proportional amplification due to systemic risk conditioning can be reliably estimated from the tail index and dependence parameters alone, without requiring precise estimation of high quantiles.
These first-order asymptotics of $\Delta \mathrm{CoHM}$ and $\Delta \mathrm{CoHM}^{R}$ serve as a foundation for the refined second-order expansions developed in Section \ref{Second_asy}, which further incorporate tail curvature effects through the second-order regular variation parameter $\rho$ and provide more accurate approximations at finite confidence levels.
\end{remark}

\section{Second-order asymptotics}\label{Second_asy}

In order to further investigate the second-order asymptotic expansions for the $\mathrm{CoHM}$ coherent risk measure, we recall some concepts of first-order and second-order regular variations and extended regular variation for later use. The concept of first-order regular variation is used to characterize the three MDAs. This notion of first-order extended regular variation was first introduced by \cite{deHaan1996} to unify various limit results in EVT. The concepts of second-order regular variation are introduced to study the convergence speed of first-order regular variation. For some detailed properties, the reader is referred to \cite{deHaan2006} and references therein.

\begin{definition}\label{def:1}
A positive measurable function $f (\cdot)$ defined on \((0,\infty)\) is said to be first-order regularly varying at infinity with index \(\alpha \in \mathbb{R}\), denoted by \(f(\cdot) \in \mathrm{RV}_\alpha\), if for all \(x > 0\),
\[\lim_{t \to \infty} \frac{f(tx)}{f(t)} = x^\alpha.\]
Specifically, if \(\alpha = 0\), then \(f\) is said to be slowly varying. 
\end{definition}\label{def:2}
\begin{definition}
 A positive measurable function $f (\cdot)$ is said to belong to the class of first-order extended regularly varying functions with index \(\alpha \in \mathbb{R}\), denoted by \(f(\cdot) \in \mathrm{ERV}_\alpha\), if there exists a positive auxiliary function \(a(\cdot) > 0\) such that, for all \(x>0\),
\begin{equation}\label{eq:7}
\lim_{t \to \infty} \frac{f(tx) - f(t)}{a(t)} = \frac{x^\alpha-1}{\alpha}, 
\end{equation}
where the right-hand side above is interpreted as $\log x$ when $\alpha = 0$.
\end{definition}
\begin{remark}
 By Theorem 1.1.6 of \citet{deHaan2006}, it is proven that $F\in\mathrm{MDA}(W_\alpha)$ if and only if $U\in\mathrm{ERV}_{\alpha}$, where
 \[
W_{\alpha} = \begin{cases}
\displaystyle  \Phi_{1/\alpha}, & \alpha>0,\\[0.4em]
\displaystyle  \Psi_{-1/\alpha}, & \alpha<0,\\[0.4em]
\displaystyle \Lambda, & \alpha=0.
\end{cases}
\]
\end{remark}
\begin{definition}\label{def:3}
A positive measurable function $f (\cdot)$ is said to be second-order regularly varying with the first-order index $\alpha\in\mathbb{R}$ and second-order index $\rho\leq 0$, denoted by $f (\cdot)\in 2\mathrm{RV}_{\alpha,\rho}$, if there exists an auxiliary function $A(\cdot)$, which does not change sign eventually and converges to zero, such that, for all \(x > 0\),
\begin{equation}\label{eq:8}
\lim_{t \to \infty} \frac{ \frac{f(tx)}{f(t)} - x^\alpha }{A(t)} = x^\alpha \frac{x^\rho - 1}{\rho}=:H_{\alpha,\rho}(x),
\end{equation}
where, by convention, \((x^\rho - 1)/\rho \) is also understood as \(\log x\) when \(\rho = 0\) similarly as above. 
\end{definition}

\begin{remark}
If the functions $a(\cdot)$ and $A(\cdot)$ satisfy (\ref{eq:7}) and (\ref{eq:8}), respectively, then it is known that $a\in\mathrm{RV}_\alpha$ and $|A|\in\mathrm{RV}_\rho$; see also \cite{MaoHu2012} for details.  
\end{remark}
\begin{remark}
 It is worth mentioning that the convergences in Definitions \ref{def:1}-\ref{def:3} are all uniform
with respect to $x$ in any compact subset of $\mathbb{R}^+$; see, for example, Theorem B.1.4 and Theorem
B.2.9 of \citet{deHaan2006}.
\end{remark}




\subsection{A bivariate case}\label{subsec:3.1}
This subsection presents second-order asymptotic expansions for the $\mathrm{CoHM}$ coherent risk measure where the primary risk $X$ and the reference risk $Y$ are bivariate FGM dependent and the primary risk $X$ belongs to the Fr\'echet, Weibull or Gumbel MDA, respectively. Moreover,  for later notational convenience, in what follows, we use the following notation:
\begin{align*}
 m_{q}&=1+rG(y_p)\left(1-(1-q)^k\right), \\
\mathcal{J}_{\alpha, \rho, k}^{(1)} &= \frac{\alpha}{\rho} \left[ k^{\frac{\rho}{\alpha}}\left( \frac{\alpha - k}{k} \right)^{\frac{k\rho}{\alpha}}  \left(B\left(\alpha - k, k\right)\right)^{\frac{\rho}{\alpha} - 1} 
(B\left(\alpha - k - \rho, k\right)-1 \right], \\
\mathcal{J}_{\alpha,\rho,k}^{(2)} &= \frac{\alpha}{\rho} \left[ \alpha^{\frac{\rho}{\alpha}}\left( \frac{\alpha + k}{k} \right)^\frac{(k-1){\rho}}{\alpha}  \frac{(\alpha - \rho)(\alpha + k)}{\alpha(\alpha - \rho + k)} 
\left(B\left(\alpha, k\right)\right)^{\frac{\rho}{\alpha} - 1} 
B\left(\alpha - \rho, k\right)-1 \right],\\
\mathcal{M}^{(i)}&= \frac{m_1^\frac{\rho}{\alpha}-1}{\rho\alpha}+ \frac{m_1^{\frac{\beta}{\alpha}}}{\alpha^2}\mathcal{J}_{\alpha, \rho, k}^{(i)},~~~
\mathcal{N}^{(i)}= \left(\frac{m_1^\frac{\rho}{\alpha}-1}{\rho m_1}+ \frac{1}{\alpha m_1^2}\mathcal{J}_{\alpha, \rho, k}^{(i)}\right)rG(y_p),~~~i=1,2.
\end{align*}

\begin{theorem}\label{thm:main}
Consider the $\mathrm{CoHM}$ risk measure defined in (\ref{eq:CoHM}) for some $k\geq 1$ with a bivariate risk vector \((X,Y)\)  that satisfies the FGM dependence given by (\ref{eq:9}). Assume that \( F \in \mathrm{MDA}(\Phi_\alpha) \) for some $\alpha>k$ with \( P(X = \hat{x}) = 0 \). If, further, \( \overline{F} \in 2\mathrm{RV}_{-\alpha, \beta} \) for some $\beta\leq 0$ with an auxiliary function \( A(\cdot) \), then, as $q\uparrow 1$,
\begin{flalign*}
\mathrm{CoHM}_{p,q}(X|Y) &= c_1 m_{q}^{\frac{1}{\alpha}}
F^{\leftarrow}\left(1 - (1 - q)^k\right)\\
&\quad\times\left[ 1 + \left(\mathcal{M}^{(1)}+o(1)\right)A\left(F^{\leftarrow}\left(1 - (1-q)^k\right)\right) + \left( \mathcal{N}^{(1)}+o(1)\right)(1-q)^k\right],\nonumber
\end{flalign*}
where  $\rho=(-\alpha)\vee \beta$ and $c_1$ is defined in (\ref{eq:c_1}).
\end{theorem}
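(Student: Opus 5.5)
The argument reduces the statement to a known second-order expansion for the unconditional HM (equivalently HG) risk measure, applied to the conditional law of $X$ given $\{Y>y_p\}$. The starting point is the identification \eqref{eq:CoHM-as-HM}: $\mathrm{CoHM}_{p,q}(X|Y)=\mathrm{HM}_q(X\mid Y>y_p)$.

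\textbf{Step 1: the conditional tail.} Under the FGM model \eqref{eq:9}, write $\bar p=1-G(y_p)$. A direct computation gives
\[
P(X>x\mid Y>y_p)=\overline F(x)\bigl[1+rG(y_p)F(x)\bigr]=\overline F(x)\bigl[m_1-rG(y_p)\overline F(x)\bigr].
\]
Hence the conditional survival function $\overline H_p$ is $m_1\overline F$ times a correction factor $1-rG(y_p)\overline F(x)/m_1$. This factor is itself a second-order perturbation of size $\overline F(x)\in\mathrm{RV}_{-\alpha}$. Combining it with $\overline F\in2\mathrm{RV}_{-\alpha,\beta}$, I would show that $\overline H_p\in 2\mathrm{RV}_{-\alpha,\rho}$ with $\rho=(-\alpha)\vee\beta$. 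The auxiliary function is a combination of $A(x)$ and a multiple of $\overline F(x)$.

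I would nevertheless keep the two second-order sources separate, because the final formula displays them separately as $\mathcal M^{(1)}A(\cdot)$ and $\mathcal N^{(1)}(1-q)^k$. The idea is to track both remainders additively throughout, instead of merging them into one auxiliary function.

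\textbf{Step 2: expansion in terms of $H_p$.} Using Remark \ref{rem:HM-HG}, the target equals $\mathrm{HG}_{1-(1-q)^k}$ of the conditional law. I would then apply (or re-derive, following Tang and Yang / Mao–Hu type arguments) the second-order expansion of the HG/HM measure for a $2\mathrm{RV}$ tail. The derivation goes through the first-order condition \eqref{ratio_solution}:

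\begin{itemize}
\item Substitute $x^*=tH_p$-quantile scaling.
\item Write $E[(X-x)_+^{j}\mid Y>y_p]=\int_x^\infty j(s-x)^{j-1}\overline H_p(s)\,ds$.
\item Expand each integral to second order using the uniform convergence of $2\mathrm{RV}$ together with Potter-type bounds; these are needed to justify dominated convergence on $(1,\infty)$, where $\alpha>k$ ensures integrability.
\item Solve for the ratio $x^*/H_p^{\leftarrow}(1-(1-q)^k)$ to second order.
\item Plug back into $x^*+(1-q)^{-1}\|\cdot\|_{L^k}$.
\end{itemize}

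This produces $c_1 H_p^{\leftarrow}(1-(1-q)^k)[1+(\mathcal J^{(1)}/\alpha^2+o(1))\tilde A]$, with the Beta-function constants making up $\mathcal J^{(1)}$.

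\textbf{Step 3: back to $F^{\leftarrow}$.} Next I would convert $H_p^{\leftarrow}(1-(1-q)^k)$ to $F^{\leftarrow}(1-(1-q)^k)$. Solving $\overline F(x)[m_1-rG(y_p)\overline F(x)]=(1-q)^k$ means $\overline F(x)\approx (1-q)^k/m_q$ up to higher order; this is where $m_q$ comes from. Inverting with the second-order expansion of $U$, namely $U\in2\mathrm{RV}_{1/\alpha,\beta/\alpha}$ with auxiliary function $A(U(\cdot))/\alpha^2$, yields the factor $m_q^{1/\alpha}$ together with the corrections $\frac{m_1^{\rho/\alpha}-1}{\rho\alpha}A$ and the $rG(y_p)(1-q)^k$ terms.

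Collecting all terms and using $A(tm)/A(t)\to m^{\beta}$ to realign arguments gives $\mathcal M^{(1)}$ and $\mathcal N^{(1)}$. When $\rho=-\alpha$ dominates, the $\overline F$-term has the same order as $(1-q)^k$, which explains why $\rho$ appears in the formula.

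\textbf{Main obstacle.} I expect the hardest part to be Step 2 done with the two distinct remainders. Specifically, I need to verify that the conditional integrals admit an expansion of the form main term $+\,c_A A(t)+c_F\overline F(t)+o(|A(t)|+\overline F(t))$ uniformly, and then to propagate both terms through the implicit equation for $x^*$. I would handle this with a perturbation lemma: if $\phi_t(\lambda)\to\phi(\lambda)$ with a second-order remainder, then the root $\lambda_t$ satisfies $\lambda_t=\lambda_0-\delta_t/\phi'(\lambda_0)+o(\delta_t)$. I would apply it first to the $A$-part and then to the $\overline F$-part. The remaining care is bookkeeping to check that the resulting constants match $\mathcal J^{(1)}$, $\mathcal M^{(1)}$ and $\mathcal N^{(1)}$ as displayed.
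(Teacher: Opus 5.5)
This is essentially the paper's proof. Lemma~\ref{lem:fourth} gives the conditional $2\mathrm{RV}_{-\alpha,\rho}$ tail with auxiliary function $A+\frac{\alpha rG(y_p)}{1+rG(y_p)}\overline F$; the HG/HM identity and the Mao--Hu expansion of Lemma~\ref{lem:fifth} are applied to the conditional law; and the result is converted to $F^{\leftarrow}(1-(1-q)^k)$ with the auxiliary function realigned. Because that expansion is linear in the auxiliary function, you need not carry the two remainders separately or re-derive it.

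There is one slip in Step 3. The root of $\overline F(x)[m_1-rG(y_p)\overline F(x)]=(1-q)^k$ is $\overline F(x)=\frac{(1-q)^k}{m_1}\bigl(1+\frac{rG(y_p)}{m_1^2}(1-q)^k+\cdots\bigr)$, not $(1-q)^k/m_q$ up to higher order. The difference is of the same order as the $\mathcal N^{(1)}$ term. Keeping it exactly reproduces, when $\rho=-\alpha$ (the only case where it matters), $m_q^{1/\alpha}$ together with the $\frac{m_1^{\rho/\alpha}-1}{\rho\alpha}\cdot\frac{\alpha rG(y_p)}{m_1}(1-q)^k$ part of $\mathcal N^{(1)}$. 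In the paper, $m_q$ instead arises as $\overline F_{X|Y>y_p}(x_q)/\overline F(x_q)=1+rG(y_p)F(x_q)$ at the unconditional quantile $x_q$.
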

\begin{remark}\label{rem3.4}
From Theorem \ref{thm:main}, the absolute risk contribution satisfies
\begin{flalign*}
\Delta \mathrm{CoHM}_{p,q}(X,Y) &=  c_1 m_{q}^{\frac{1}{\alpha}}
F^{\leftarrow}\left(1 - (1 - q)^k\right)\times\Bigr[ (1-m_{q}^{-\frac{1}{\alpha}})+ \left(\mathcal{N}^{(1)}+o(1)\right)(1-q)^k\\
& \quad+   \left( \mathcal{M}^{(1)}-\frac{m_{q}^{-\frac{1}{\alpha}}\mathcal{J}_{\alpha, \rho, k}^{(1)}}{\alpha^2}+o(1)\right)A\left(F^{\leftarrow}\left(1 - (1-q)^k\right)\right) \Bigr],
\end{flalign*}
and relative risk contribution is given by 
\begin{flalign*}
\Delta \mathrm{CoHM}^R_{p,q}(X,Y) &= m_{q}^{\frac{1}{\alpha}}
\times\Bigr[ (1-m_{q}^{-\frac{1}{\alpha}})+ \left(\mathcal{N}^{(1)}+o(1)\right)(1-q)^k\\
& \quad+   \left( \mathcal{M}^{(1)}-\frac{\mathcal{J}_{\alpha, \rho, k}^{(1)}}{\alpha^2}+o(1)\right)A\left(F^{\leftarrow}\left(1 - (1-q)^k\right)\right)\Bigr].
\end{flalign*}
The first term $c_1m_q^{1/\alpha}F^{\leftarrow}(1 - (1-q)^k)(1-m_{q}^{-1/{\alpha}})$ represents the first-order contribution, which is proportional to the quantile and scales with the factor $m_q^{1/\alpha} - 1$. This result is consistent with the results under the first-order condition. The second-order correction consists of two components: the term $(\mathcal{M}^{(1)}-m_{q}^{-{1}/{\alpha}}\mathcal{J}_{\alpha, \rho, k}^{(1)}/{\alpha^2} )A(F^{\leftarrow}(1 - (1-q)^k))$ captures the curvature of the tail through the auxiliary function $A(\cdot)$, while the term $\mathcal{N}^{(1)}(1-q)^k$ captures the interaction between the FGM dependence structure and the finite conditioning probability. For the relative risk contribution, dividing by $\mathrm{HM}_q(X)$ yields $\Delta \mathrm{CoHM}^R_{p,q}(X,Y)$ is mainly controlled by the first-order 
contribution --- $m_q^{1/\alpha} - 1$, while under the second-order conditions it is also influenced by $(\mathcal{M}^{(1)}-\mathcal{J}_{\alpha, \rho, k}^{(1)}/{\alpha^2} )A(F^{\leftarrow}(1 - (1-q)^k))$ and  $\mathcal{N}^{(1)}(1-q)^k$. Compared with the results in \ref{subsection First-order asymptotics},
the relative risk contribution depends on the FGM parameter $r$, the reference risk level $p$, and the confidence level $q$, which captures a more comprehensive and precise structure of the relative risk contribution under the second-order conditions. 
\end{remark}

\begin{theorem}\label{thm:second}
Consider the $\mathrm{CoHM}$ risk measure defined in (\ref{eq:CoHM}) for some $k\geq 1$ with a bivariate risk vector \((X,Y)\)  that satisfies the FGM dependence given by (\ref{eq:9}). Assume that \( F \in \mathrm{MDA}(\Psi_\alpha) \) for some $\alpha>0$ with \(0<\hat{x}<\infty \) and  \( P(X = \hat{x}) = 0 \). If, further, \( \overline{F}\left( \hat x-\frac{1}{ \cdot }\right) \in 2\mathrm{RV}_{
-\alpha, \beta} \) for some $\beta\leq 0$ with an auxiliary function \( A(\cdot) \), then, as $q\uparrow 1$,
\begin{flalign*}
\hat{x}-\mathrm{CoHM}_{p,q}(X|Y) &= c_2m_{q}^{-\frac{1}{\alpha}} 
\left(\hat{x}-F^{\leftarrow}\left(1 - (1 - q)^k\right)\right)\Bigl[ 1 -   \left(\mathcal{M}^{(2)}+o(1)\right)\\
& \quad\times A\left(\left(\hat{x}-F^{\leftarrow}\left(1 - (1-q)^k\right)\right)^{-1}\right) -  \left( \mathcal{N}^{(2)}+o(1)\right)(1-q)^k\Bigr],
\end{flalign*}
where  $\rho=(-\alpha)\vee \beta$ and $c_2$ is defined in (\ref{eq:c_2}).
\end{theorem}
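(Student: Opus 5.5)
The plan is to reduce to the unconditional $\mathrm{HM}$ measure through the identification \eqref{eq:CoHM-as-HM}. Under the FGM law \eqref{eq:9}, the conditional survival function of $X$ given $\{Y>y_p\}$ is
\[
\overline{H}_p(x)=\frac{P(X>x,Y>y_p)}{\overline{G}(y_p)}=\overline{F}(x)\bigl[1+rG(y_p)F(x)\bigr]=\overline{F}(x)\bigl[m_1-rG(y_p)\overline{F}(x)\bigr].
\]
So $\mathrm{CoHM}_{p,q}(X|Y)=\mathrm{HM}_q(\tilde X)$, where $\tilde X$ has survival function $\overline{H}_p$. Moreover $\overline H_p$ has the same upper endpoint $\hat x$ and no atom there.

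The first step is to show that $\overline{H}_p(\hat x-1/\cdot)$ is itself second-order regularly varying. Write $\overline{F}_*(t)=\overline{F}(\hat x-1/t)\in 2\mathrm{RV}_{-\alpha,\beta}$ with auxiliary function $A$. Then
\[
\frac{\overline{H}_p(\hat x-1/(tx))}{\overline{H}_p(\hat x-1/t)}=\frac{\overline F_*(tx)}{\overline F_*(t)}\cdot\frac{m_1-rG(y_p)\overline F_*(tx)}{m_1-rG(y_p)\overline F_*(t)}.
\]
The second factor equals $1+\frac{rG(y_p)}{m_1}\overline F_*(t)(1-x^{-\alpha})(1+o(1))$. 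Hence $\overline H_p(\hat x-1/\cdot)\in 2\mathrm{RV}_{-\alpha,\rho}$ with $\rho=(-\alpha)\vee\beta$, and its auxiliary function is a combination of $A(t)$ and $\alpha\, rG(y_p)\overline F_*(t)/m_1$, which lies in $\mathrm{RV}_{-\alpha}$. I will keep the two pieces separate rather than merge them, since the theorem reports them separately as an $A$-term and a $(1-q)^k$-term.

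The second step invokes the second-order expansion of the unconditional HM (equivalently HG, by Remark \ref{rem:HM-HG}) measure for the Weibull MDA, as in Tang–Yang / Mao–Hu type results. The Lemma in Appendix \ref{Appendix} presumably furnishes this. It takes the form
\[
\hat x-\mathrm{HM}_q(\tilde X)=c_2\bigl(\hat x-H_p^{\leftarrow}(1-(1-q)^k)\bigr)\Bigl[1-\bigl(\tfrac{\mathcal J^{(2)}_{\alpha,\rho,k}}{\alpha^2}+o(1)\bigr)\tilde A\bigl((\hat x-H_p^{\leftarrow}(1-(1-q)^k))^{-1}\bigr)\Bigr].
\]
To obtain it, I would solve the first-order condition \eqref{ratio_solution} under $\overline H_p$. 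The substitution $x^*=\hat x-1/(t s)$ expresses both moments $E(\tilde X-x)_+^{k-1}$ and $E(\tilde X-x)_+^{k}$ as Beta integrals plus second-order corrections. These yield $B(\alpha,k)$, $B(\alpha+1,k)$ and $B(\alpha-\rho,k)$, and inverting gives the constant $\mathcal J^{(2)}$.

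The third step converts the conditional quantile into $F^{\leftarrow}(1-(1-q)^k)$. Solving $\overline F(x)[m_1-rG(y_p)\overline F(x)]=(1-q)^k$ gives $\overline F(x)=(1-q)^k/m_q\,(1+O((1-q)^k))$. This is where $m_q=1+rG(y_p)(1-(1-q)^k)$ enters, absorbing part of the correction. The $2\mathrm{RV}$ property of the tail quantile $t\mapsto \hat x-U(t)\in2\mathrm{RV}_{-1/\alpha,\rho/\alpha}$ then gives
\[
\hat x-H_p^{\leftarrow}(1-(1-q)^k)=m_q^{-1/\alpha}\bigl(\hat x-F^{\leftarrow}(1-(1-q)^k)\bigr)\Bigl[1-\tfrac{m_1^{\rho/\alpha}-1}{\rho\alpha}A(\cdot)(1+o(1))-O((1-q)^k)\Bigr].
\]
Since $|A|\in\mathrm{RV}_\beta$, replacing $\tilde A$ evaluated at the conditional point by $A$ evaluated at $(\hat x-F^{\leftarrow}(1-(1-q)^k))^{-1}$ costs a factor $m_1^{\beta/\alpha}$. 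Multiplying the three expansions and collecting terms proportional to $A$ and to $(1-q)^k$ should produce exactly $\mathcal M^{(2)}$ and $\mathcal N^{(2)}$.

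I expect the main obstacle to be the second step: the precise second-order expansion of the Orlicz quantile and the moment ratio, including the borderline case $\rho=-\alpha$, where the FGM perturbation $\overline F_*$ is of the same order as $A$. I would first try to cite the HG result and only verify the bookkeeping of the two auxiliary functions. I would handle the $o(1)$ remainders by uniform convergence of second-order regular variation (Potter-type bounds) and dominated convergence on the Beta integrals.
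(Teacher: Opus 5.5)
Your outline is the paper's proof in every essential ingredient:
\begin{itemize}
\item the FGM identity $\overline H_p=\overline F\,(1+rG(y_p)F)$;
\item second-order regular variation of $\overline H_p(\hat x-1/\cdot)$ with index $\rho=(-\alpha)\vee\beta$ and auxiliary function $A+\frac{\alpha rG(y_p)}{m_1}\overline F(\hat x-1/\cdot)$ (Lemma~\ref{lem:WEIBULL});
\item the Weibull case of the Mao--Hu $\mathrm{HG}$ expansion, transported via $\mathrm{HM}_q=\mathrm{HG}_{1-(1-q)^k}$ (Lemma~\ref{lem:fifth}(ii), which gives exactly your displayed form with $\mathcal J^{(2)}_{\alpha,\rho,k}$);
\item a quantile conversion, and the $m_1^{\beta/\alpha}$ transfer of $A$.
\end{itemize}
The only methodological difference is in the conversion step. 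The paper compares $\hat x-U_{X|Y>y_p}$ at $1/\overline F(x_q)$ and at $1/\overline F_{X|Y>y_p}(x_q)$, using the conditional $2\mathrm{RV}$ property. You instead solve the quadratic for $\overline F$ at the conditional quantile and then use the unconditional property $\hat x-U\in 2\mathrm{RV}_{-1/\alpha,\beta/\alpha}$. That is slightly more transparent. It naturally puts $(m_1^{\beta/\alpha}-1)/(\beta\alpha)$ in front of $A$, which matches the theorem's coefficient whenever the $A$-term is not negligible, i.e.\ when $\beta\ge-\alpha$ and so $\rho=\beta$.

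As written, however, step 3 does not deliver the statement, because you leave the $(1-q)^k$ part of the conversion as an unspecified $-O((1-q)^k)$. When $\beta\le-\alpha$ (so $\rho=-\alpha$), $A\bigl((\hat x-F^{\leftarrow}(1-(1-q)^k))^{-1}\bigr)$ is regularly varying in $(1-q)^{-k}$ with index $\beta/\alpha\le-1$. Hence the $(1-q)^k$ term is not negligible, and for $\beta<-\alpha$ its coefficient $\mathcal N^{(2)}$ is the entire second-order content of the theorem. An unspecified $O$ cannot produce it.

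You need to carry the root one term further. With $\epsilon=(1-q)^k$ and $g=rG(y_p)$, the equation $m_1u-gu^2=\epsilon$ gives
\[
u=\frac{\epsilon}{m_q}\Bigl(1-\frac{g^2}{m_1^2}\,\epsilon+O(\epsilon^2)\Bigr),
\]
and therefore
\[
\frac{\hat x-H_p^{\leftarrow}(1-\epsilon)}{\hat x-F^{\leftarrow}(1-\epsilon)}=m_q^{-1/\alpha}\Bigl[1-\frac{g^2}{\alpha m_1^2}\,\epsilon-\frac{m_1^{\beta/\alpha}-1}{\beta\alpha}\,A\bigl((\hat x-F^{\leftarrow}(1-\epsilon))^{-1}\bigr)+o\bigl(\epsilon+|A(\cdot)|\bigr)\Bigr].
\]
Here $\frac{g^2}{\alpha m_1^2}=\frac{m_1^{\rho/\alpha}-1}{\rho m_1}\,g$ at $\rho=-\alpha$.

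In the $\mathcal J$-term you must also transfer the FGM piece of the auxiliary function, not only $A$. The function $\overline F$ evaluated at the conditional quantile equals $u\sim\epsilon/m_1$, which supplies the extra factor $m_1^{-1}$ and contributes $\frac{\mathcal J^{(2)}_{\alpha,\rho,k}}{\alpha m_1^2}\,g\,\epsilon$. The two contributions add up to exactly $\mathcal N^{(2)}(1-q)^k$.

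On keeping the two pieces separate: apply Lemma~\ref{lem:fifth}(ii) with the merged auxiliary function and split it afterwards. The single constant $\mathcal J^{(2)}_{\alpha,\rho,k}$ then multiplies both pieces. This is harmless, because when $\beta\neq-\alpha$ the piece whose own second-order index differs from $\rho$ is of smaller order.

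If $F$ may be discontinuous near $\hat x$, replace the exact equation for $u$ by the $2\mathrm{RV}$-based relation for $\overline F(F^{\leftarrow}(t))$, as the paper does.
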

\begin{remark}
From Theorem \ref{thm:second}, the absolute risk contribution satisfies
\begin{flalign*}
\Delta \mathrm{CoHM}_{p,q}(X,Y) &=  c_2 m_{q}^{-\frac{1}{\alpha}}
\left(\hat{x}-F^{\leftarrow}\left(1 - (1 - q)^k\right)\right)\times\Bigr[ (m_{q}^{\frac{1}{\alpha}}-1)+ \left(\mathcal{N}^{(2)}(1-q)^k+o(1)\right)\\
& \quad+   \left( \mathcal{M}^{(2)}-\frac{m_{q}^{\frac{1}{\alpha}}\mathcal{J}_{\alpha, \rho, k}^{(2)}}{\alpha^2}+o(1)\right)A\left(\left(\hat{x}-F^{\leftarrow}\left(1 - (1-q)^k\right)\right)\right) \Bigr],
\end{flalign*}
and relative risk contribution is given by 
\begin{flalign*}
\Delta \mathrm{CoHM}^R_{p,q}(X,Y) &=  m_{q}^{-\frac{1}{\alpha}}
\times\Bigr[ (m_{q}^{\frac{1}{\alpha}}-1)+ \left(\mathcal{N}^{(2)}(1-q)^k+o(1)\right)\\
& \quad+   \left( \mathcal{M}^{(2)}-\frac{\mathcal{J}_{\alpha, \rho, k}^{(2)}}{\alpha^2}+o(1)\right)A\left(\left(\hat{x}-F^{\leftarrow}\left(1 - (1-q)^k\right)\right)\right)\Bigr].
\end{flalign*}
The first term $c_2 m_{q}^{-\frac{1}{\alpha}}
\left(\hat{x}-F^{\leftarrow}\left(1 - (1 - q)^k\right)\right)(m_{q}^{\frac{1}{\alpha}}-1)$ represents the first-order contribution, which is proportional to the quantile and scales with the factor $1-m_q^{-1/\alpha}$. This result is consistent with the results under a first-order condition. The second-order correction consists of two components: the term $(\mathcal{M}^{(2)}-m_{q}^{{1}/{\alpha}}\mathcal{J}_{\alpha, \rho, k}^{(2)}/{\alpha^2} )A\left(\left(\hat{x}-F^{\leftarrow}\left(1 - (1-q)^k\right)\right)\right)$ captures the curvature of the tail through the auxiliary function $A(\cdot)$, while the term $\mathcal{N}^{(2)}(1-q)^k$ captures the interaction between the FGM dependence structure and the finite conditioning probability. 
For the relative risk contribution, we obtain  $\Delta \mathrm{CoHM}^R_{p,q}(X,Y)$ is mainly controlled by the first-order 
contribution --- $1-m_q^{-1/\alpha}$, while under the second-order conditions it is also influenced by $(\mathcal{M}^{(2)}-\mathcal{J}_{\alpha, \rho, k}^{(2)}/{\alpha^2} )A\left(\left(\hat{x}-F^{\leftarrow}\left(1 - (1-q)^k\right)\right)\right)$ and  $\mathcal{N}^{(2)}(1-q)^k$. Compared with the results in \ref{subsection First-order asymptotics},
the relative risk contribution depends on the FGM parameter $r$, the reference risk level $p$, and the confidence level $q$, which captures a more comprehensive and precise structure of the relative risk contribution under the second-order conditions. 
\end{remark}

\begin{theorem}\label{thm:third}
Consider the $\mathrm{CoHM}$ risk measure defined in (\ref{eq:CoHM}) for some $k\geq 1$ with a bivariate risk vector \((X,Y)\)  that satisfies the FGM dependence given by (\ref{eq:9}). If \( F \in \mathrm{MDA}(\Lambda)\) with  \( 0<\hat{x}\leq\infty \)  and  \( P(X = \hat{x}) = 0 \), then, as \( q \uparrow 1 \),
\begin{equation*}
\mathrm{CoHM}_{p,q}(X|Y) = F^{\leftarrow}\left(1 - (1 - q)^k\right)
+a\left( (1 - q)^{-k}\right)\left(\lambda_k+\log m_{1}\right)(1+o(1))
\end{equation*}
for $\hat{x}=\infty$, and
\begin{equation*}
\hat{x}-\mathrm{CoHM}_{p,q}(X|Y) = \hat{x}-F^{\leftarrow}\left(1 - (1 - q)^k\right)
-a\left( (1 - q)^{-k}\right)\left(\lambda_k+\log m_{1}\right)(1+o(1))
\end{equation*}
for $\hat{x}<\infty$, where $a(t)$ is the auxiliary function of $U$ due to $U\in\mathrm{ERV}_0$ and $\lambda_k=\log\Gamma(k+1)-k\log k+k$.
\end{theorem}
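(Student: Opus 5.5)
The argument rests on the identification \eqref{eq:CoHM-as-HM}: $\mathrm{CoHM}_{p,q}(X|Y)=\mathrm{HM}_q(X\mid Y>y_p)$. We first compute the conditional survival function of $X$ under $\tilde P_p$. From \eqref{eq:9},
\[
P(X>x, Y>y_p)=\overline F(x)\overline G(y_p)\left[1+rF(x)G(y_p)\right],
\]
so
\[
\overline{H}_p(x):=P(X>x\mid Y>y_p)=\overline F(x)\left(m_1-rG(y_p)\overline F(x)\right).
\]
Thus $\overline H_p(x)=m_1\overline F(x)(1+o(1))$, with an explicit correction term of order $\overline F(x)$.

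Writing $U_p$ for the tail quantile function of $H_p$, and setting $t=(1-q)^{-k}$, we solve $\overline H_p(x)=1/s$. This gives
\[
U_p(s)=U\!\left(m_1 s\,(1+O(1/s))\right).
\]
Expanding to the next order, $U_p(s)=U(m_1s)+a(m_1s)\,rG(y_p)/(m_1 s)\,(1+o(1))$, using $U(s(1+\varepsilon))-U(s)\approx a(s)\varepsilon$. Since $U\in\mathrm{ERV}_0$ with $a\in\mathrm{RV}_0$, and $a(s)/s\to 0$ faster than any power is irrelevant here, the term of order $a(s)/s$ is $o(a(s))$. Hence $H_p\in\mathrm{MDA}(\Lambda)$, its auxiliary function is asymptotically $a$, and
\[
U_p(s)=U(s)+a(s)\left(\log m_1+o(1)\right),
\]
by uniform convergence in the ERV relation.

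Next we invoke the equivalence in Remark \ref{rem:HM-HG}, $\mathrm{HM}_q=\mathrm{HG}_{1-(1-q)^k}$. The level of the HG measure is then $1-1/t$. We apply to $H_p$ the known Gumbel-case expansion of the HG risk measure with power Young function (Tang and Yang type, or equivalently Theorem 3.2 of \citet{LiuYi2025}). That expansion states that the Orlicz quantile and the HG measure equal $U_p(t)+a(t)(\lambda_k+o(1))$. Because this is not stated earlier, I would rederive it as follows.

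First, for $x=U_p(t)+a(t)\,y$, the ERV/Gumbel relation \eqref{eq:6} together with dominated convergence (Potter-type bounds for $\overline H_p(x+za)/\overline H_p(x)$) gives
\[
E[(X-x)_+^j\mid Y>y_p]\sim a(t)^j\,\Gamma(j+1)\,e^{-y}/t .
\]
Plugging this into the first-order condition
\[
\frac{\Gamma(k)^k}{\Gamma(k+1)^{k-1}}\,e^{-y}=1
\]
yields $y^*=k\log\Gamma(k)-(k-1)\log\Gamma(k+1)=\log\Gamma(k+1)-k\log k$, after simplification via $\Gamma(k+1)=k\Gamma(k)$. This value must be checked against the stated constant. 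The additional term is
\[
\frac{1}{1-q}\bigl(E[\cdot]\bigr)^{1/k}=a(t)\,(\Gamma(k+1)e^{-y^*})^{1/k}\,(1+o(1))=a(t)\,k\,(1+o(1)).
\]
Summing gives $\lambda_k=\log\Gamma(k+1)-k\log k+k$, which matches. The convergence $y_q^*\to y^*$ follows from uniqueness and monotonicity of the ratio in the first-order condition, since the limiting ratio is strictly monotone in $y$.

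Combining the two expansions gives
\[
\mathrm{CoHM}_{p,q}=U(t)+a(t)\bigl(\lambda_k+\log m_1\bigr)(1+o(1)),
\]
with $U(t)=F^{\leftarrow}(1-(1-q)^k)$. The case $\hat x<\infty$ is identical after rewriting in terms of $\hat x-\cdot$, where $a(t)=o(\hat x-U(t))$. The main obstacle is justifying the moment asymptotics uniformly in $y$ on compacts, with domination of the tail integrals $\int_0^\infty z^{j-1}\overline H_p(x+a z)/\overline H_p(x)\,dz$. For $\hat x=\infty$ I would use the representation theorem for Gumbel tails to bound this ratio by $C(1+\varepsilon)^{j}e^{-(1-\varepsilon)z}$. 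For finite $\hat x$ I would truncate at $(\hat x-x)/a(x)\to\infty$.
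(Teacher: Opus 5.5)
Your argument is correct in substance and shares the paper's skeleton. You regard $\mathrm{CoHM}_{p,q}(X|Y)$ as $\mathrm{HG}_{1-(1-q)^k}$ of the conditional law $H_p$, apply the Gumbel-case expansion $U_p(t)+a(t)(\lambda_k+o(1))$ (which the paper simply cites as Lemma~\ref{lem:fifth}(iii), after Mao and Hu), and then shift the conditional quantile by $a(t)\log m_1$.

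The real difference is in how you obtain the shift. The paper first proves $F_{X|Y>y_p}\in\mathrm{MDA}(\Lambda)$ (Lemma~\ref{lem:eighth}). It then runs a chain through $x_q=U(t)$, using $\overline F(x_q)\sim(1-q)^k$ (Lemma~\ref{lem:first+0}), $U(t\pm0)=U(t)+o(a(t))$ (Lemma~\ref{lem:first+1}), $\overline F_{X|Y>y_p}(x_q)/\overline F(x_q)\to m_1$, and the ERV relation of the conditional quantile function.

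You observe instead that $\overline H_p=\psi(\overline F)$, where $\psi(u)=u(m_1-rG(y_p)u)$ is a continuous strictly increasing bijection of $[0,1]$. Hence, exactly, $U_p(s)=U\bigl(1/\psi^{-1}(1/s)\bigr)$ with $1/\psi^{-1}(1/s)=m_1s(1+O(1/s))$. Locally uniform ERV convergence of the unconditional $U$ then gives $U_p(s)=U(s)+a(s)(\log m_1+o(1))$ in one line. As a by-product, $U_p\in\mathrm{ERV}_0$ with the same auxiliary function $a$.

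Your route treats generalized inverses exactly and makes Lemmas~\ref{lem:first+1} and~\ref{lem:eighth} unnecessary. It also never mixes the auxiliary function of $\overline F$ with that of $U$, which the paper does somewhat loosely. The paper's chain, in return, mirrors its Fr\'echet and Weibull proofs. Your rederivation of the cited lemma is also informative. It shows the Orlicz quantile sits at $U_p(t)+a(t)(\log\Gamma(k+1)-k\log k+o(1))$, not at $\lambda_k$ as one of your sentences says, and that the norm term adds $k\,a(t)$.

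Three things need repair.

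First, drop the ``next order'' formula $U_p(s)=U(m_1s)+a(m_1s)rG(y_p)/(m_1s)(1+o(1))$. ERV carries no rate, and this fails, for instance, when $U$ has jumps of size $o(a)$. All you need is $U(m_1s(1+O(1/s)))=U(m_1s)+o(a(s))$, which is exactly the uniform convergence you invoke next.

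Second, the domination bound $Ce^{-(1-\varepsilon)z}$ is false in general. Take $\overline F(x)=e^{-\sqrt x}$, whose auxiliary function is $2\sqrt x$, and $z=c\sqrt x$ with $c$ large. The representation theorem only gives $a(u)\le a(x)+\varepsilon(u-x)$ for $u\ge x$, hence $\overline F(x+za(x))/\overline F(x)\le C(1+\varepsilon z)^{-(1-\varepsilon)/\varepsilon}$. This polynomial bound, of arbitrarily high degree, is what makes dominated convergence work against $z^{j-1}$, $j\le k$, for $\hat x=\infty$ and $\hat x<\infty$ alike. Alternatively, cite Lemma~\ref{lem:fifth}(iii) as the paper does. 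Theorem 3.2 of Liu and Yi is not an equivalent source: it is only an asymptotic equivalence and does not deliver the $a(t)$-order term.

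Third, your last line, like the theorem itself, rewrites $a(t)(\lambda_k+\log m_1+o(1))$ as $a(t)(\lambda_k+\log m_1)(1+o(1))$. That step requires $\lambda_k+\log m_1\neq0$. Since $\lambda_k\ge1$, this can only vanish when $1+rG(y_p)\le e^{-1}$, but the hypotheses do not rule that out.
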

\begin{remark}
From Theorem \ref{thm:third}, the absolute risk contribution satisfies
\begin{flalign*}
\Delta \mathrm{CoHM}_{p,q}(X,Y) &=a\left( (1 - q)^{-k}\right)\log m_{1}(1+o(1)),
\end{flalign*}
and relative risk contribution is given by 
\begin{flalign*}
\Delta \mathrm{CoHM}^R_{p,q}(X,Y) &= \frac{a\left( (1 - q)^{-k}\right)\log m_{1}}{F^{\leftarrow}\left(1 - (1 - q)^k\right)
+a\left( (1 - q)^{-k}\right)\left(\lambda_k+\log m_{1}\right)}(1+o(1)).
\end{flalign*}
Thus, it can be easily verified that $\lim_{q\uparrow1}\Delta \mathrm{CoHM}_{p,q}^{R}(X,Y) =0$, as we can choose $a(\cdot)=E\left[X-U(\cdot)\mid X>U(\cdot)\right]$ from Theorems B.2.2 and 1.16 in \citet{{deHaan2006}}, indicating that the conditioning effect is of a smaller order than the unconditional risk measure itself. For $\hat{x} < \infty$, a similar argument with the appropriate auxiliary function yields the same qualitative conclusion.
\end{remark}

\subsection{A multivariate case}\label{MCOHM}

Consider the random vector \((X, \mathbf{Y}) = (X, Y_1, \ldots, Y_n)\) defined on a common probability space, where \(\mathbf{Y} = (Y_1, \ldots, Y_n)\) denotes the vector of reference risks. For a given vector of confidence levels \(\mathbf{p} = (p_1, \ldots, p_n) \in (0,1)^n\), let \(\mathbf{y}_{\mathbf{p}} = (G_1^{\leftarrow}(p_1), \ldots, G_n^{\leftarrow}(p_n))\), and the conditioning event \(\{\mathbf{Y} > \mathbf{y}_{\mathbf{p}}\}\) is understood componentwise, as in the definition of the $\mathrm{MCoHM}$ risk measure in Section \ref{multivariate CoHM}. To derive its second-order asymptotics under the regime \(q\uparrow1\), we equip the vector \((X,\mathbf{Y})\) with a multivariate FGM copula. Let the marginal distribution of \(X\) be \(F\), and the marginal distribution of each \(Y_i\) be \(G_i\) for \(i=1,\dots,n\). The joint distribution is assumed to be of multivariate FGM form
\begin{align}\label{eq:multi_FGM}
\Pi(x,y_1,\ldots,y_n)
&=
F(x)\prod_{i=1}^n G_i(y_i)
\Bigg[
1+\sum_{i=1}^n r_i\bigl(1-F(x)\bigr)\bigl(1-G_i(y_i)\bigr)\nonumber \\
&\quad\quad\quad\quad\quad\quad\quad\quad\quad\quad+\sum_{1\le i<j\le n}\theta_{ij}\bigl(1-G_i(y_i)\bigr)\bigl(1-G_j(y_j)\bigr)
\Bigg],
\end{align}
where $r_i,\theta_{ij}\in[-1,1]$ are subject to the usual FGM admissibility
conditions. From (\ref{eq:multi_FGM}), the conditional survival function of $X$
given $\{\mathbf{Y}>\mathbf{y}_{\mathbf{p}}\}$ equals to

\begin{align}
\overline{F}_{X|\mathbf{Y}>\mathbf{y}_{\mathbf{p}}}(t)
&=
\frac{
1+\sum_{1\le i<j\le n}\theta_{ij}G_i(y_{p_i})G_j(y_{p_j})
+\sum_{i=1}^n r_i G_i(y_{p_i})
}{
1+\sum_{1\le i<j\le n}\theta_{ij}G_i(y_{p_i})G_j(y_{p_j})
}\,\overline{F}(t) \notag\\
&\quad\times
\left[
1-
\frac{\sum_{i=1}^n r_i G_i(y_{p_i})}{
1+\sum_{1\le i<j\le n}\theta_{ij}G_i(y_{p_i})G_j(y_{p_j})
+\sum_{i=1}^n r_i G_i(y_{p_i})
}\,\overline{F}(t)
\right]. \label{eq:multi_cond_surv}
\end{align}

Define the aggregate dependence parameter

\begin{equation*}\label{eq:mathcal_G_def}
\mathcal{G}_{\mathbf{r},\boldsymbol{\theta},\mathbf{p}}
:=
\frac{\sum_{i=1}^n r_i G_i(y_{p_i})}{
1+\sum_{1\le i<j\le n}\theta_{ij}G_i(y_{p_i})G_j(y_{p_j})}.
\end{equation*}
The following lemma shows that the conditional survival function inherits the
same second‑order regular variation structure as the marginal $\overline F$,
with $\mathcal{G}$ playing the role of the multivariate dependence term.

\begin{lemma}\label{lem:multi_2RV}
Let $(X,\mathbf{Y})$ follow the multivariate FGM distribution
(\ref{eq:multi_FGM}). Suppose $\overline{F}\in 2\mathrm{RV}_{-\alpha,\beta}$
with $\alpha>0$, $\beta\leq 0$ and an auxiliary function $A(\cdot)\in\mathrm{RV}_\beta$.
Then

\begin{equation*}\label{eq:multi_2RV_result}
\overline{F}_{X|\mathbf{Y}>\mathbf{y}_{\mathbf{p}}}
\in 2\mathrm{RV}_{-\alpha,\rho}
\end{equation*}
with $ \rho=(-\alpha)\vee\beta$ and an auxiliary function

\begin{equation}\label{eq:multi_aux}
B(t)=A(t)+\frac{\alpha\,\mathcal{G}_{\mathbf{r},\boldsymbol{\theta},\mathbf{p}}}
{1+\mathcal{G}_{\mathbf{r},\boldsymbol{\theta},\mathbf{p}}}\,\overline{F}(t).
\end{equation}

\end{lemma}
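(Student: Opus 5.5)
The proof is a direct expansion of the ratio $\overline{F}_{X|\mathbf{Y}>\mathbf{y}_{\mathbf{p}}}(tx)/\overline{F}_{X|\mathbf{Y}>\mathbf{y}_{\mathbf{p}}}(t)$, starting from the closed form \eqref{eq:multi_cond_surv}. Write $\mathcal{G}=\mathcal{G}_{\mathbf{r},\boldsymbol{\theta},\mathbf{p}}$ and $\gamma=\mathcal{G}/(1+\mathcal{G})$. The prefactor in \eqref{eq:multi_cond_surv} is exactly $1+\mathcal{G}$, and the coefficient inside the bracket is exactly $\gamma$. Hence
\[
\overline{F}_{X|\mathbf{Y}>\mathbf{y}_{\mathbf{p}}}(t)=(1+\mathcal{G})\,\overline{F}(t)\bigl(1-\gamma\overline{F}(t)\bigr).
\]
The constant $1+\mathcal{G}>0$ cancels in the ratio, which leaves
\[
\frac{\overline{F}_{X|\mathbf{Y}>\mathbf{y}_{\mathbf{p}}}(tx)}{\overline{F}_{X|\mathbf{Y}>\mathbf{y}_{\mathbf{p}}}(t)}=\frac{\overline{F}(tx)}{\overline{F}(t)}\cdot\frac{1-\gamma\overline{F}(tx)}{1-\gamma\overline{F}(t)}.
\]
If $\mathcal{G}=0$, the claim is just the hypothesis $\overline{F}\in 2\mathrm{RV}_{-\alpha,\beta}$, since then $\rho=\beta$ is not needed. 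So assume $\gamma\neq 0$.

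For fixed $x>0$, expand the second factor using $\overline{F}(t)\to 0$ and $\overline{F}(tx)/\overline{F}(t)\to x^{-\alpha}$:
\[
\frac{1-\gamma\overline{F}(tx)}{1-\gamma\overline{F}(t)}=1-\gamma\overline{F}(t)\bigl(x^{-\alpha}-1\bigr)+o\bigl(\overline{F}(t)\bigr).
\]
Multiplying out, and using $\overline{F}(tx)/\overline{F}(t)=x^{-\alpha}+A(t)H_{-\alpha,\beta}(x)+o(A(t))$, gives
\[
\frac{\overline{F}_{X|\mathbf{Y}>\mathbf{y}_{\mathbf{p}}}(tx)}{\overline{F}_{X|\mathbf{Y}>\mathbf{y}_{\mathbf{p}}}(t)}-x^{-\alpha}=A(t)H_{-\alpha,\beta}(x)+\alpha\gamma\overline{F}(t)\,H_{-\alpha,-\alpha}(x)+o\bigl(|A(t)|+\overline{F}(t)\bigr).
\]
Here the key identity is $H_{-\alpha,-\alpha}(x)=x^{-\alpha}(x^{-\alpha}-1)/(-\alpha)$, so that $-\gamma\overline{F}(t)x^{-\alpha}(x^{-\alpha}-1)=\alpha\gamma\overline{F}(t)H_{-\alpha,-\alpha}(x)$. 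This is precisely why the auxiliary function is $B(t)=A(t)+\alpha\gamma\overline{F}(t)$ as in \eqref{eq:multi_aux}.

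It remains to divide by $B(t)$ and split into three cases according to the sign of $\beta+\alpha$. Recall $|A|\in\mathrm{RV}_\beta$ and $\overline{F}\in\mathrm{RV}_{-\alpha}$.
\begin{itemize}
\item If $\beta>-\alpha$, then $\overline{F}(t)=o(|A(t)|)$. So $B\sim A$, the limit is $H_{-\alpha,\beta}=H_{-\alpha,\rho}$, and $B$ inherits the eventual constant sign of $A$.
\item If $\beta<-\alpha$, then $A=o(\overline{F})$. So $B\sim\alpha\gamma\overline{F}$, which has constant sign, and the limit is $H_{-\alpha,-\alpha}=H_{-\alpha,\rho}$.
\item If $\beta=-\alpha$, both terms carry the same limit function $H_{-\alpha,-\alpha}$. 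Dividing the whole display by $B(t)$ then gives the limit $H_{-\alpha,\rho}$ directly, provided the error $o(|A|+\overline{F})$ is $o(|B|)$.
\end{itemize}

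The main obstacle is this last proviso in the boundary case $\beta=-\alpha$: $A$ and $\alpha\gamma\overline{F}$ could have opposite signs and comparable size, so that $B$ is of smaller order or even changes sign. My first attempt will use that $A/\overline{F}$ is slowly varying. If $\liminf|B|/(|A|+\overline{F})>0$, the argument above goes through. Otherwise, the degenerate cancelling configuration has to be excluded, either through an implicit nondegeneracy assumption of the lemma (that $B$ is eventually of one sign and $|B|\asymp|A|+\overline{F}$) or by absorbing it into the $o(1)$ terms as the main theorems do. I will state this explicitly.

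Finally, uniformity on compact $x$-sets follows from the uniform convergence remark after Definition \ref{def:3}, and $|B|\in\mathrm{RV}_\rho$ follows in each case from $B\sim A$ or $B\sim\alpha\gamma\overline{F}$ (and in the equal-index case from $|A|+\overline{F}\in\mathrm{RV}_{-\alpha}$ together with the nondegeneracy above).
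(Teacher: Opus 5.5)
Your expansion is exactly the paper's proof. The paper factors the conditional tail as $(1+\mathcal{G})\overline{F}(t)\bigl(1-\gamma\overline{F}(t)\bigr)$ and expands the ratio as in Lemma~\ref{lem:fourth} to $x^{-\alpha}\bigl[1+A(t)\frac{x^{\beta}-1}{\beta}+\alpha\gamma\frac{x^{-\alpha}-1}{-\alpha}\overline{F}(t)+o\bigl(A(t)+\overline{F}(t)\bigr)\bigr]$. It then stops, reading off $B=A+\alpha\gamma\overline{F}$ without ever dividing by $B$. Your case split on the sign of $\beta+\alpha$ supplies that missing step, and it is correct for $\beta\neq-\alpha$ when $\mathcal{G}\neq 0$.

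Two corrections are needed.

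First, the degeneracy you flag at $\beta=-\alpha$ is genuine. No argument based on slow variation of $A/\overline{F}$ can remove it, so your nondegeneracy condition must be added as a hypothesis; the same applies to Lemma~\ref{lem:fourth}. For example, take $\gamma<0$ and define $\overline{F}$ for large $t$ by $\overline{F}(t)\bigl(1-\gamma\overline{F}(t)\bigr)=ct^{-\alpha}$.
\begin{itemize}
\item Then $\overline{F}(t)=ct^{-\alpha}+\gamma c^{2}t^{-2\alpha}+O(t^{-3\alpha})$, so $\overline{F}\in 2\mathrm{RV}_{-\alpha,-\alpha}$ with $A(t)=-\alpha\gamma c\,t^{-\alpha}>0$.
\item But the conditional tail is exactly $(1+\mathcal{G})ct^{-\alpha}$, so the ratio is identically $x^{-\alpha}$.
\item Meanwhile $B(t)\sim\alpha\gamma^{2}c^{2}t^{-2\alpha}$, so the limit $H_{-\alpha,-\alpha}$ is not attained. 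In fact the conditional tail lies in no $2\mathrm{RV}$ class.
\end{itemize}
No cancellation can occur when $A$ and $\mathcal{G}$ have the same sign (e.g.\ $A>0$ and $\mathcal{G}>0$), and in that case your argument is complete.

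Second, your remark on $\mathcal{G}=0$ is not right as written. In that case the conditional tail equals $\overline{F}$ and $B=A$, so it lies in $2\mathrm{RV}_{-\alpha,\beta}$. This matches the statement only when $\beta\ge-\alpha$. For $\beta<-\alpha$ the stated index $\rho=-\alpha$ is false, so this case has to be excluded as well.
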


\begin{proof}
For any $x>0$, write $\mathcal{G}:=\mathcal{G}_{\mathbf{r},\boldsymbol{\theta},\mathbf{p}}$.
From (\ref{eq:multi_cond_surv}) and following a similar argument in the proof of Lemma A.5, we derive

\begin{align*}
\frac{\overline{F}_{X|\mathbf{Y}>\mathbf{y}_{\mathbf{p}}}(tx)}
{\overline{F}_{X|\mathbf{Y}>\mathbf{y}_{\mathbf{p}}}(t)}
&=
x^{-\alpha}
\Bigg[
1+A(t)\frac{x^\beta-1}{\beta}
+\frac{\alpha\mathcal{G}}{1+\mathcal{G}}\frac{x^{-\alpha}-1}{-\alpha}\overline{F}(t)
+o\left(A(t)+\overline{F}(t)\right)
\Bigg].
\end{align*}
This ends the proof.
\end{proof}

The auxiliary function $B(\cdot)$ in (\ref{eq:multi_aux}) contains a pure marginal
term $A(\cdot)$ and a dependence-induced term proportional to $\overline{F}(\cdot)$.
The parameter $\rho=(-\alpha)\vee\beta$ picks the slower of the two decay
rates, ensuring that the expansion remains valid. Consequently, the entire
second-order asymptotic machinery developed in Theorems
\ref{thm:main}--\ref{thm:third} applies to the multi-conditional setting
under the substitution of $ rG(y_p)$ by $ \mathcal{G}_{{\bf r},{\bf \theta},{\bf p}}.$

For instance, in the Fréchet case, the expansion reads
\begin{align*}
\mathrm{MCoHM}_{\mathbf{p},q}(X|\mathbf{Y})
&=
c_1 m_q^{1/\alpha}
F^{\leftarrow}\bigl(1-(1-q)^k\bigr) \notag\\
&\quad\times
\Bigg[
1+
\left(\mathcal{M}^{(1)}+o(1)\right)
A\left(F^{\leftarrow}\bigl(1-(1-q)^k\bigr)\right)
+
\left(\mathcal{N}^{(1)}+o(1)\right)(1-q)^k
\Bigg], \label{eq:MCoHM_Frechet}
\end{align*}
where
$m_q=1+\mathcal{G}_{\mathbf{r},\boldsymbol{\theta},\mathbf{p}}
(1-(1-q)^k)$, $m_1=1+\mathcal{G}_{\mathbf{r},\boldsymbol{\theta},\mathbf{p}}$,
and $\mathcal{M}^{(1)}$, $\mathcal{N}^{(1)}$ are the symbols defined in Section \ref{subsec:3.1} with $rG(y_p)$ replaced by $\mathcal{G}$. The Weibull and Gumbel analogs are obtained by the same substitution.

\begin{remark}
The aggregate parameter $\mathcal{G}$ summarizes the tail dependence. Its
denominator contains the pairwise dependence $\theta_{ij}$ among the reference
risks; when these are positively dependent, conditioning on all of them
simultaneously produces a smaller incremental effect than the sum of the
individual terms $r_iG_i(y_{p_i})$. This is consistent with the
“diminishing marginal” effect of correlated conditioning events. However, the FGM
copula restricts the analysis to weak dependence; extensions to
strong tail dependence would require different second‑order tools and are left
for future work.
\end{remark}

\subsection{A further discussion on joint asymptotics} \label{Joint_asy}
In this subsection, we further discuss the iterated limit problems of the asymptotics of the $\mathrm{CoHM}_{p,q}[X| Y]$ risk measure as $p\uparrow 1$ and $q\uparrow 1$. It can approximate the $\mathrm{CoHM}$ risk measure when the risks $X$ and $Y$ are both exposed at higher confidence levels that reflect the economic/insurance reality. 

Our analysis has focused on the regime $q\uparrow1$ with $p$ fixed. A
complementary perspective, studied by \citet{TXZ2026}, takes $p\uparrow1$ with $q$ fixed and reveals a proportional relationship between
$\mathrm{CoHM}_{p,q}(X|Y)$ and the unconditional HM risk measure of an
auxiliary variable under bivariate regular variation. A natural question is
whether the two sequential limits are equivalent:
\[
L_1:=\lim_{q\uparrow1}\lim_{p\uparrow1}\mathrm{CoHM}_{p,q}(X|Y),
\qquad
L_2:=\lim_{p\uparrow1}\lim_{q\uparrow1}\mathrm{CoHM}_{p,q}(X|Y).
\]

Suppose that the reference risk $Y$ itself belongs to an $\mathrm{MDA}$ (Fréchet, Weibull or Gumbel) and, in the heavy-tailed cases, satisfies the corresponding
second-order regular variation condition. Then, for each fixed $q$,
$\lim_{p\uparrow1}F_{X|Y>y_p}$ exists as a proper distribution
$\widetilde F_q$, satisfying monotonicity, regularity, right-continuity, and the conditional measure inherits the $\mathrm{2RV}$ properties of $\overline F$ with an updated auxiliary function. Applying the results of
Theorems \ref{thm:main}--\ref{thm:third} to $\widetilde F_q$ yields $L_1$
explicitly.

Conversely, we can also take the limit as $p \uparrow1$ in Theorems \ref{thm:main}-\ref{thm:third}, which leads to the iterated limit $L_2.$ Concretely, for $G \in \mathrm{MDA}(\Phi_{\alpha^{*}})$ and  $\overline{G} \in \mathrm{2RV}_{-\alpha^{*},\beta^{*}}$ for some $\alpha^{*}>0$ and $\beta^{*}<0$, applying Lemma \ref{lem:first} below and absorbing negligible terms, we can derive the new asymptotic results by replacing $G(y_p)$ with $p$ in Theorems \ref{thm:main}-\ref{thm:third}. For $G \in \mathrm{MDA}(\Psi_{\alpha^{*}})$ as well as $\overline{G}\left(\hat{x}-\frac{1}{\cdot}\right) \in  \mathrm{2RV}_{-\alpha^{*},\beta^{*}}$ for some $\alpha^{*}>0$ and $\beta^{*}<0$ or $G \in \mathrm{MDA}(\Lambda)$ that implies its tail quantile function $U_G \in \mathrm{ERV}$,  one can easily obtain the corresponding asymptotic results by replacing $G(y_p)$ with $p$ in Theorems \ref{thm:main}-\ref{thm:third}. A direct
comparison of the resulting expressions shows that
\begin{equation}\label{eq:exchangeability}
L_1 = L_2
\end{equation}
whenever the second-order parameters and the tail indices of $X$ and $Y$
satisfy the natural dominance conditions. In the Gumbel case, the equality
follows from the slowly varying nature of the auxiliary function $a(\cdot)$.

\begin{remark}
The identity (\ref{eq:exchangeability}) is not automatic; it is based on the
assumption that the limiting conditional distribution $\widetilde F$ (as
$p\uparrow1$) remains in the same $\mathrm{2RV}$ class as $\overline F$. For the FGM
copula, this holds 0conditioning only alters the constant
$m_1=1+r\lim_{p\uparrow1}G(y_p)$, without changing the tail index or the
second-order parameter. For copulas with tail dependence, $\widetilde F$ could
have a heavier or lighter tail than $\overline F$, and the sequential limits
may differ.
\end{remark}

It is important to emphasize that the joint double limit
\[
\lim_{(p,q)\to(1,1)}\mathrm{CoHM}_{p,q}(X|Y)
\]
cannot be studied within the current second-order regular variation framework
because the simultaneous divergence of two arguments destroys the uniform
convergence properties on which our expansions rely. A full treatment would
require a genuinely two-dimensional extreme-value theory for conditional
quantiles, which is beyond the scope of this paper. For practical purposes,
the iterated limit $L_1$ (or $L_2$) already provides a theoretically
justified asymptotic benchmark when both the primary loss and the reference risk approach their extreme domains.

\section{Numerical illustrations}\label{Numerical_ill}

In this section, three numerical examples along with some simulation studies are first performed to validate the precision of the second-order asymptotics of the $\mathrm{CoHM}$ coherent risk measure presented in Theorems \ref{thm:main}-\ref{thm:third} via the crude Monte Carlo method. 
The primary risk distributions in these three examples come from three different MDAs, respectively. We use optimize in R to minimize the objective function in (\ref{eq:CoHM}) over \(x\), and then compute the corresponding value to obtain the exact numerical approximation of the $\mathrm{CoHM}$ risk measure. We aim to demonstrate the improvement of our second-order asymptotic formulae over the first-order ones. 
\begin{example} \label{ex:example1}
(The Fr\'echet case) Let the primary risk \(X\) be a r.v. with Lomax distribution function \(F\) given by  
\begin{equation} \label{eq:pareto}
F(x) = 1 - \left( \frac{\theta}{\theta + x} \right)^{\alpha}, \quad x > 0, 
\end{equation}
where \(\alpha > 0\) and \(\theta > 0\). Moreover, we assume that the primary risk $X$ and the reference risk $Y$ are dependent according to a FGM dependence with parameter \(r\).

In this case, it can be easily verified that \(\overline{F} \in 2\mathrm{RV}_{-\alpha,-1} \) with an auxiliary function \(A(t)= \alpha\theta/t\), and therefore \(F \in \mathrm{MDA}(\Phi_\alpha)\).  

In this numerical illustration, we set \(k = 2\), \(p = 0.80\) (or $G(y_p)=0.8$), \(r = 0.60\), \(\alpha = 5\) and \(\theta = 6\). We compute the exact value of \(\mathrm{CoHM}_{p,q}(X|Y)\) together with the first-order asymptotic expansion using the same method as in \citet{LiuYi2025}, and simulate the second‑order asymptotic expansion by Theorem \ref{thm:main} in our paper. The exact value, the first-order and the second-order asymptotic of \(\mathrm{CoHM}_{p,q}(X|Y)\) as a function of the confidence level \(q \in [0.95,\,0.999]\) are plotted in Figure \ref{fig:example}. 
\begin{figure}[htbp]
    \centering
    \begin{subfigure}{0.48\textwidth}
        \centering
        \includegraphics[width=\textwidth]{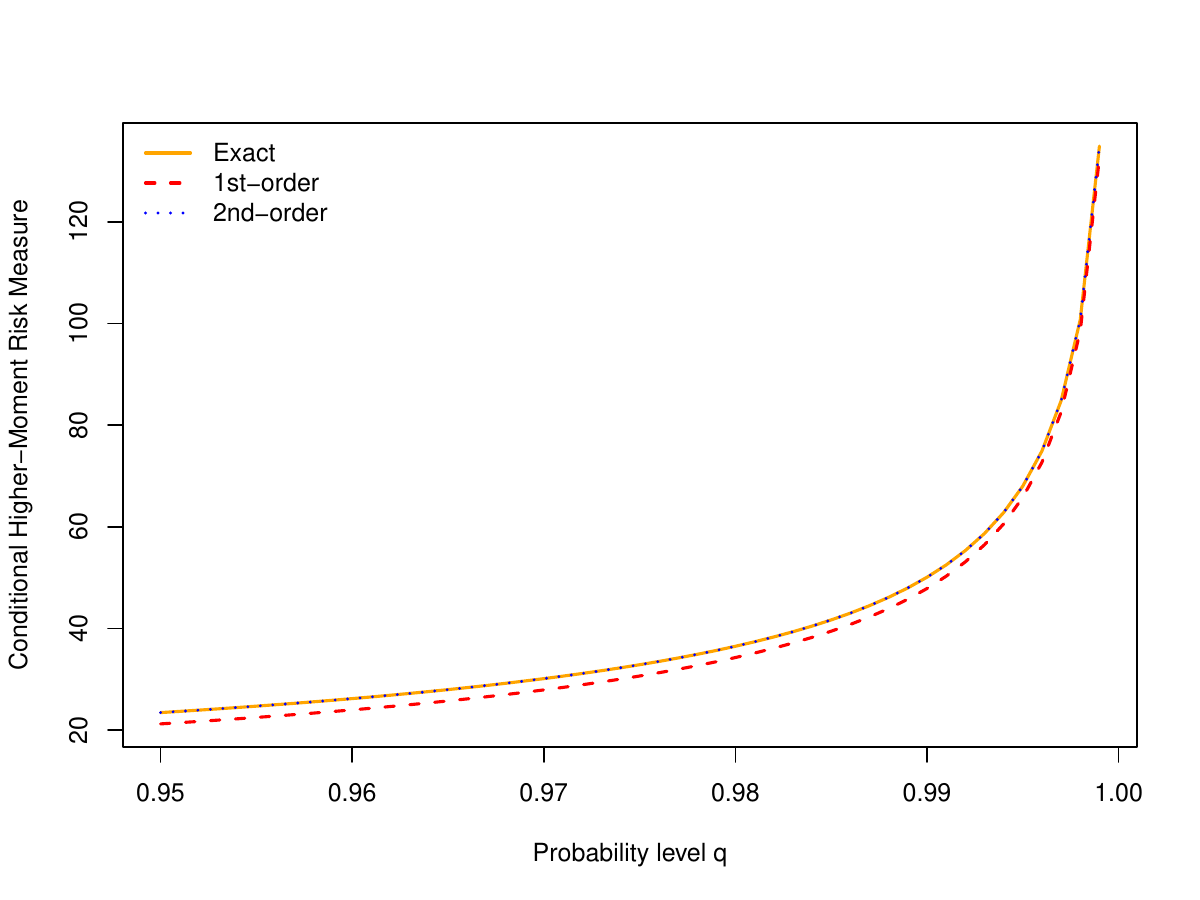}
    \end{subfigure}
    \hfill
    \begin{subfigure}{0.48\textwidth}
        \centering
        \includegraphics[width=\textwidth]{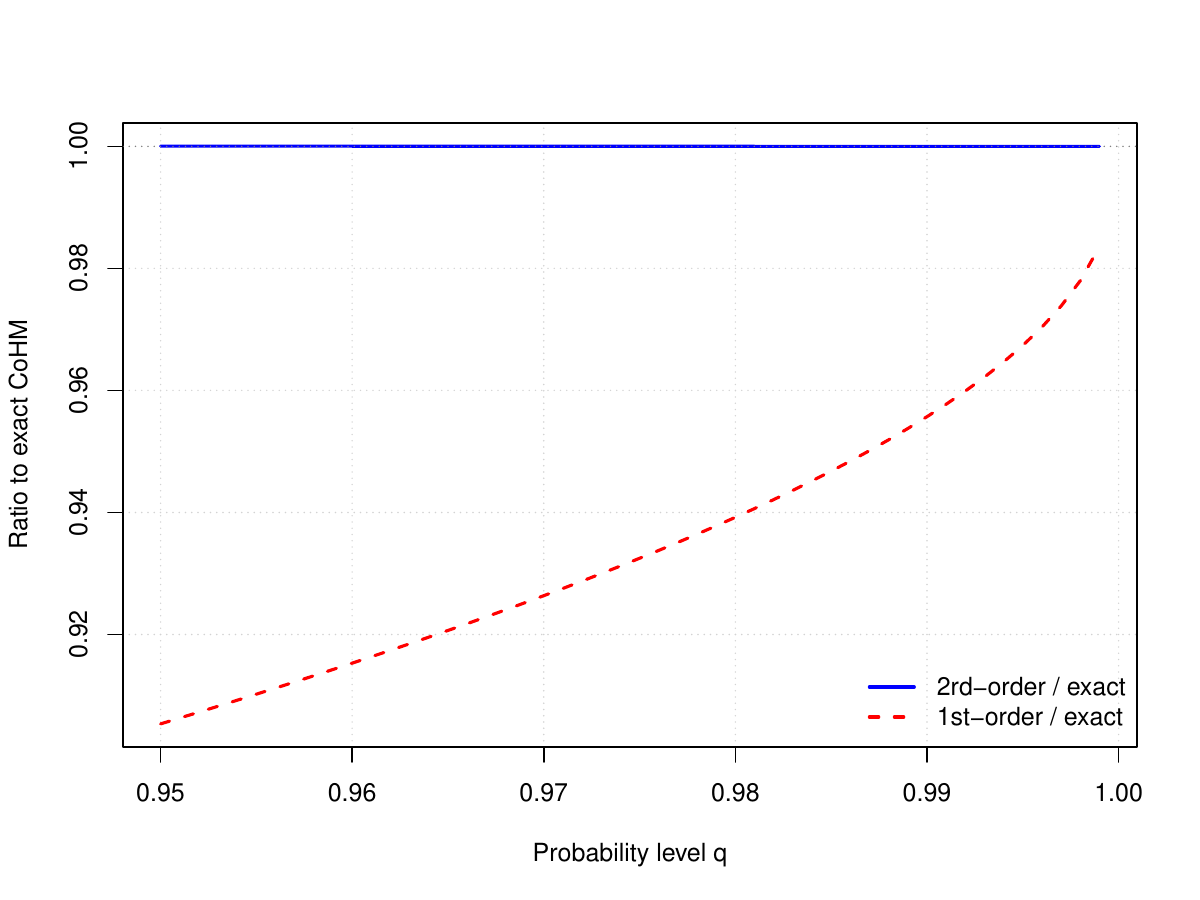}
    \end{subfigure}
    \caption{Comparison of the first-/second-order asymptotic values with the exact values of
    \(\mathrm{CoHM}_{p,q}(X|Y)\) in the presence of the Fr\'echet case.}
    \label{fig:example}
\end{figure}

As shown in Figure \ref{fig:example}, in general, both first- and second-order asymptotic effects are basically good in this case. However, the second-order asymptotic is slightly closer to the exact value, demonstrating that the second‑order asymptotic provides a more accurate approximation than the first‑order one.
\end{example}

\begin{example}
(The Weibull case)  Let the primary risk \(X\) be a r.v. following a Beta distribution with density  
\[
f(x) = \frac{x^{a-1}(1-x)^{b-1}}{{B}(a,b)}, \qquad 0 < x < 1,\; a,b > 0.
\]
Moreover, we assume that the primary risk $X$ and the reference risk $Y$ are dependent according to a FGM dependence with parameter \(r\). 

It is well known that  \(F \in \mathrm{MDA}(\Psi_b)\) with upper endpoint \(\hat{x} = 1\). By Example 3.3.17 in \citet{Embrechts1997}, it holds that
\[
\overline{F}\Bigl(1-\frac{1}{t}\Bigr) = \int_{1-1/t}^{1} f(x)\,dx = \int_{t}^{\infty} f\Bigl(1-\frac{1}{s}\Bigr)\frac{1}{s^{2}}\,ds.
\]
Meanwhile, it is easy to see that, as \(t\to\infty\),
\[
f\Bigl(1-\frac{1}{t}\Bigr)\frac{1}{t^2} = \frac{t^{-(b+1)}}{B(a,b)}\Bigl(1-\frac{a-1}{t}+o\Bigl(\frac{1}{t}\Bigr)\Bigr), 
\]
which implies $$f\Bigl(1-\frac{1}{t}\Bigr)\frac{1}{t^2}\in 2\mathrm{RV}_{-b-1,-1}$$ with an auxiliary function \({(a-1)}/{t}\). Therefore, it follows from Proposition 6 in \citet{HuaJoe2011} that, as \(t\to\infty\),
\[
\overline{F}\Bigl(1-\frac{1}{t}\Bigr) = \frac{t^{-b}}{bB(a,b)}\Bigl(1-\frac{b(a-1)}{b+1}\,t+o\Bigl(\frac{1}{t}\Bigr)\Bigr),\qquad t\to\infty.
\]
which leads to $$\overline{F}(1-1/t)\in 2\mathrm{RV}_{-b,-1}$$ with an auxiliary function \({(ab-b)}/{(tb+t)}\).
\begin{figure}[htbp]
    \centering
    \begin{subfigure}{0.48\textwidth}
        \centering
        \includegraphics[width=\textwidth]{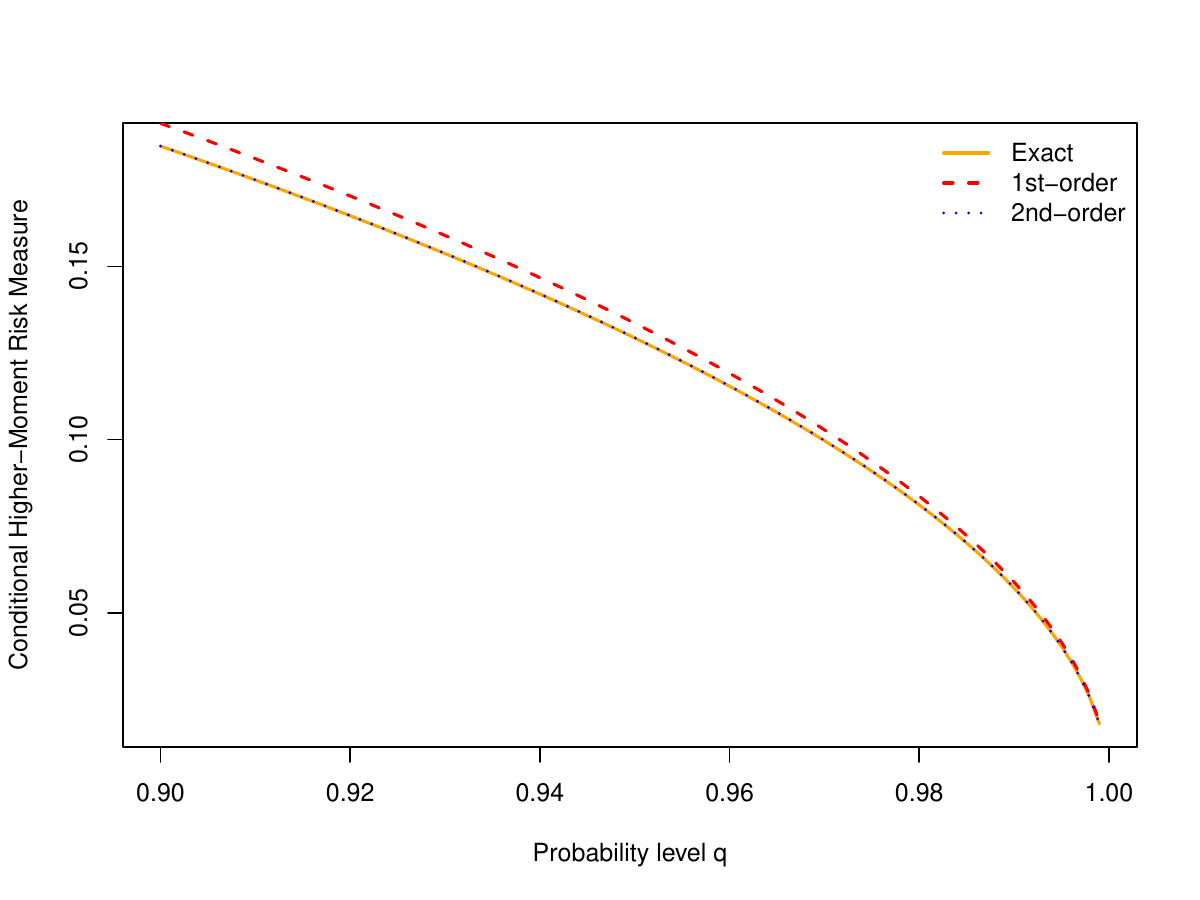}
    \end{subfigure}
    \hfill
    \begin{subfigure}{0.48\textwidth}
        \centering
        \includegraphics[width=\textwidth]{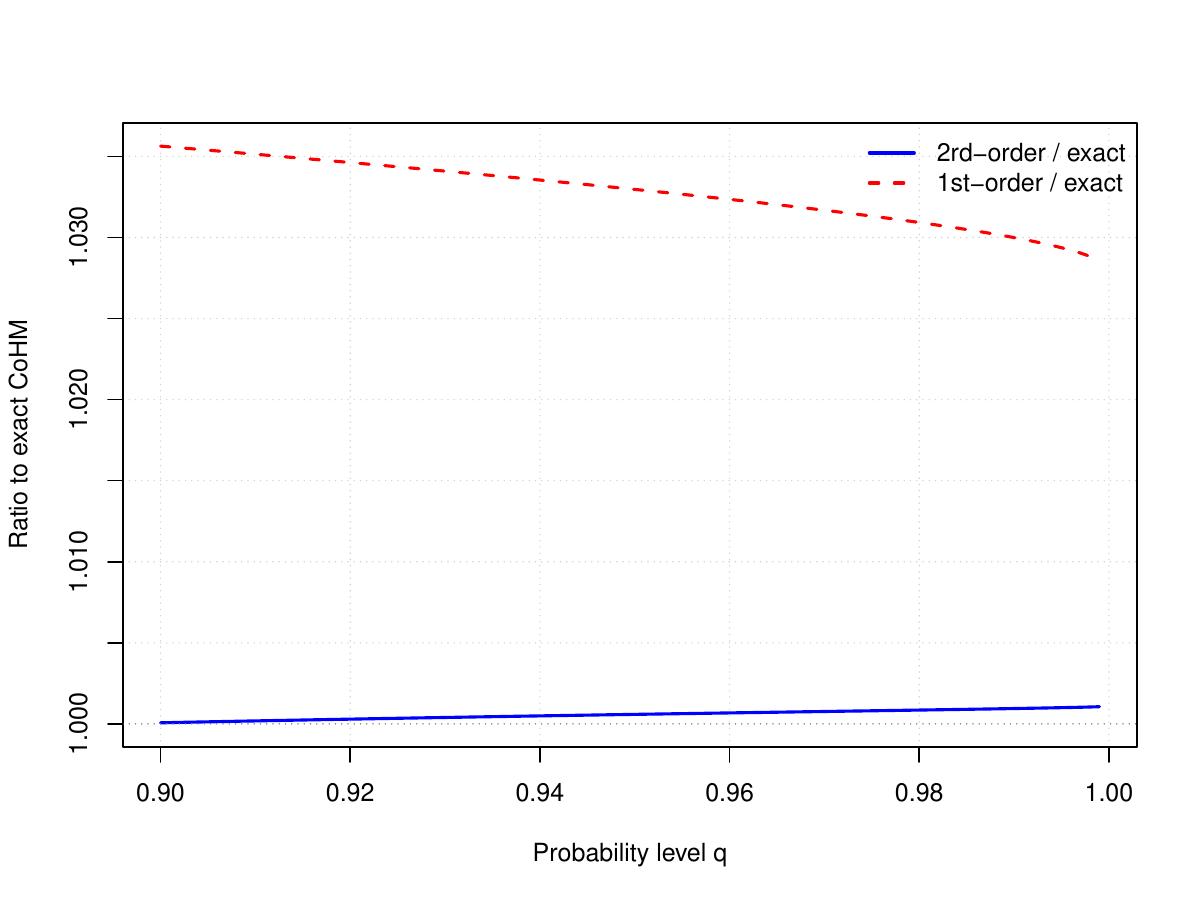}
    \end{subfigure}
    \caption{Comparison of the first-/second-order asymptotic values with the exact values of
    \(\mathrm{CoHM}_{p,q}(X|Y)\) in the presence of the Weibull case.}
    \label{fig:2}
\end{figure}

In this numerical illustration, we set \(k = 3\), \(r = 0.2\), \(d = 2\), \(b = 6\) and \(p=0.9\).  Similarly to Example \ref{ex:example1},  we also compute the exact value of \(\mathrm{CoHM}_{p,q}(X|Y)\) together with the first-order asymptotic expansion using the same method as in \citet{LiuYi2025}, and simulate the second‑order asymptotic expansion by Theorem \ref{thm:second} in our paper. The exact value, the first-order and the second-order asymptotic of \(\mathrm{CoHM}_{p,q}(X|Y)\) as a function of the confidence level \(q \in [0.90,\,0.999]\) are shown in Figure \ref{fig:2}.

It is evident from Figure \ref{fig:2} that there is a certain deviation between the first-order asymptotic and the exact value, while the second-order asymptotic is very close to the exact value, which demonstrates that the second‑order expansion provides a more accurate approximation than the first‑order one, confirming the theoretical improvement.
\end{example}

\begin{example}
(The Gumbel case) Let the primary risk \(X\) be a r.v. following a lognormal distribution with distribution function  
\[
F(x) = \Phi\left(\frac{\log x - \mu}{\sigma}\right),\qquad x>0,\;\mu\in\mathbb{R},\;\sigma>0,
\]
where \(\Phi\) denotes the standard normal distribution function. Moreover, we assume that the primary risk $X$ and the reference risk $Y$ are dependent according to a FGM dependence with parameter \(r\). 

\begin{figure}[htbp]
    \centering
    \begin{subfigure}{0.48\textwidth}
        \centering
        \includegraphics[width=\textwidth]{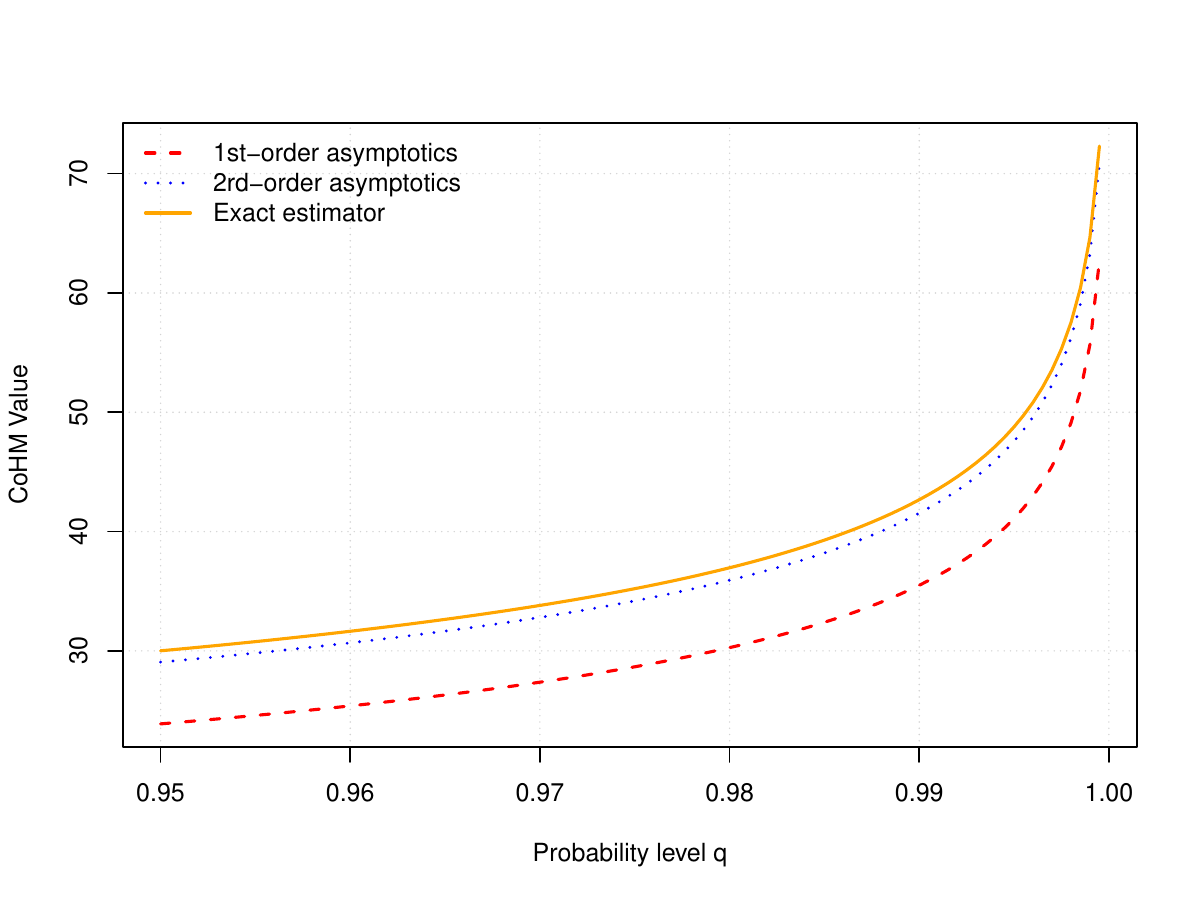}
    \end{subfigure}
    \hfill
    \begin{subfigure}{0.48\textwidth}
        \centering
        \includegraphics[width=\textwidth]{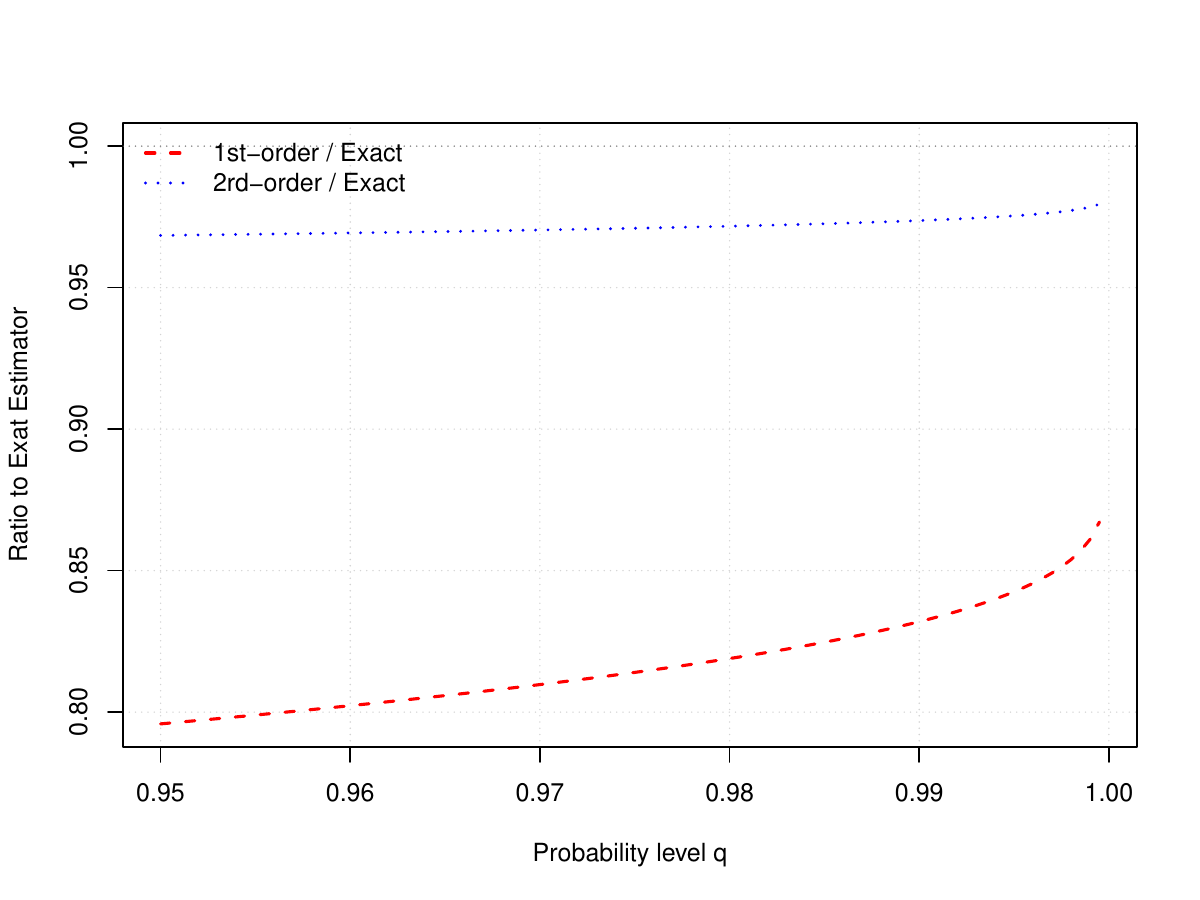}
    \end{subfigure}
    \caption{Comparison of the first-/second-order asymptotic values with the exact values of
    \(\mathrm{CoHM}_{p,q}(X|Y)\) in the presence of the Gumbel case.}
    \label{fig:3}
\end{figure}

Due to Example 3.3.35 in \citet{Embrechts1997}, we know that \(F \in \mathrm{MDA}(\Lambda)\) with $\hat{x}=\infty$ and an auxiliary function 
\[
a(x) = \frac{\Big(1-\Phi\big((\log x-\mu)/\sigma\big)\Big)\,\sigma x}{\phi\big((\log x-\mu)/\sigma\big)},\qquad x>0,
\]
where \(\phi\) denotes the standard normal density.

In this numerical illustration, we set \(k=1.5\), \(\mu=2\), \(\sigma=0.5\), \(p=0.9\) and \(r=0.2\).  We compute the exact value of \(\mathrm{CoHM}_{p,q}(X|Y)\) together with the first-order asymptotic expansion using the same method as in \citet{LiuYi2025}, and simulate the second‑order asymptotic expansion by Theorem \ref{thm:third} in our paper. The exact value, the first-order and the second-order asymptotic of \(\mathrm{CoHM}_{p,q}(X|Y)\) as a function of the confidence level \(q \in [0.95,\,0.999]\) are shown in Figure \ref{fig:3}. 

Figure \ref{fig:3} clearly indicates that there is still a significant deviation between the first-order asymptotic and the exact value, but the second-order asymptotic is basically consistent with the exact value, which shows that the second‑order expansion provides a substantially more accurate approximation than the first‑order one, confirming the theoretical improvement.

\end{example}

\section{Empirical analysis}\label{Empirical_ana}
We now evaluate the practical accuracy of the proposed second-order asymptotic expansions using two real-world insurance datasets, covering both light-tailed and heavy-tailed loss processes.
\subsection{A light-tailed case}
The dataset used in this light-tailed empirical analysis is \texttt{dataCar} from the R package \texttt{insuranceData}, which contains 67856 vehicle insurance claim records of an Australian insurance company during 2004--2005, including two vehicle body types, namely the station wagon (STNWG) and the
sedan (SEDAN); see \citet{deJong2008} for more statistical analysis about this dataset. We follow the data filtering and preprocessing procedure of \citet{WangLi2026}: select samples with vehicle body types $\mathrm{SEDAN}(X)$ and $\mathrm{STNWG}(Y)$, remove zero claims, and convert the claim amounts to thousand dollars. Meanwhile, a bootstrap procedure is employed to generate a large sample of size \(10^6\), the marginal parameters of the Gamma distributions for SEDAN and STNWG are estimated by the method of moments, and the dependence parameter \(r\) of the FGM dependence structure is obtained by maximum likelihood estimation (see Table \ref{tab:2}). For detailed descriptive statistics, goodness-of-fit tests, and estimation results, we refer to Section 5 of \citet{WangLi2026}. Adopting the same estimation methods, we further calculate the empirical and asymptotic estimates of the $\mathrm{CoHM}$ risk measure, and compare the approximation performance of the first-order and second-order asymptotic results.
\begin{table}[htbp]
\centering
\caption{Parameter estimation results for vehicle insurance}
\label{tab:2}
\begin{tabular*}{\textwidth}{@{\extracolsep{\fill}} lccc @{}}
\toprule
Parameter & \(X\) & \(Y\) \\
\midrule
Shape \(\hat{\alpha}\) & 0.3840 & 0.2440 \\
Scale \(\hat{\beta}\) & 0.2112 & 0.1211 \\
\midrule
FGM parameter & \(\hat{r}\) & \(0.0028\) \\
\bottomrule
\end{tabular*}
\end{table}

As done in \citet{WangLi2026}, we also choose the gamma distribution as the reference type of distributions for $X$ and $Y$ to match the lighted-tailed nature of the loss data. A r.v. \(X\) is said to follow a Gamma distribution, denoted by \(X \sim \text{Gamma}(\alpha, \beta)\), if its probability density function is given by
\[
f(x) = \frac{\beta^{\alpha}}{\Gamma(\alpha)} x^{\alpha-1} e^{-\beta x},\qquad x > 0,
\]
where \(\alpha > 0\) is the shape parameter, \(\beta > 0\) is the scale parameter. It is easy to see the tail probability satisfies
\[
\overline{F}(x) = P(X > x) \sim \frac{\beta^{\alpha-1}}{\Gamma(\alpha)} x^{\alpha-1} e^{-\beta x},\quad x \to \infty.
\]
Hence, the Gamma distribution belongs to MDA(\(\Lambda\)) in extreme value theory with an auxiliary function \(a(\cdot) \to 1/\beta\) by Remark 1.2.7 of \citet{deHaan2006}. Suppose that the marginal distributions of  $\mathrm{SEDAN}(X)$ and $\mathrm{STNWG}(Y)$ both follow Gamma distributions, whose shape parameter \(\alpha\) and scale parameter \(\beta\) are estimated by the method of moments. The dependence structure between \(X\) and \(Y\) is characterized by the FGM dependence structure. Based on the pseudo-observations from the bootstrap sample, we derived the estimation of FGM parameter \(r\) via maximum likelihood estimation. 

Next, we compute the first-order and second-order asymptotic $\mathrm{CoHM}$ estimates by Theorem \ref{thm:third}.
For the empirical estimate, we take the same way as in the heavy-tailed case.
\begin{table}[htbp]
\centering
\small
\renewcommand{\arraystretch}{0.8}
\caption{Comparison of the empirical and asymptotic $\mathrm{CoHM}$ estimates($p=0.9$) for light-tailed case}
\label{tab:cohm_comparison}
\begin{tabular*}{\textwidth}{@{}l@{\extracolsep{\fill}}cccccc@{}}
\toprule
$k$ & $q$ & Emp & First & Err$_{1}$ (\%) & Second & Err$_{2}$ (\%) \\
\midrule
\multirow{4}{*}{1.0} & 0.95 & 11.87 & 7.66 & 35.46 & 12.39 & 4.38 \\
                     & 0.97 & 13.92 & 9.59 & 31.07 & 14.32 & 2.90 \\
                     & 0.99 & 18.44 & 13.94 & 24.40 & 18.67 & 1.24 \\
                     & 0.995 & 21.35 & 16.78 & 21.43 & 21.50 & 0.70 \\
\midrule
\multirow{4}{*}{1.4} & 0.95 & 18.50 & 12.29 & 33.60 & 17.70 & 4.32 \\
                     & 0.97 & 21.51 & 15.18 & 29.46 & 20.59 & 4.28 \\
                     & 0.99 & 28.11 & 21.59 & 23.19 & 27.01 & 3.92 \\
                     & 0.995 & 32.34 & 25.74 & 20.41 & 31.16 & 3.66 \\
\midrule
\multirow{4}{*}{1.6} & 0.95 & 22.03 & 14.70 & 33.25 & 20.40 & 7.38 \\
                     & 0.97 & 25.52 & 18.07 & 29.18 & 23.77 & 6.85 \\
                     & 0.99 & 33.16 & 25.53 & 23.02 & 31.23 & 5.83 \\
                     & 0.995 & 38.04 & 30.33 & 20.27 & 36.03 & 5.29 \\
\bottomrule
\end{tabular*}
\end{table}
The results in Table \ref{tab:cohm_comparison} show that the second-order asymptotics of $\mathrm{CoHM}$ approximate the true tail moment risk measure more accurately, especially at high confidence levels ($q \ge 0.99$) for $X$. Although first-order asymptotics are theoretically simpler, they exhibit large deviations with finite samples and real data, requiring a second-order correction to meet practical accuracy requirements. This validates the effectiveness of our proposed second-order asymptotic theory. In the light-tailed case, under the high confidence level $p$ of $Y$ ($p>0.9$), $\mathrm{CoHM}$ capture the tail risk capital that should be reserved for the sedan business when the STNWG business suffers a rare large claim. Providing a clear numerical basis for independent yet scenario-augmented capital allocation between the two vehicle insurance lines.

\subsection{A heavy-tailed case}
In this subsection, we illustrate the asymptotic results of the $\mathrm{CoHM}$ risk measure through another real‑data application. The dataset used in this heavy-tailed empirical analysis is the Danish fire loss dataset, available as \texttt{danishmulti} in the \texttt{R} package \texttt{CASdatasets}, consisting of $N = 2167$ fire losses from 1980 to 1990; each split into $2$ coverage types: building losses and contents losses. See \citet{McNeil1997} for more information about this dataset. 
\begin{table}[htbp]
\centering
\caption{Descriptive statistics of building and content losses after removing zeros}
\label{tab:descriptive}
\setlength{\tabcolsep}{12pt}   
\begin{tabular}{lrrrrrr}
\hline
Variable & Min. & $Q_1$ & Median & Mean & $Q_3$ & Max. \\
\hline
Building & 0.02319 & 0.88 & 1.22 & 1.87 & 1.93 & 95.17 \\
Content & 0.00083 & 0.27 & 0.48 & 1.63 & 1.19 & 132.01 \\
\hline
\end{tabular}
\end{table}

In the original dataset, some fire incidents resulted in only one type of loss (i.e., either building loss or content loss was zero). Since this study focuses on the extreme risk when both loss types occur simultaneously, and zero values would interfere with the fitting of continuous marginal distributions and the subsequent probability integral transformation, we removed observation pairs where either building loss or content loss was zero. After filtering, 1502 valid observations remained, representing approximately $ 69.3\% $ of the total sample.
Table \ref{tab:descriptive} presents the basic descriptive statistics of the building and content losses after removing zeros. Both variables exhibit clear positive skewness and heavy-tailed behavior; in particular, the median of content loss is much smaller than its mean, indicating substantial probability mass in the tail. 
\begin{table}[htbp]
\centering
\caption{Parameter estimation results for Danish fire loss}
\label{tab:fire}
\begin{tabular*}{\textwidth}{@{\extracolsep{\fill}} lccc @{}}
\toprule
Parameter & \(X\) & \(Y\) \\
\midrule
Shape \(\hat{\alpha}\) & 9.77 & 1.76 \\
Scale \(\hat{\theta}\) & 16.05 & 1.16 \\
\midrule
FGM parameter & \(\hat{r}\) & \(0.744\) \\
\bottomrule
\end{tabular*}
\end{table}

To capture the heavy-tailed nature of the loss data, we fit Lomax distributions to both margins. The survival function of this distribution is given in (\ref{eq:pareto}). Using maximum likelihood estimation, the resulting parameter estimates are presented in Table \ref{tab:fire}. The estimated shape parameter for building loss is relatively large, indicating a faster tail decay and thus a moderately heavy tail; by contrast, content loss exhibits a heavier tail with a higher probability of extreme losses. To assess the goodness of fit, we plot the CDF comparison plots and QQ plots of the marginal distributions (see Figure \ref{fig:QQ}). From the plots, both variables align well with the Lomax distribution in both the central and tail regions, with only minor deviations in the extreme upper tail of the building loss. Overall, the Lomax model adequately captures the tail behavior of the loss distributions.
\begin{figure}[htbp]
    \centering
    \includegraphics[width=0.8\textwidth]{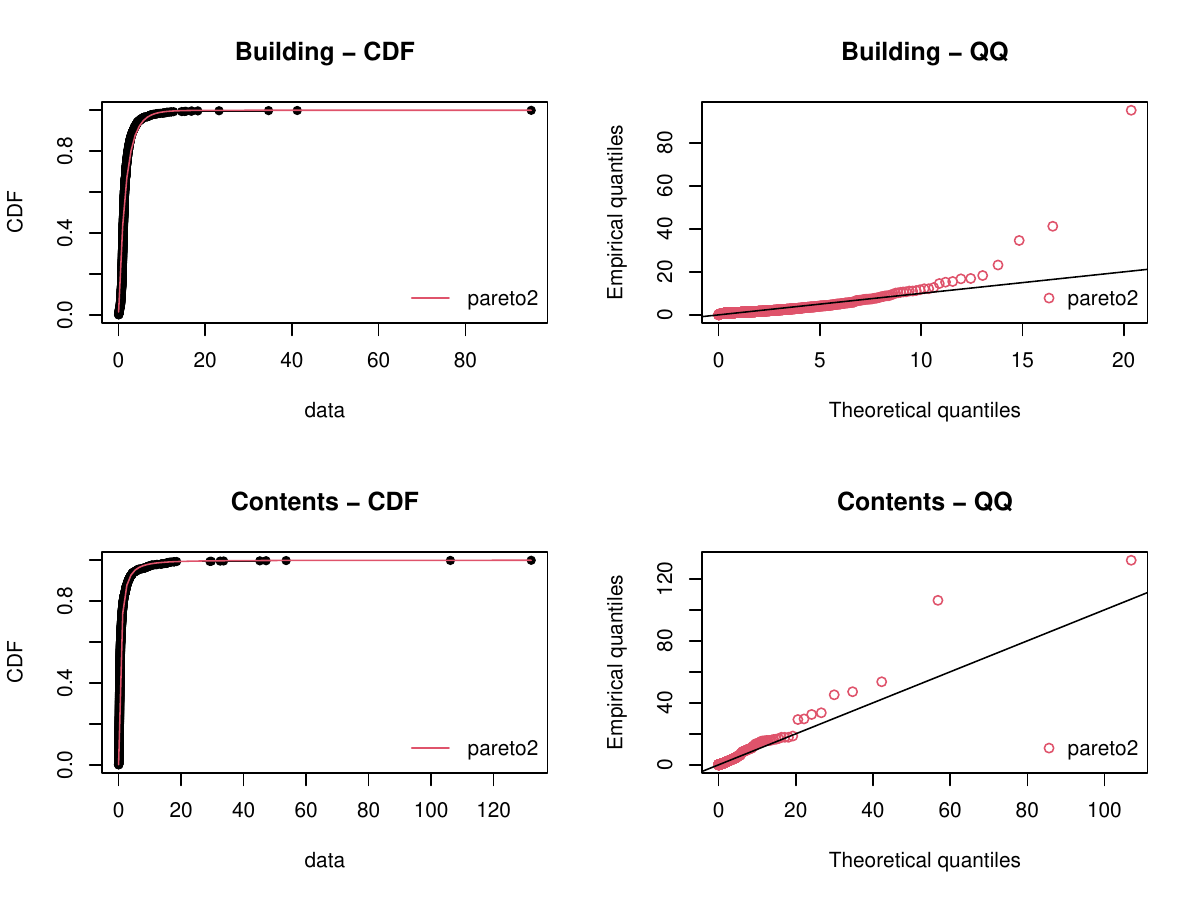}
    \caption{Q-Q plots for Building and Contents losses with fitted Lomax distributions}
    \label{fig:QQ}
\end{figure}
After determining the marginal distributions, we apply the probability integral transformation to convert the two loss series into pseudo‑observations on $\left[0,1\right]$, and then model the bivariate dependence structure. We adopt the FGM dependence structure to describe the weak contagion effect between building and content losses. Figure \ref{fig:FGM} displays the actual pseudo‑observations along with points simulated from the fitted FGM dependence structure; the scatter patterns are broadly consistent, further supporting the appropriateness of the assumed dependence structure.
\begin{figure}[htbp]
    \centering
    \includegraphics[width=0.7\textwidth]{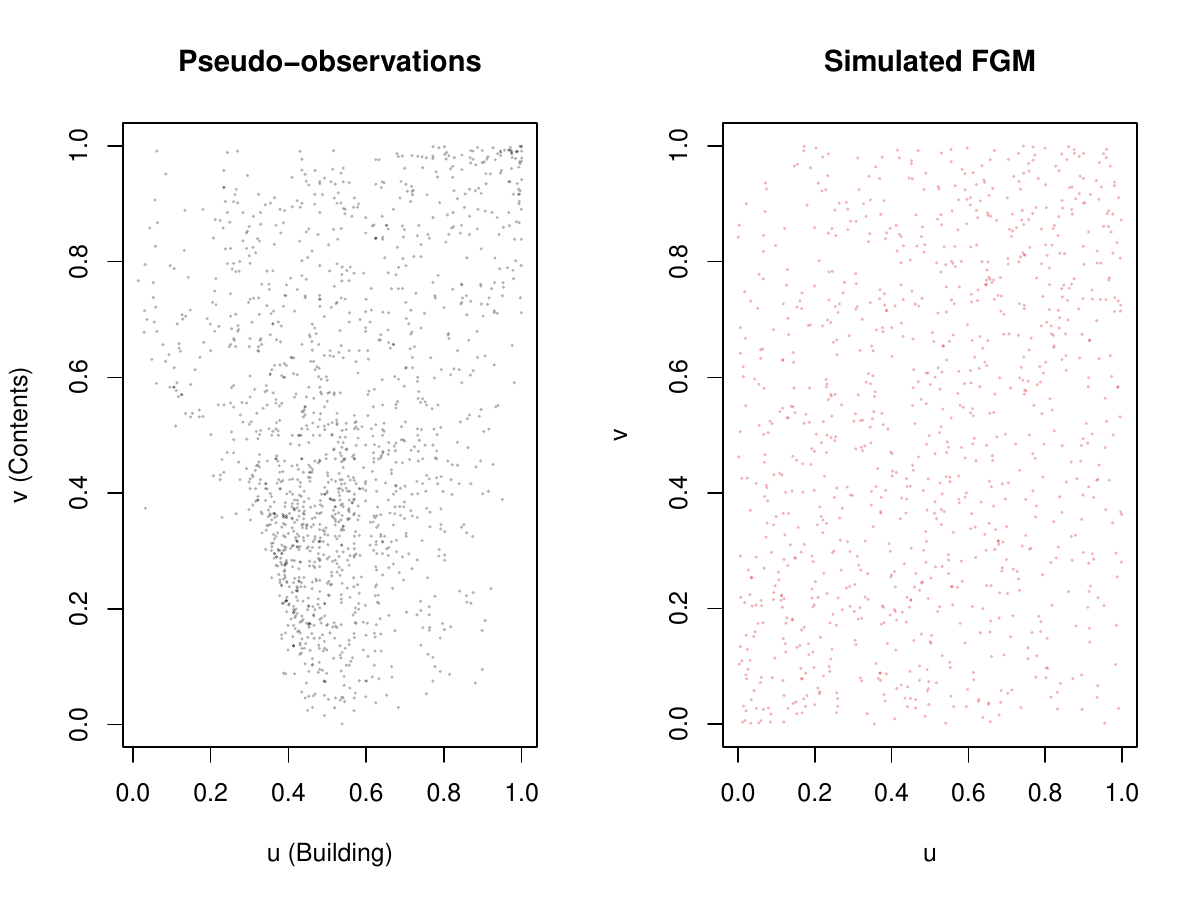}
    \caption{Pseudo-observations from data (left) and simulated FGM dependence structure (right)}
    \label{fig:FGM}
\end{figure}

Under the fully parametric $\mathrm{Lomax}$–$\mathrm{FGM}$ model framework described above, we employ three approaches to obtain estimates of $\mathrm{CoHM}$ in order to achieve stable and accurate risk measures in finite samples and at extreme confidence levels. On the basis of the estimated marginal parameters and the FGM copula parameter, we generate a large simulated dataset of size $10^6$. Based on the simulated sample, for given \(p, q, k\), we first compute the sample quantile \(y_p\) of \(Y\) and select the subsample satisfying \(Y > y_p\), whose size is denoted by \(n_p\). Then we define the objective function
\[
h(x) = x + \frac{1}{1-q} \left( \frac{1}{n_p} \sum_{i: Y_i > y_p} (X_i - x)_+^k \right)^{1/k},
\]
and find its minimizer on the interval \([0, x_{\mathrm{upper}}]\) via one-dimensional optimization (e.g., using the \texttt{optimize} function of \texttt{R}), where \(x_{\mathrm{upper}}\) is taken as the quantile of the conditional sample \(X \mid Y > y_p\). The resulting minimum value is the empirical $\mathrm{CoHM}$ estimator. Furthermore, we adopt an extrapolation estimator proposed by \citet{LiuYi2025} to address the issue of sparse extreme observations at very high confidence levels. This method effectively circumvents numerical obstacles while maintaining theoretical consistency.
Based on Theorem \ref{thm:main} and Example \ref{ex:example1}, we directly computed the first-order and second-order theoretical asymptotic values of $\mathrm{CoHM}$. In the computation, three typical cases of the higher-order moment order \(k = 1, 1.5, 2\) are considered, with the confidence level \(q\) ranging from 0.95 to 0.999. From Table \ref{tab:cohm_comparison for heavy} we can see second-order asymptotic outperforms the first-order one and agrees with the extrapolated estimates over most intervals, confirming the practical utility of the asymptotic formula.
\begin{figure}[htbp]
    \centering
    \begin{subfigure}[b]{0.45\textwidth}
        \includegraphics[width=\textwidth]{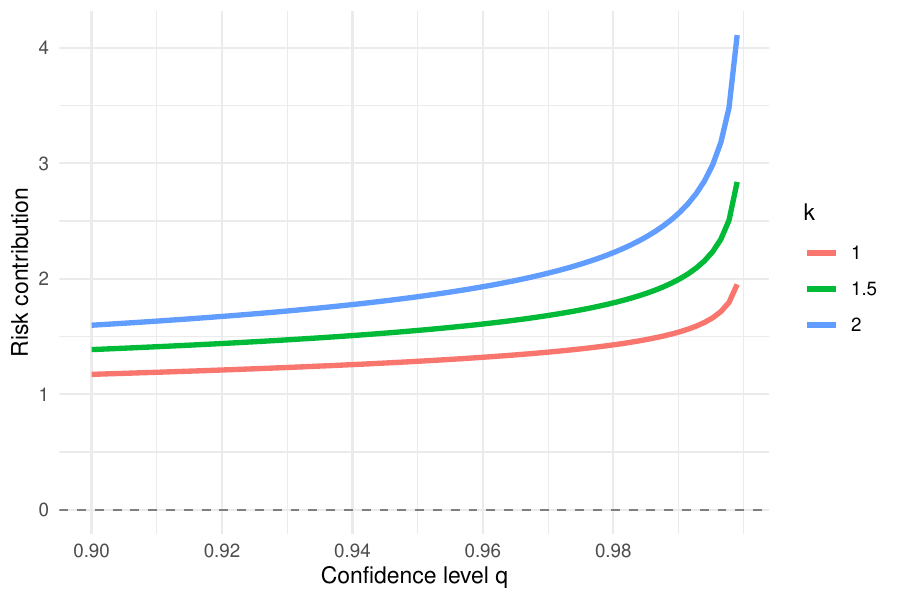}
        \caption{$p=0.9$}
        \label{fig:risk_q}
    \end{subfigure}
    \hfill
    \begin{subfigure}[b]{0.45\textwidth}
        \includegraphics[width=\textwidth]{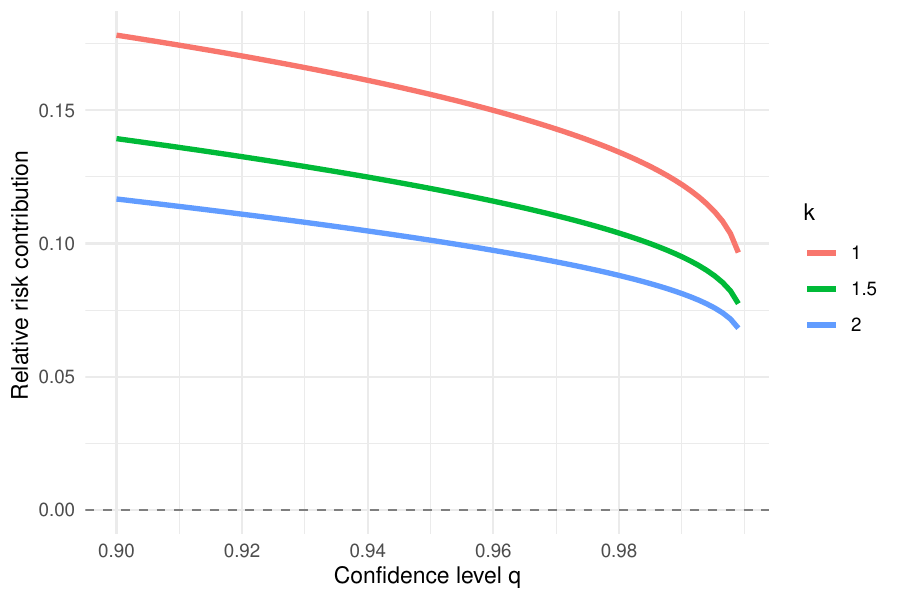}
        \caption{$p=0.9$}
        \label{fig:rel_q}
    \end{subfigure}
    \par\medskip
    \begin{subfigure}[b]{0.45\textwidth}
        \includegraphics[width=\textwidth]{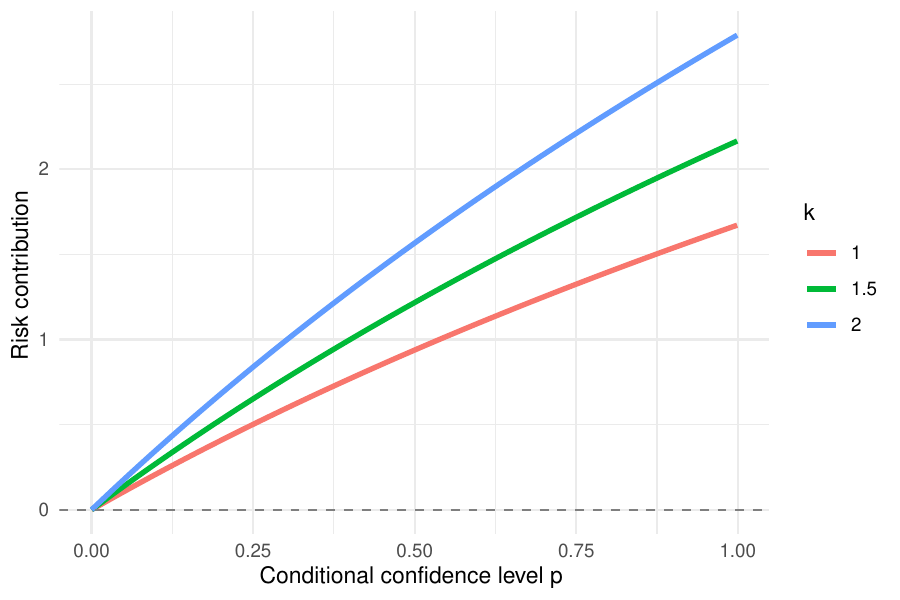}
        \caption{$q=0.99$}
        \label{fig:risk_p}
    \end{subfigure}
    \hfill
    \begin{subfigure}[b]{0.45\textwidth}
        \includegraphics[width=\textwidth]{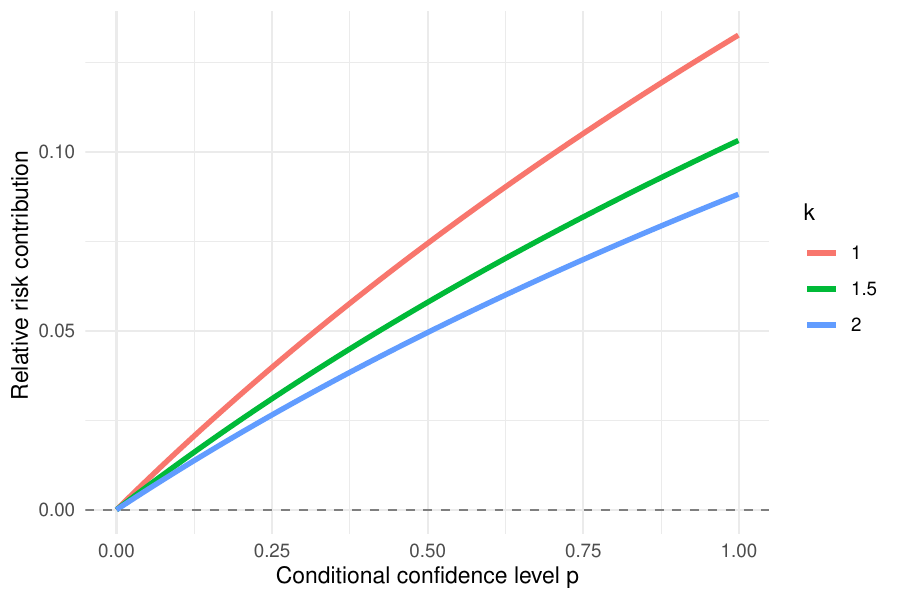}
        \caption{$q=0.99$}
        \label{fig:rel_p}
    \end{subfigure}
    \caption{Risk contribution and relative risk contribution under the second-order $\mathrm{CoHM}$: variations with $q$ and $p$ for selected $k$}
    \label{fig:combined}
\end{figure}

To grasp the insurance meaning of $\mathrm{CoHM}$ risk measure, we consider two configurations: Setting A, in which building loss is taken as the primary risk variable $X$ and content loss as the conditioning variable $Y$ (denoted as Building $\leftarrow$ Contents), and Setting B, in which content loss acts as the primary risk variable $X$ while building loss serves as the conditioning variable $Y$ (Contents $\leftarrow$ Building). From Table \ref{tab:cohm_compare for p} we obtain in Setting A, the $\mathrm{CoHM}$ exhibits a relatively flat slope with respect to the confidence level $p$. This indicates that, even when content losses become extremely severe, the additional increase in extreme building losses remains limited. Such a decoupling suggests that building structures inherently possess an effective upper bound on losses during extreme fires; consequently, insurers may adopt relatively conservative capital reserves for building risk even under worsening contents‑loss scenarios.
In contrast, Setting B renders the $\mathrm{CoHM}$ considerably more sensitive to $p$, causing it to rise more sharply. Once building damage reaches an extreme level, the contents are almost inevitably destroyed, necessitating a substantial upward revision of risk capital for contents losses. 
\begin{table}[htbp]
\centering
\small
\renewcommand{\arraystretch}{0.8}
\caption{Comparison of the empirical and asymptotic $\mathrm{CoHM}$ estimates($p=0.9$) for heavy-tailed case}
\label{tab:cohm_comparison for heavy}
\begin{tabular*}{\textwidth}{@{}l@{\extracolsep{\fill}}cccccc@{}}
\toprule
$k$ & $q$ & Emp & First & Err$_{1}$ (\%) & Second & Err$_{2}$ (\%) \\
\midrule
\multirow{4}{*}{1.0} & 0.95 & 9.60 & 7.69 & 19.87 & 9.53 & 0.71 \\
                     & 0.97 & 10.97 & 9.08 & 17.29 & 10.91 & 0.56 \\
                     & 0.99 & 14.19 & 12.30 & 13.32 & 14.13 & 0.41 \\
                     & 0.995 & 16.41 & 14.52 & 11.52 & 16.35 & 0.36 \\
\midrule
\multirow{4}{*}{1.5} & 0.95 & 14.63 & 12.21 & 16.55 & 14.42 & 1.41 \\
                     & 0.97 & 17.13 & 14.70 & 14.20 & 16.91 & 1.28 \\
                     & 0.99 & 23.23 & 20.76 & 10.63 & 22.97 & 1.11 \\
                     & 0.995 & 27.63 & 25.14 & 9.04 & 27.35 & 1.04 \\
\midrule
\multirow{4}{*}{2.0} & 0.95 & 20.58 & 17.55 & 14.70 & 20.07 & 2.47 \\
                     & 0.97 & 24.61 & 21.53 & 12.51 & 24.05 & 2.29 \\
                     & 0.99 & 34.87 & 31.64 & 9.24 & 34.16 & 2.02 \\
                     & 0.995 & 42.62 & 39.30 & 7.81 & 41.81 & 1.90 \\
\bottomrule
\end{tabular*}
\end{table}

\begin{table}[htbp]
\centering
\caption{Dependence of second-order CoHM on the conditioning comfidence level \(p\) with \(q=0.99\) and \(k=1.5\).}
\label{tab:cohm_compare for p}
\begin{tabular}{lccccccc}
\toprule
\multicolumn{1}{c}{$p$} & 0.00 & 0.18 & 0.36 & 0.54 & 0.73 & 0.91 & 0.99 \\
\midrule
$\mathrm{CoHM}$(Setting A) & 20.97 & 21.46 & 21.89 & 22.29 & 22.65 & 22.98 & 23.14 \\
$\mathrm{CoHM}$(Setting B) & 216.95 & 233.19 & 248.62 & 263.35 & 277.47 & 291.07 & 297.69 \\
\bottomrule
\end{tabular}
\end{table}
As shown in Figure \ref{fig:combined}, the risk contribution is monotone increasing in confidence level $q$ and relative confidence level $p$ and the relative risk contribution is  concentrated within the interval $\left[0,0.2\right]$. And this result is consistent with that in section \ref{rem3.4}. From the practical meaning of risk contribution and relative risk contribution we obtain that neglecting the dependence (assuming  the primary risk variable $X$ and the conditioning variable $Y$  are independent) would lead to an underestimation of the overall risk, particularly at an extreme confidence level, which highlights the importance of incorporating content loss into building loss assessment.

These contrasting sensitivities provide a foundation for differentiated pricing: for properties with high fire‑resistant construction, the parameters of Setting A can be invoked to justify conservative pricing, whereas for properties with a high proportion of flammable contents, the extreme scenarios captured by Setting B warrant closer attention and higher capital allocations.

\section{Conclusion}\label{Conclusion}
This paper systematically investigates the second-order asymptotic theory of the $\mathrm{CoHM}$ coherent risk measure under the FGM copula that models weak tail contagion between primary loss and reference risk factors. From a theoretical perspective, we adopt EVT and second-order regular variation techniques to derive explicit refined asymptotic expansions for $\mathrm{CoHM}$ when the confidence level $q\uparrow1$, covering three canonical MDAs for marginal loss distributions.
The derived second-order expansions integrate tail curvature parameters, FGM weak contagion intensity and finite quantile bias terms, which capture the joint impact of marginal tail characteristics and cross-risk dependence. In addition, we formally decompose absolute and relative systemic risk contributions $\Delta \mathrm{CoHM}$ and $\Delta\mathrm{CoHM^R}$ into leading first-order components and two separate second-order correction terms, respectively reflecting tail curvature effects and finite-probability interaction of FGM contagion. Finally,  we further generalize the bivariate $\mathrm{CoHM}$ to the multi-conditional $\mathrm{MCoHM}$ risk measure accommodating multiple systemic reference risks, and demonstrate that its second-order asymptotic structure only requires a modified dependence parameter without substantial re-derivation of core formulae.


In terms of simulation and empirical studies, numerical Monte Carlo experiments based on Lomax (Fréchet), Beta (Weibull) and lognormal (Gumbel) distributions consistently validate the superiority of our second-order formulae. Across all tail types and extreme confidence levels, second-order asymptotics drastically cut estimation errors relative to conventional first-order approximations, especially for high quantiles where first-order expansions exhibit severe under/overestimation biases. Two real insurance datasets are utilized for empirical validation: light-tailed Australian motor insurance claims following Gamma distribution and heavy-tailed Danish fire loss data fitted by Lomax margins. Parametric estimation confirms that the FGM copula adequately captures weak tail contagion between distinct insurance business lines. Comparative error analysis reveals that second-order $\mathrm{CoHM}$ estimators yield substantially lower relative errors at high quantiles, whereas first-order approximations exhibit severe under/overestimation biases. Moreover, empirical $\mathrm{CoHM}$ indicators quantify heterogeneous systemic spillover effects across loss categories, providing quantitative evidence for differentiated capital reserve allocation and cross-line reinsurance pricing in insurance risk management practice.

From practical risk management perspectives, this work provides an accurate, tractable analytical approximation for conditional tail risk under weak systemic contagion, avoiding costly large-scale Monte Carlo simulation when evaluating extreme loss scenarios for insurers and financial institutions. For regulatory purposes, the scale-invariant relative risk contribution index offers a stable benchmark to quantify the systemic vulnerability of individual risk exposures in the midst of market-wide catastrophes. This research bears clear limitations. The entire theoretical framework is built upon the FGM copula, which only captures weak cross-risk contagion and fails to characterize strong asymptotic dependence common in severe financial crises. Future research directions include extending the second-order asymptotic methodology to general bivariate and multivariate copulas that allow strong tail dependence, relaxing the $q\uparrow 1$ asymptotic regime to simultaneously analyze $(p,q)\rightarrow (1,1)$ joint limits, and developing consistent statistical estimators for the second-order tail and dependence parameters to facilitate real-time systemic risk monitoring.

\appendix
\section{Appendix}\label{Appendix}
\subsection{Some lemmas}
In this sequel, we prepare some lemmas necessary for the proofs of the main results. Lemma \ref{lem:first+0} is attributed to \cite{TangYang2012}. Lemma \ref{lem:first} is due to Exercise 2.11 of \citet{deHaan2006}.  Lemma \ref{lem:first+1} can be easily derived by an inequalities in \cite{TangYang2014}.
Lemma \ref{lem:fifth} is a slight modification of Theorems 4.1 and 4.5 in \cite{MaoHu2012}, which are crucial to the proof of the main results.
Lemmas \ref{lem:fourth}-\ref{lem:eighth} verify the second-order regularly varying properties of the conditional distribution of $X$  conditioned on $\{Y>y_p\}$ with $y_p=\mathrm{VaR}_p(Y)$ when the bivariate risk vector \((X,Y)\)  satisfies the FGM dependence given by (\ref{eq:9}). 

In addition,  in what follows, we denote by \(F_{X|Y>y_p}\) the conditional distribution of $X$  conditioning on $\{Y>y_p\}$ with $y_p=\mathrm{VaR}_p(Y)$ for any $p\in(0,1)$,  and by \(U_{X|Y>y_p}\) the tail quantile function of $\overline{F}_{X|Y>y_p}$ by convention, that is, 
\begin{equation}
U_{X|Y>y_p}(t)=\left(\frac{1}{\overline F_{X|Y>y_p}}\right)^{\leftarrow}\left(t\right)=F_{X|Y>y_p}^{\leftarrow}\left(1-\frac{1}{t}\right).\nonumber
\end{equation}

\begin{lemma}\label{lem:first+0}
Let $F$ on $\mathbb{R}$ belong to the MDA of a non-degenerate distribution function, then
\begin{equation}
\overline{F}(F^{\leftarrow}(q)) \sim 1-q .\nonumber
\end{equation}
\end{lemma}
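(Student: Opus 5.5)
The plan is to sandwich $\overline{F}(F^{\leftarrow}(q))$ between $1-q$ and a left limit of $\overline{F}$, and then to show that membership in an MDA forces the ratio of the left and right tail values to tend to $1$. Write $x_q=F^{\leftarrow}(q)$. By right-continuity of $F$ and the definition of the generalized inverse, $F(x_q)\ge q$, while $F(x)<q$ for every $x<x_q$, so $F(x_q-)\le q$. Hence
\[
\overline{F}(x_q)\le 1-q\le \overline{F}(x_q-),
\]
and it suffices to prove that
\[
\lim_{x\uparrow\hat{x}}\frac{\overline{F}(x-)}{\overline{F}(x)}=1 .
\]
We also note that $x_q\uparrow\hat{x}$ as $q\uparrow 1$. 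Moreover, $x_q<\hat{x}$ eventually: an atom at $\hat{x}$ would give $\overline{F}(\hat{x}-)/\overline{F}(x)$ bounded away from $1$ along $x\uparrow\hat x$, which is impossible for any MDA, and the limit relation below excludes it. So $\overline F(x_q)>0$ and the ratio is well defined.

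The key step is the limit relation, which I will verify separately in each of the three cases using the characterizations recalled in Subsection \ref{subsection First-order asymptotics}. The common device is to bound $\overline{F}(x-)$ by $\overline{F}$ evaluated at a point slightly below $x$, on the scale dictated by the domain of attraction, and then let that perturbation shrink.

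\emph{Fr\'echet case.} Here $\hat x=\infty$ and $\overline{F}\in\mathrm{RV}_{-\alpha}$. For any $y\in(0,1)$ and large $x$, monotonicity gives $\overline{F}(x)\le\overline{F}(x-)\le\overline{F}(xy)$. Therefore
\[
1\le\liminf\frac{\overline F(x-)}{\overline F(x)}\le\limsup\frac{\overline{F}(x-)}{\overline{F}(x)}\le y^{-\alpha},
\]
and letting $y\uparrow1$ gives the limit $1$.

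\emph{Weibull case.} Apply the same argument to $t\mapsto\overline{F}(\hat{x}-1/t)\in\mathrm{RV}_{-\alpha}$. A point $x=\hat{x}-1/t$ has $x-$ dominated by $\hat{x}-1/(ty)$ with $y<1$, which bounds the $\limsup$ by $y^{-\alpha}$; again let $y\uparrow1$.

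\emph{Gumbel case.} Use (\ref{eq:6}) with $y=-\varepsilon$. Since $a(x)>0$, we have $\overline{F}(x-)\le\overline{F}(x-\varepsilon a(x))$, so the $\limsup$ of the ratio is at most $e^{\varepsilon}$; then let $\varepsilon\downarrow0$.

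Combining the limit relation with the sandwich yields
\[
1\le\frac{1-q}{\overline{F}(x_q)}\le\frac{\overline{F}(x_q-)}{\overline{F}(x_q)}\to1,
\]
which is the claim. The only delicate point is the bookkeeping at the endpoint, namely ensuring that $x_q<\hat{x}$ and $\overline F(x_q)>0$ for $q$ near $1$. This follows because an atom at $\hat{x}$, or a jump of non-vanishing relative size near $\hat{x}$, contradicts each of the three limit characterizations. Alternatively, the whole limit relation can be cited as the classical necessary condition $\overline F(x)/\overline F(x-)\to1$ for a distribution to lie in some MDA, as in Leadbetter et al.
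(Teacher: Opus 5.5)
Your proof is correct in its main line. It is the standard argument: the sandwich $\overline F(x_q)\le 1-q\le \overline F(x_q-)$, combined with the fact that the relative jumps of $\overline F$ vanish near $\hat x$. The paper gives no argument of its own and simply attributes the lemma to Tang and Yang (2012). What you add is a self-contained verification of $\overline F(x-)\sim\overline F(x)$, done case by case from the three characterizations recalled in Section 2, rather than quoted once for all domains from the classical Poisson-approximation criterion. Your Fr\'echet, Weibull and Gumbel bounds are all fine, and this route uses nothing beyond characterizations the paper already states.

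The one step that is justified incorrectly is exactly the one you flag as delicate. If $\hat x<\infty$ and $\pi=P(X=\hat x)>0$, then $\overline F(x)\downarrow\pi$ as $x\uparrow\hat x$ and $\overline F(\hat x-)=\pi$. So $\overline F(\hat x-)/\overline F(x)\to 1$; it is not bounded away from $1$.

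For the same reason, the relation $\overline F(x-)/\overline F(x)\to1$, which you prove only along $x<\hat x$, cannot exclude the atom. A mixture of the uniform law on $[0,1)$ with an atom of weight $\pi$ at $1$ satisfies it. Yet for that $F$, $x_q=\hat x$ and $\overline F(x_q)=0$ for all $q>1-\pi$, so the lemma fails.

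The exclusion has to come from MDA membership itself. Two ways to get it:
\begin{itemize}
\item Directly: $F^n(c_nx+d_n)\in[0,(1-\pi)^n]\cup\{1\}$, so any limit would be degenerate.
\item Within your case analysis: in the Weibull case, $\overline F(\hat x-1/t)\in\mathrm{RV}_{-\alpha}$ with $\alpha>0$ forces $\overline F(\hat x-1/t)\to0$, i.e.\ $\pi=0$. In the Gumbel case with $\hat x<\infty$ and $y>0$, the ratio in (\ref{eq:6}) equals $0$ when $x+ya(x)\ge\hat x$ and lies in $[\pi/\overline F(x),1]$ otherwise, with $\pi/\overline F(x)\to1$. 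So it cannot converge to $e^{-y}$.
\end{itemize}
Once $\pi=0$ is established, $F(x)\uparrow 1$ as $x\uparrow\hat x$ gives $x_q<\hat x$ and $\overline F(x_q)>0$ for every $q<1$, and your sandwich finishes the proof.
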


\begin{lemma}\label{lem:first}
If $\overline F\in\mathrm{2RV}_{-\alpha,\beta}$ for some $\alpha>0$ and $\beta\leq0$ with an auxiliary function $A(\cdot)$, then
\begin{equation}
{F}(F^{\leftarrow}(q)) = q\left(1+o(A(F^{\leftarrow}(q)))\right),\nonumber
\end{equation}
and 
\begin{equation}
{U}\left(1/\overline F(t)\right) = t\left(1+o(A(t))\right).\nonumber
\end{equation}
\end{lemma}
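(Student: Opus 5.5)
The statement to prove is Lemma \ref{lem:first}, which has two parts. The plan is to reduce both parts to one fact: inside the tail, $\overline F$ and $U$ are essentially exact inverses, and the error is of smaller order than anything produced by the second-order condition.

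\emph{First part.} Put $x_q=F^{\leftarrow}(q)$. Since $\overline F\in 2\mathrm{RV}_{-\alpha,\beta}$ with $\alpha>0$, the first-order condition gives $F\in\mathrm{MDA}(\Phi_\alpha)$. Hence $x_q\to\infty$ as $q\uparrow1$. By right-continuity, $F(x_q)\ge q$, and for every $\epsilon\in(0,1)$ we have $F(x_q(1-\epsilon))\le q$. This gives the sandwich
\[
1\le\frac{\overline F(x_q(1-\epsilon))}{\overline F(x_q)}\quad\text{and}\quad \frac{\overline F(x_q)}{1-q}\ge\frac{\overline F(x_q)}{\overline F(x_q(1-\epsilon))}.
\]
Applying the second-order expansion at $t=x_q$ and $x=1-\epsilon$ gives
\[
\frac{\overline F(x_q(1-\epsilon))}{\overline F(x_q)}=(1-\epsilon)^{-\alpha}\bigl(1+O(A(x_q))\bigr).
\]
Letting $\epsilon\downarrow0$ along a sequence $\epsilon=\epsilon_q\to0$ with $\epsilon_q=o(|A(x_q)|)$ gives $\overline F(x_q)/(1-q)\ge 1-o(A(x_q))$. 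The trivial bound is $\overline F(x_q)\le 1-q$. Together these give $\overline F(x_q)=(1-q)(1+o(A(x_q)))$.

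This is where the main obstacle lies. With $\epsilon$ depending on $q$, we need the uniform version of the second-order relation on a compact neighbourhood of $1$, and we need the remainder to be $o(A(t))$ uniformly in $x\in[1-\epsilon_q,1]$. The uniformity remark after Definition \ref{def:3} supplies exactly this. More precisely, the remainder is $A(t)\bigl(H_{-\alpha,\beta}(x)-H_{-\alpha,\beta}(1)\bigr)+o(A(t))$, and since $H$ is continuous with $H_{-\alpha,\beta}(1)=0$, this is $o(A(t))$ when $x\to1$.

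Finally, $F(x_q)=q$ up to the error $\overline F(x_q)-(1-q)$, which equals $(1-q)\cdot o(A(x_q))$. Since $q\to1$, this error is also $q\cdot o(A(x_q))$, and the first claim follows.

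\emph{Second part.} Set $s=1/\overline F(t)$, so that $U(s)=F^{\leftarrow}(1-\overline F(t))$. I would use the same sandwich. First, $F(t)\ge1-\overline F(t)$ gives $U(s)\le t$. Second, for $x<1$ and $t$ large, continuity-type monotonicity gives $\overline F(tx)>\overline F(t)$; this is strict because $\overline F(tx)/\overline F(t)\to x^{-\alpha}>1$. Hence $U(s)\ge tx$. So $U(s)/t\in[x,1]$ for every $x<1$ with $\overline F(tx)>\overline F(t)$. I would quantify this by taking $x=1-\epsilon_t$ and choosing $\epsilon_t\to0$ with $\epsilon_t=o(|A(t)|)$. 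The strict inequality must still hold for this choice. The expansion
\[
\frac{\overline F(t(1-\epsilon_t))}{\overline F(t)}=1+\alpha\epsilon_t+o(\epsilon_t)+o(A(t))
\]
is not enough on its own, because the error $o(A(t))$ dominates $\alpha\epsilon_t$. I therefore expect to fall back on a cruder argument. Since $U$ is the left-continuous inverse, $U(s)=\inf\{y:\overline F(y)\le \overline F(t)\}$, so $U(s)=t$ unless $\overline F$ is flat on some interval $[y_0,t]$. On any such flat stretch, the regular variation of $\overline F$ forces the stretch to be short. Concretely, if $\overline F(y_0)=\overline F(t)$ with $y_0=t x$, then the uniform second-order relation gives
\[
x^{-\alpha}-1=O(A(t))\cdot\bigl|H_{-\alpha,\beta}(x)\bigr|+o(A(t)).
\]
This forces $|x-1|=o(A(t))$, i.e.\ $U(s)=t(1+o(A(t)))$, as claimed.
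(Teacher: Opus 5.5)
Your proof is correct. It necessarily takes a different route from the paper, which gives no argument and simply attributes the lemma to Exercise 2.11 of de Haan and Ferreira (2006).

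Your self-contained argument rests on one mechanism: the locally uniform second-order relation, evaluated at $x$ near $1$, forces any jump or flat stretch of $\overline F$ at level $t$ to have relative size $o(A(t))$. This makes visible something the citation hides. In the first part you actually prove the relative tail statement $\overline F(F^{\leftarrow}(q))=(1-q)\bigl(1+o(A(F^{\leftarrow}(q)))\bigr)$. That, rather than the literal statement, is what the paper uses: the step right after \eqref{eq:A4} in the proof of Theorem \ref{thm:main} needs $(1-q)^k/\overline F(x_q)=1+o(A(x_q))$. The literal form only controls the absolute error $F(x_q)-q\in[0,1-q]$, and it holds trivially whenever $\beta>-\alpha$ (as in Example \ref{ex:example1}, where $\alpha=5$ and $\beta=-1$). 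On its own it would therefore not justify that step, so your stronger version is the one worth recording.

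Two places to tighten:
\begin{itemize}
\item In the first part you can avoid choosing $\epsilon_q$. Let $x\uparrow 1$ at fixed $t$ in the uniform relation; this gives $\overline F(x_q-)/\overline F(x_q)=1+o(A(x_q))$ directly, and then $\overline F(x_q)\le 1-q\le \overline F(x_q-)$ finishes the argument.
\item In the second part, spell out the bootstrap in four steps. First, first-order regular variation keeps the flat stretch inside a fixed compact such as $[1/2,1]$, so uniformity applies. Second, the resulting $O(A(t))$ bound gives $x\to1$ uniformly over the stretch. Third, continuity of $H_{-\alpha,\beta}$ at $1$ upgrades that bound to $o(A(t))$. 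Fourth, $x^{-\alpha}-1\ge\alpha(1-x)$ turns this into $1-x=o(A(t))$.
\end{itemize}
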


\begin{lemma}\label{lem:first+1}
If $U\in\mathrm{ERV}_{\alpha}$ for some $\alpha\in\mathbb{R}$ with a first-order auxiliary function $a(\cdot)$, then
\begin{equation}
U\left(t\pm0\right)=U(t)+o(a(t)).\nonumber
\end{equation}
\end{lemma}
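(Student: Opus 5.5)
The statement to prove is Lemma \ref{lem:first+1}: if $U\in\mathrm{ERV}_{\alpha}$ with auxiliary function $a(\cdot)$, then $U(t\pm 0)=U(t)+o(a(t))$. The idea is to trap the one-sided limits $U(t\pm0)$ between values $U(tx)$ with $x$ close to $1$. The extended regular variation relation, applied at such $x$, then controls the gap in units of $a(t)$. Note that $U$ is nondecreasing because it is a generalized inverse of the nondecreasing function $1/\overline F$. Hence the one-sided limits $U(t-0)$ and $U(t+0)$ exist for every $t$.

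\textbf{Bracketing.} Fix $\varepsilon\in(0,1)$. Monotonicity gives, for all large $t$,
\[
U\bigl(t(1-\varepsilon)\bigr)\le U(t-0)\le U(t)\le U(t+0)\le U\bigl(t(1+\varepsilon)\bigr).
\]
Subtracting $U(t)$ and dividing by $a(t)>0$, we get
\[
\frac{U(t(1-\varepsilon))-U(t)}{a(t)}\le \frac{U(t-0)-U(t)}{a(t)}\le 0\le \frac{U(t+0)-U(t)}{a(t)}\le \frac{U(t(1+\varepsilon))-U(t)}{a(t)}.
\]

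\textbf{Passing to the limit.} Let $t\to\infty$ and apply (\ref{eq:7}) at $x=1\pm\varepsilon$. This yields
\[
\liminf_{t\to\infty}\frac{U(t-0)-U(t)}{a(t)}\ge \frac{(1-\varepsilon)^\alpha-1}{\alpha},
\qquad
\limsup_{t\to\infty}\frac{U(t+0)-U(t)}{a(t)}\le \frac{(1+\varepsilon)^\alpha-1}{\alpha}.
\]
For $\alpha=0$ the right-hand sides are read as $\log(1\pm\varepsilon)$. Combined with the trivial bounds $\le 0$ and $\ge 0$, both normalized gaps are eventually squeezed into the interval
\[
\left[\frac{(1-\varepsilon)^\alpha-1}{\alpha},\ \frac{(1+\varepsilon)^\alpha-1}{\alpha}\right].
\]
The function $x\mapsto (x^\alpha-1)/\alpha$ is continuous at $x=1$ and vanishes there. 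Letting $\varepsilon\downarrow0$ therefore gives $U(t\pm0)-U(t)=o(a(t))$.

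\textbf{Where care is needed.} The only delicate point is justifying the use of (\ref{eq:7}) pointwise at $x=1\pm\varepsilon$. This is immediate from Definition 3.2, which requires convergence for every $x>0$. Uniform convergence on compacts, noted in the remark following Definition \ref{def:3}, is not needed. We also use that $a(t)>0$ and that $U$ is monotone, so no further regularity of $a$ or $U$ enters. Since the argument treats $\alpha=0$ through the logarithmic convention, the Gumbel case is covered as well.
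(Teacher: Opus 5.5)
Your proof is correct, but it takes a different route from the paper. The paper gives no argument of its own and simply defers to an inequality for ERV functions (Lemma 2.3 of \citet{TangYang2014}). You instead use only two ingredients: monotonicity of the tail quantile function $U$, and the pointwise limit (\ref{eq:7}) at the two points $x=1\pm\varepsilon$. With these you squeeze $U(t\pm0)$ between $U(t(1-\varepsilon))$ and $U(t(1+\varepsilon))$, so no uniform convergence, Potter-type bound or external lemma is needed.

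A uniform ERV inequality of the kind the paper cites buys generality. It controls $U(s)-U(t)$ for all $s$ near $t$ without using monotonicity, so it would give the conclusion for any ERV function whose one-sided limits exist.

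Your route buys transparency and, read verbatim, a stronger conclusion. If $t'/t\to1$, then eventually $t(1-\varepsilon)\le t'\le t(1+\varepsilon)$, and the same squeeze gives $U(t')-U(t)=o(a(t))$. That is exactly the form needed where the lemma is invoked in the proof of Theorem~\ref{thm:third}. There $(1-q)^{-k}$ and $1/\overline F(x_q)$ are only known to be asymptotically equivalent (via $\overline F(x_q)\sim(1-q)^k$), not literally related as $t$ and $t\pm0$.
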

\begin{proof}The proof follows directly from Lemma 2.3 in \cite{TangYang2014}.
\end{proof}

\begin{lemma}\label{lem:fifth}
Let  $X$  be a risk variable distributed by $F$ with $( P(X = \hat{x}) = 0 )$. Denote by $ U(t) $ the tail quantile function of $F$ as defined in (\ref{eq:quantile}). Let $ \phi(t) = t^k $ for some $ k \geq 1 $.

\begin{enumerate}
\item[(i)] (The Fr\'echet case) If $ U \in 2\mathrm{RV}_{\gamma, \rho} $ for some $ 0 < \gamma < \frac{1}{k} $  and $ \rho \leq 0 $ with an auxiliary function $ A(\cdot) $, then, as $ q \uparrow 1 $,
\[
\mathrm{HG}_q[X] = d_1 F^{\leftarrow}(q) \left[ 1 + \widetilde{\mathcal{J}}_{\gamma, \rho, k}^{(1)} A\left(\frac{1}{1-q}\right)(1+o(1))\right]
\]
with \[ d_1 = \frac{k^\gamma (1 - k\gamma)^{k\gamma - 1}}{(k\gamma)^{k\gamma}} \left( B\left(\frac{1}{\gamma} - k, k\right)\right)^\gamma \] 
and
\[
\widetilde{\mathcal{J}}_{\gamma, \rho, k}^{(1)} = \frac{1}{\rho} \left[ \left( \frac{1-k\gamma}{k\gamma} \right)^{k\rho} k^{\rho} \left(B\left(\frac{1}{\gamma} - k, k\right)\right)^{\rho-1} B\left(\frac{1-\rho}{\gamma} - k, k\right) - 1 \right].
\]
\item[(ii)] (The Weibull case) If $ \hat{x} - U \in 2\mathrm{RV}_{\gamma, \rho} $ for some $ \gamma < 0 $ and $ \rho \leq 0 $ with an auxiliary function $ A(\cdot) $, then, as $ q \uparrow 1 $,
\[
\hat{x} - \mathrm{HG}_q[X] = d_2 (\hat{x} - F^{\leftarrow}(q)) \times \left[ 1 + \widetilde{\mathcal{J}}_{\gamma, \rho, k}^{(2)} A\left(\frac{1}{1-q}\right)(1+o(1))\right]
\]
with 
\[
d_2 = \frac{(1 - k\gamma)^{(k-1)\gamma - 1}}{k^{(k-1)\gamma}(-\gamma)^{k\gamma}} \left( B\left(-\frac{1}{\gamma}, k\right) \right)^{\gamma}
\] 
and
\[
\widetilde{\mathcal{J}}_{\gamma, \rho, k}^{(2)} = \frac{1}{\rho} \left[ \left( \frac{1-k\gamma}{-k\gamma} \right)^{(k-1)\rho} (-\gamma)^{-\rho} \frac{(1-\rho)(1-k\gamma)}{1-\rho-k\gamma} \left(B\left(\frac{-1}{\gamma} , k\right)\right)^{\rho-1} B\left(\frac{\rho-1}{\gamma} , k\right) - 1 \right].
\]
\item[(iii)](The Gumbel case)
If \( U \in \mathrm{ERV}_0 \) with an auxiliary function \( a(\cdot) \), then, as \( q \uparrow 1 \),

\[
\mathrm{HG}_q[X] = F^{\leftarrow}(q) + a\left(\frac{1}{1-q}\right) \lambda_k \cdot (1 + o(1))
\]
for \( \hat{x} = \infty \), and
\[
\hat{x} - \mathrm{HG}_q[X] = \hat{x} - F^{\leftarrow}(q) - a\left(\frac{1}{1-q}\right) \lambda_k \cdot (1 + o(1))
\]
for \( \hat{x} < \infty \), where $\lambda_k$ is defined in Theorem \ref{thm:third}.
\end{enumerate}
\end{lemma}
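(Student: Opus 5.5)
The plan is to reduce all three parts to the Orlicz-quantile characterization of the HG measure with Young function $\phi(t)=t^k$, which is what Theorems 4.1 and 4.5 of \citet{MaoHu2012} do. By Remark \ref{rem:HM-HG},
\[
\mathrm{HG}_q[X]=\inf_{x}\Bigl\{x+(1-q)^{-1/k}\bigl(E[(X-x)_+^k]\bigr)^{1/k}\Bigr\}.
\]
The minimizer $x^*_q<\hat x$ is characterized by
\[
\frac{\bigl(E[(X-x)_+^{k-1}]\bigr)^k}{\bigl(E[(X-x)_+^k]\bigr)^{k-1}}=1-q .
\]
Put $t=1/(1-q)$. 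Since $P(X=\hat x)=0$, I would use the quantile representation
\[
E[(X-x)_+^m]=\int_0^{\overline F(x)}\bigl(U(1/u)-x\bigr)^m\,du,\qquad m=k-1,k .
\]
I would then parametrize the candidate threshold as $x=U(ts)$ with $s$ bounded away from $0$ and $\infty$ (Fréchet/Weibull), or as $x=U(t)+a(t)z$ (Gumbel). Lemma \ref{lem:first} and Lemma \ref{lem:first+1} allow replacing $\overline F(U(ts))$ by $1/(ts)$ up to an $o(A)$ or $o(a)$ error.

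In the Fréchet case, substitute $u=v/(ts)$ to get
\[
E[(X-x)_+^m]=\frac{U(ts)^m}{ts}\int_0^1\Bigl(\frac{U(ts/v)}{U(ts)}-1\Bigr)^m dv .
\]
Inside the integral I would expand
\[
\frac{U(ts/v)}{U(ts)}=v^{-\gamma}\Bigl(1+A(ts)\,\frac{v^{-\rho}-1}{\rho}\,(1+o(1))\Bigr).
\]
The integrals then become Beta functions $B(\frac1\gamma-m,m+1)$ and their $\rho$-shifted versions; this is where $k\gamma<1$ enters. Inserting these into the first-order condition gives $s^*=s_0+O(A(t))$, where $s_0$ is explicit; this produces the powers $\bigl(\frac{1-k\gamma}{k\gamma}\bigr)$ in $d_1$.

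The key simplification is an envelope argument. The objective is stationary at $x^*$, so an $O(A)$ error in $s^*$ changes the value only at order $A^2=o(A)$. Hence it suffices to evaluate the objective at $x=U(ts_0)$ with the second-order expansion of the integrand. The last step is to rewrite $U(ts_0)$ as $s_0^{\gamma}U(t)\bigl(1+A(t)\frac{s_0^\rho-1}{\rho}(1+o(1))\bigr)$ and to replace $U(t)$ by $F^{\leftarrow}(q)$ (Lemma \ref{lem:first}). Collecting constants should yield $d_1$ and $\widetilde{\mathcal J}^{(1)}_{\gamma,\rho,k}$.

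The Weibull case is the same computation applied to $\hat x-U\in2\mathrm{RV}_{\gamma,\rho}$ with $\gamma<0$. The only change is that $U(ts/v)-U(ts)=(\hat x-U(ts))-(\hat x-U(ts/v))$, which gives integrals $\int_0^1(1-v^{-\gamma})^m dv$; these are Beta functions in $-1/\gamma$. This explains the extra factor $\frac{(1-\rho)(1-k\gamma)}{1-\rho-k\gamma}$ in $\widetilde{\mathcal J}^{(2)}$.

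In the Gumbel case, $U\in\mathrm{ERV}_0$ gives
\[
E[(X-x)_+^m]=\frac{a(t)^m\,\Gamma(m+1)\,e^{-z}}{t}\,(1+o(1)).
\]
The first-order condition then reads $\Gamma(k)^k e^{-z}/\Gamma(k+1)^{k-1}=1+o(1)$, so $e^{-z^*}\to k^{k-1}/\Gamma(k)$. Plugging in gives
\[
\mathrm{HG}_q=U(t)+a(t)\Bigl(z^*+\bigl(\Gamma(k+1)e^{-z^*}\bigr)^{1/k}\Bigr)(1+o(1)),
\]
and the bracket simplifies to $\log\Gamma(k+1)-k\log k+k=\lambda_k$. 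The case $\hat x<\infty$ is the same statement written from the endpoint.

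The main obstacle is justifying the termwise expansion inside $\int_0^1$, since $v\downarrow0$ corresponds to an unbounded argument $ts/v$. Uniform convergence on compacts is not enough there. I would first try Drees-type inequalities (Theorem B.2.18 of \citet{deHaan2006}): for second-order regular variation these bound the remainder by $\varepsilon\,v^{-\gamma-\rho-\delta}$, which is integrable against the $m$-th power precisely because $k\gamma<1$. For the Gumbel case I would use the analogous uniform bound for $\mathrm{ERV}_0$ together with dominated convergence. A secondary technical point is locating $s^*$ (resp. $z^*$) a priori in a compact set, which I would obtain from strict monotonicity of the ratio in $x$ together with its first-order limit.
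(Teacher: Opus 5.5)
The paper gives no argument for this lemma; it imports it as a ``slight modification'' of Theorems 4.1 and 4.5 of \citet{MaoHu2012}. Your proposal is a self-contained derivation by a different route, and as far as I can check it is correct.

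I checked that your computation reproduces the stated constants. With $T=ts$ and $I_k(T)=\int_0^1(U(T/v)/U(T)-1)^k\,dv$, the objective at $x=U(ts)$ equals $U(t)g(s,t)$, where $g(s,t)=\frac{U(ts)}{U(t)}(1+s^{-1/k}I_k(ts)^{1/k})$. This gives:
\begin{itemize}
\item $c_k=kB(\frac1\gamma-k,k)$ and $s_0=(\frac{1-k\gamma}{k\gamma})^k c_k$;
\item limit value $s_0^\gamma/(1-k\gamma)=d_1$;
\item $A(t)$-coefficient at $s_0$ equal to $\rho^{-1}[s_0^\rho B(\frac{1-\rho}{\gamma}-k,k)/B(\frac1\gamma-k,k)-1]=\widetilde{\mathcal J}^{(1)}_{\gamma,\rho,k}$.
\end{itemize}
In the Weibull case, $c_k'=\frac{k}{1-k\gamma}B(-\frac1\gamma,k)$ and the value is again $s_0^\gamma/(1-k\gamma)=d_2$. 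The factor $\frac{(1-\rho)(1-k\gamma)}{1-\rho-k\gamma}$ comes from $B(a+1,k)=\frac{a}{a+k}B(a,k)$. In the Gumbel case, $z^*+k=\lambda_k$.

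The real gain of your route is the stationarity (envelope) step. Suppose one instead solves the ratio equation for the Orlicz quantile and substitutes it into a non-stationary representation such as $\mathrm{HG}_q=x^*+E[(X-x^*)_+^k]/E[(X-x^*)_+^{k-1}]$. That argument must carry the $O(A)$ correction of $x^*$, which you never need. The citation buys brevity and offloads the Drees/Potter-type bookkeeping. But it leaves the reader to take the unspecified modification on trust, whereas your computation confirms the constants.

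A few tidy-ups:
\begin{itemize}
\item $U(1/(1-q))=F^{\leftarrow}(q)$ is an identity by (\ref{eq:quantile}). For $m>0$ the formula $E[(X-U(T))_+^m]=T^{-1}\int_0^1(U(T/v)-U(T))^m\,dv$ is exact, since for $u\in(\overline F(U(T)),1/T)$ one has $U(1/u)=U(T)$. So Lemma \ref{lem:first} is not needed at either point.
\item For the envelope step you do not need $s^*-s_0=O(A(t))$. One bound comes from evaluating at $s_0$. The other comes from $g_0(s^*)\ge g_0(s_0)$ together with continuity in $s$ of the $A(t)$-coefficient. So consistency $s^*\to s_0$ suffices.
\item If $F$ has flat stretches, the minimizer need not be of the form $U(ts)$. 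Uniformity of the $2\mathrm{RV}$ limit on $[1,2]$ gives $U(T+)/U(T)=1+o(A(T))$, the analogue of Lemma \ref{lem:first+1}. Combined with the bounded slope of the convex objective on your localization interval, this closes the issue.
\item For $k=1$ the ratio equation degenerates, but there $s_0=1$ directly.
\end{itemize}
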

\begin{lemma}\label{lem:fourth}
 Let \((X,Y)\) be a bivariate risk vector satisfying the FGM dependence given by (\ref{eq:9}) with maginal distributions $F$ and $G$, respectively. Suppose that  $\overline{F} \in 2\mathrm{RV}_{-\alpha, \beta}$ for some \(\alpha > 0 \) and \( \beta \leq 0 \) with an auxiliary function \( A(\cdot)\in \mathrm{RV}_\beta \). Then
$$\overline{F}_{X|Y>y_p} \in\mathrm{2RV}_{-\alpha,\rho}$$
 with an auxiliary function $$B(\cdot)=A(\cdot)+\frac{\alpha rG(y_p)}{1+rG(y_p)}\overline{F}(\cdot)$$ and $\rho=(-\alpha)\vee \beta$.

\begin{proof} By (\ref{eq:9}), one can easily see that
\begin{equation}\label{eq:A1}
\overline{F}_{X|Y>y_p}\left(t\right)=\overline F(t)(1+rG(y_p)F(t))=\left(1+rG(y_p)\right)\overline F(t)\left[1-\frac{rG(y_p)}{1+rG(y_p)}\overline F(t)\right].
\end{equation}
Thus, for any $x>0$, due to $\overline{F} \in 2\mathrm{RV}_{-\alpha, \beta}$, by Taylor expansion, it holds that
\begin{align*}
\frac{\overline{F}_{X|Y>y_p}\left(tx\right)}
{\overline{F}_{X|Y>y_p}\left(t\right)}
&= \frac{\overline{F}\left(tx\right)}
{\overline{F}\left(t\right)}\left[1-\frac{rG(y_p)}{1+rG(y_p)}\overline F(tx)\right]\left[1-\frac{rG(y_p)}{1+rG(y_p)}\overline F(t)\right]^{-1} \\
&= \left(x^{-\alpha} + A(t)x^{-\alpha}\frac{x^\beta-1}\beta 
(1+o(1))\right)
\left(1-\frac{rG(y_p)}{1+rG(y_p)}\overline F(tx)\right)\\
&\quad\times
\left(1+\frac{rG(y_p)}{1+rG(y_p)}\overline F(t)(1+o(1))\right)\\
&= x^{-\alpha}\left(1 + A(t)\frac{x^\beta-1}\beta (1+o(1))\right)
\left(1-\frac{rG(y_p)}{1+rG(y_p)}
\left(\overline{F}\left(tx\right)-\overline{F}\left(t\right)\right)\left(1+o(1)\right)\right)\\
&= x^{-\alpha}\left( 1+ A(t)\frac{x^\beta-1}\beta (1+o(1))\right)
\left(1+ \frac{\alpha rG(y_p)}{1+rG(y_p)}
\frac{x^{-\alpha}-1}{-\alpha}\overline{F}\left(t\right)\left(1+o(1)\right)\right)\\
&= x^{-\alpha}\left(1+A(t)\frac{x^\beta-1}\beta\left(1+o(1)\right)
+ \frac{\alpha rG(y_p)}{1+rG(y_p)}
\frac{x^{-\alpha}-1}{-\alpha}\overline{F}\left(t\right)\left(1+o(1)\right)
\right),
\end{align*}
which leads to the desired result, and this ends the proof.
\end{proof}
\end{lemma}

\begin{lemma}\label{lem:WEIBULL}
Let \((X,Y)\) be a bivariate risk vector satisfying the FGM dependence given by (\ref{eq:9}) with maginal distributions $F$ and $G$, respectively. Suppose that  $\overline{F}\left(\hat{x}-\frac{1}{\cdot}\right) \in 2\mathrm{RV}_{-\alpha, \beta}$ for some \(\alpha > 0 \) and \( \beta \leq 0 \) with an auxiliary function \( A(\cdot)\in \mathrm{RV}_\beta \). Then
$$\overline{F}_{X|Y>y_p}\left(\hat{x}-\frac{1}{\cdot}\right) \in\mathrm{2RV}_{-\alpha,\rho}$$
with an auxiliary function $$B(\cdot)=A(\cdot)+\frac{\alpha rG(y_p)}{1+rG(y_p)}\overline{F}\left(\hat{x}-\frac{1}{\cdot}\right)$$ and $\rho=(-\alpha)\vee \beta$.

\begin{proof} By (\ref{eq:9}), one can easily see that
\begin{align*}
\frac{\overline{F}_{X|Y>y_p}\left(\hat{x}-\frac{1}{tx}\right)}
{\overline{F}_{X|Y>y_p}\left(\hat{x}-\frac{1}{t}\right)}
&= \frac{\overline{F}\left(\hat{x}-\frac{1}{tx}\right)}
{\overline{F}\left(\hat{x}-\frac{1}{t}\right)}\left[1-\frac{rG(y_p)}{1+rG(y_p)}\overline F\left(\hat{x}-\frac{1}{tx}\right)\right]\left[1-\frac{rG(y_p)}{1+rG(y_p)}\overline F\left(\hat{x}-\frac{1}{t}\right)\right]^{-1} \\
&= \left(x^{-\alpha} + A(t)x^{-\alpha}\frac{x^\beta-1}\beta 
(1+o(1))\right)
\left(1-\frac{rG(y_p)}{1+rG(y_p)}\overline F\left(\hat{x}-\frac{1}{tx}\right)\right)\\
&\quad\times
\left(1+\frac{rG(y_p)}{1+rG(y_p)}\overline F\left(\hat{x}-\frac{1}{t}\right)(1+o(1))\right)\\
&= x^{-\alpha}\left(1 + A(t)\frac{x^\beta-1}\beta (1+o(1))\right)\\
&\quad\times\left(1-\frac{rG(y_p)}{1+rG(y_p)}
\left(\overline{F}\left(\hat{x}-\frac{1}{tx}\right)-\overline{F}\left(\hat{x}-\frac{1}{t}\right)\right)\left(1+o(1)\right)\right)\\
&= x^{-\alpha}\left( 1+ A(t)\frac{x^\beta-1}\beta (1+o(1))\right)\\
&\quad\times\left(1+ \frac{\alpha rG(y_p)}{1+rG(y_p)}
\frac{x^{-\alpha}-1}{-\alpha}\overline{F}\left(\hat{x}-\frac{1}{t}\right)\left(1+o(1)\right)\right)\\
&= x^{-\alpha}\left(1+A(t)\frac{x^\beta-1}\beta\left(1+o(1)\right)
+ \frac{\alpha rG(y_p)}{1+rG(y_p)}
\frac{x^{-\alpha}-1}{-\alpha}\overline{F}\left(\hat{x}-\frac{1}{t}\right)\left(1+o(1)\right)
\right),
\end{align*}
which leads to the desired result, and this ends the proof.
\end{proof}
 
\end{lemma}

\begin{lemma}\label{lem:eighth}
 Let \((X,Y)\) be a bivariate risk vector satisfying the FGM dependence given by (\ref{eq:9}) with maginal distributions $F$ and $G$, respectively. Assume that \( F \in \mathrm{MDA}(\Lambda)\) with \( 0<\hat{x}\leq\infty \). 
 Then 
\[F_{{X|Y>y_p}} \in \mathrm{MDA}(\Lambda).\]


\begin{proof} Since \( F \in \mathrm{MDA}(\Lambda)\) with \( 0<\hat{x}\leq\infty \), there exists an auxiliary function $a(\cdot)$ such that (\ref{eq:6}) holds. Thus, for any $y\in\mathbb{R}$, by (\ref{eq:A1}) Taylor expansion,  as $x\uparrow\hat{x}$, it holds that
\begin{align*}
\frac{\overline{F}_{X|Y>y_p}\left(x+ya(x)\right)}
{\overline{F}_{X|Y>y_p}\left(x\right)}
&= \frac{\overline{F}\left(x+ya(x)\right)}
{\overline{F}\left(x\right)}\left[1-\frac{rG(y_p)}{1+rG(y_p)}\overline F(x+ya(x))\right]\left[1-\frac{rG(y_p)}{1+rG(y_p)}\overline F(x)\right]^{-1} \\
&=\frac{\overline{F}\left(x+ya(x)\right)}
{\overline{F}\left(x\right)}
\left(1-\frac{rG(y_p)}{1+rG(y_p)}\overline F(x+ya(x))\right)\\
&\quad\times
\left(1+\frac{rG(y_p)}{1+rG(y_p)}\overline F(x)(1+o(1))\right)\\
&=\frac{\overline{F}\left(x+ya(x)\right)}
{\overline{F}\left(x\right)}
\left(1-\frac{rG(y_p)}{1+rG(y_p)}
\left(\overline{F}\left(x+ya(x)\right)-\overline{F}\left(x\right)\right)\left(1+o(1)\right)\right)\\
&= \frac{\overline{F}\left(x+ya(x)\right)}
{\overline{F}\left(x\right)}
\left(1+ \frac{ rG(y_p)}{1+rG(y_p)}
(e^{-y}-1)\overline{F}\left(x\right)\left(1+o(1)\right)\right),
\end{align*}
which tends to $e^{-y}$ as $x\uparrow\hat{x}$. This leads to the desired result, and this ends the proof.
\end{proof}
\end{lemma}

\subsection{Proof of Theorem \ref{thm:main}}
\begin{proof} First, in view of $\overline F\in \mathrm{2RV}_{-\alpha,\rho}$, it follows from Lemma \ref{lem:fourth} that $\overline{F}_{X|Y>y_p}\in\mathrm{2RV}_{-\alpha,\rho}$ for $\rho=(-\alpha)\vee \beta$ with an auxiliary function $B(\cdot)$ as defined in Lemma \ref{lem:fourth}. For $F\in\mathrm{MDA}(\Phi_\alpha)$, the well-known fact that $\overline F\in\mathrm{2RV}_{-\alpha,\rho}$ is equivalent to $U\in\mathrm{2RV}_{1/\alpha,\rho/\alpha}$ implies that
$$U_{X|Y>y_p} \in 2\mathrm{RV}_{1/\alpha, \rho/\alpha}$$
with an auxiliary function $C(\cdot)=\alpha^{-2}B\left(U_{X|Y>y_p}(\cdot)\right)$.

Next, as stated in Introduction, for any risk variable $Z$, $\mathrm{HG}_q(Z)=\mathrm{HM}_{\tilde{q}}(Z)$ with $\tilde{q}=1-(1-q)^{1/k}$ for $k\geq 1$. Thus, applying the Fr\'echet case of Lemma \ref{lem:fifth} with $Z$ replaced by  the conditional function of $X$ conditioning on $Y>y_p$, by some simple computations, it yields that 
\begin{equation}\label{eq:A2}
\mathrm{CoHM}_{p,q}(X|Y) = c_1 F_{X|Y>y_p}^{\leftarrow}\bigl(1 - (1 - q)^k\bigr)\left[ 1 +  J_{\alpha, \rho, k}^{(1)} 
C\left((1-q)^{-k}\right)(1+o(1))\right] 
\end{equation}
with 
\begin{equation*}
C\left((1-q)^{-k}\right) 
= \frac{1}{\alpha^2} \left(A\left(U_{X|Y>y_p}\left(\frac{1}{(1-q)^k}\right)\right)+\frac{\alpha rG(y_p)} {1+rG(y_p)}\overline{F}\left(U_{X|Y>y_p}\left(\frac{1}{(1-q)^k}\right)\right)\right).\nonumber
\end{equation*}

Then, with the initial conclusion in hand, our next goal is to convert 
\(F_{X|Y>y_p}^{\leftarrow}\bigl(1 - (1 - q)^k\bigr)\) into the form of 
\(F^{\leftarrow}\left(1 - (1 - q)^k\right)\). To do so, we first let 
\begin{equation}
x_q=F^{\leftarrow}\left(1 - (1 - q)^k\right)=U\left(\frac{1}{(1-q)^k}\right).
\nonumber
\end{equation}
One can easily see that $x_q\rightarrow\infty$ as $q\uparrow 1$. By Lemma \ref{lem:first} we further deduce that
\begin{equation}\label{eq:A4}
{F}(x_q)=\left(1-(1-q)^k\right)\left(1+o(A(x_q)\right)=\left(1-(1-q)^k\right)+o(A(x_q)). 
\end{equation}
Combining Definition \ref{def:3} and (\ref{eq:A4}), we derive 
\begin{equation*}
\frac{U_{X|Y>y_p}\left(\frac{1}{(1-q)^k}\right)}{U_{X|Y>y_p}\left(\frac{1}{\overline{F}(x_q)}\right)}
=\left(1+o(A(x_q))\right)^{\frac{1}{\alpha}}\left[ 1+C\left(\frac{1}{\overline{F}(x_q)}\right)o(1)\right],
\end{equation*}
which implies that
\begin{eqnarray}\label{eq:relationship of U}
U_{X|Y>y_p}\left(\frac{1}{(1-q)^k}\right)=U_{X|Y>y_p}\left(\frac{1}{\overline{F}(x_q)}\right)\left(1+o(A(x_q))\right).
\end{eqnarray}
Next, by (\ref{eq:A1}), it follows from (\ref{eq:A4}) that 
\begin{equation}\label{eq:A5}
\frac{\overline{F}_{X|Y>y_p}(x_q)}{\overline{F}(x_q)}
=1+rG\left(y_p)\left[(1-(1-q)^k\right)+o(A(x_q)\right]=m_q+o(A(x_q)).
\end{equation}
Recalling that \(U_{X|Y>y_p} \in 2\mathrm{RV}_{1/\alpha, \rho/\alpha}\) with an auxiliary function $C(\cdot)=\alpha^{-2}B\left(U_{X|Y>y_p}(\cdot)\right))$, thus, by the uniform convergence in any compact subset of $\mathbb{R}^+$ in Definition \ref{def:3} and Lemma \ref{lem:first}, we obtain 
\begin{eqnarray*}
\frac{U_{X|Y>y_p}\left(\frac{1}{\overline{F}(x_q)}\right)}
{U_{X|Y>y_p}\left(\frac{1}{\overline{F}_{X|Y>y_p}(x_q)}\right)}-\left(\frac{\overline{F}_{X|Y>y_p}(x_q)}{\overline{F}(x_q)}\right)^{\frac{1}{\alpha}}=H_{\frac{1}{\alpha},\frac{\rho}{\alpha}}\left(m_1\right)\alpha^{-2}\left(A(x_q)+\frac{\alpha rG(y_p)}{1+rG(y_p)}\overline F(x_q)\right)(1+o(1)),
\end{eqnarray*}
where we used the fact $U_{X|Y>y_p}\left(\frac{1}{\overline{F}_{X|Y>y_p}}(x_q)\right)\sim x_q$  due to Lemma \ref{lem:first}. Setting $
 m_{q}=1+rG(y_p)\left(1-(1-q)^k\right)$ and substituting (\ref{eq:A5}) into the relation above yield that
\begin{eqnarray*}
&\quad&\frac{U_{X|Y>y_p}\left(\frac{1}{\overline{F}(x_q)}\right)}
{U_{X|Y>y_p}\left(\frac{1}{\overline{F}_{X|Y>y_p}(x_q)}\right)}\\
&=&m_q^{1/\alpha}\left(1+\frac{1}{\alpha}m_q^{-1}o(A(x_q))\right)+H_{\frac{1}{\alpha},\frac{\rho}{\alpha}}\left(m_1\right)\alpha^{-2}\left(A(x_q)+\frac{\alpha rG(y_p)}{1+rG(y_p)}\overline F(x_q)\right)(1+o(1))\\
&=&m_q^{1/\alpha}\left(1+\frac{m_1^{\rho/\alpha}-1}{\rho\alpha}\left(A(x_q)+\frac{\alpha rG(y_p)}{1+rG(y_p)}(1-q)^k\right)(1+o(1))\right),
\end{eqnarray*}
where in the first step we used the Taylor expansion. Therefore, observing that 
\begin{eqnarray*}
\frac{F_{X|Y>y_p}^{\leftarrow}\bigl(1 - (1 - q)^k\bigr)}
{F^{\leftarrow}\left(1 - (1 - q)^k\right)}=\frac{U_{X|Y>y_p}\left(\frac{1}{(1-q)^k}\right)}
{x_q}=\frac{U_{X|Y>y_p}\left(\frac{1}{(1-q)^k}\right)}{U_{X|Y>y_p}\left(\frac{1}{\overline{F}(x_q)}\right)}\cdot\frac{U_{X|Y>y_p}\left(\frac{1}{\overline{F}(x_q)}\right)}
{U_{X|Y>y_p}\left(\frac{1}{\overline{F}_{X|Y>y_p}(x_q)}\right)},
\end{eqnarray*}
we conclude that
\begin{eqnarray}\label{eq:A6}
&\quad&\frac{F_{X|Y>y_p}^{\leftarrow}\bigl(1 - (1 - q)^k\bigr)}
{F^{\leftarrow}\left(1 - (1 - q)^k\right)} \nonumber \\
&=&m_q^{1/\alpha}\left(1+\frac{m_1^{\rho/\alpha}-1}{\rho\alpha}\left(A(x_q)+\frac{\alpha rG(y_p)}{1+rG(y_p)}(1-q)^k\right)(1+o(1))\right).
\end{eqnarray}
The relation obtained in (\ref{eq:A6}) above also means that
\begin{equation}\nonumber
\frac{U_{X|Y>y_p}\left(\frac{1}{(1-q)^k}\right)}
{U\left(\frac{1}{(1-q)^k}\right)}\sim m_q^{\frac{1}{\alpha}}
\sim m_1^{\frac{1}{\alpha}}.
\end{equation}
This together with \(A(\cdot)\in \mathrm{RV}_\beta\) and \(\overline{F}\in \mathrm{RV}_{-\alpha}\) yields that
\[\frac{A\left(U_{X|Y>y_p}\left(\frac{1}{(1-q)^k}\right)\right)}
{A\left(U\left(\frac{1}{(1-q)^k}\right)\right)}\sim
\left(\frac{U_{X|Y>y_p}\left(\frac{1}{(1-q)^k}\right)}
{U\left(\frac{1}{(1-q)^k}\right)}\right)^\beta
\sim m_1^{\frac{\beta}{\alpha}}
\]
and
\[
\frac{\overline{F}\left(U_{X|Y>y_p}\left(\frac{1}{(1-q)^k}\right)\right)}
{\overline{F}\left(U\left(\frac{1}{(1-q)^k}\right)\right)}\sim
\left(\frac{U_{X|Y>y_p}\left(\frac{1}{(1-q)^k}\right)}
{U\left(\frac{1}{(1-q)^k}\right)}\right)^{-\alpha}
\sim m_1^{-1}.
\]
Thus, combining all of these results leads to
\begin{equation}\label{eq:A7}
C\left((1-q)^{-k}\right)\sim
\frac{1}{\alpha^2} \left(m_1^{\frac{\beta}{\alpha}} A\left(x_q\right)+
\frac{\alpha rG(y_p)} {1+rG(y_p)} m_1^{-1}
(1-q)^k\right).
\end{equation}
Finally, plugging (\ref{eq:A6}) and (\ref{eq:A7}) into (\ref{eq:A2}) and omitting higher-order terms yield the desired result as required and this ends the proof.
\end{proof}

\subsection{Proof of Theorem \ref{thm:second}}
\begin{proof}
First,  for convenience, set $h(t)=\overline F\left(\hat{x}-1/t\right)$. It is easy to verify that $$\left(\frac{1}{h}\right)^{\leftarrow}(t)=\frac{1}{\hat{x}-U\left(t\right)},$$ which leads to 
\begin{equation}\label{eq:Weibull+1}
F(F^{\leftarrow}(t))= t\left(1+o\left(A\left(1\Big/\left(\hat{x}-U\left(\frac{1}{1-t}\right)\right)\right)\right)\right)~~~ \mathrm{as} \quad t\uparrow 1,
\end{equation}
and
\begin{equation}\label{eq:Weibull+4}
\left(\hat{x}-U\left(1\Big/\overline F\left(\hat{x}-\frac{1}{t}\right)\right)\right)^{-1}=t\left(1+o(A(t))\right)~~~ \mathrm{as} \quad t\rightarrow \infty,
\end{equation}
due to Lemma \ref{lem:first}, where $A(t)$ is the auxiliary function of $\overline F\left(\hat{x}-1/t\right)$. 

Next, it follows from (\ref{eq:A1}) that
\begin{equation*}
\frac{\overline{F}_{X|Y>y_p}\left(\hat{x}-\frac{1}{t}\right)}
{\overline{F}\left(\hat{x}-\frac{1}{t}\right)}
= 1 +rG(y_p)F\left(\hat{x}-\frac{1}{t}\right).
\end{equation*}
Therefore, noticing that \( \overline{F} \left(\hat{x}-\frac{1}{\cdot}\right)\in 2\mathrm{RV}_{-\alpha,\beta}\) for some $\alpha>0$ and $\beta\leq 0$ with an auxiliary function  $A\left(\cdot\right)$, by Lemma \ref{lem:WEIBULL}, we can deduce that 
\begin{equation}\label{eq:Weibull+2}
\overline{F}_{X|Y>y_p}  \left(\hat{x}-\frac{1}{\cdot}\right)\in 2\mathrm{RV}_{-\alpha,\rho}
\end{equation}
for $\rho=(-\alpha)\vee \beta$ with an auxiliary function 
\begin{equation*}
B(\cdot)=A(\cdot)+\frac{\alpha rG(y_p)}{1+rG(y_p)}\overline{F}\left(\hat{x}-\frac{1}{\cdot}\right).
\end{equation*}.

Then, recall  the well-known fact that, for $F\in \mathrm{MDA}(\Psi_\alpha)$, $\overline F\left(\hat{x}-\frac{1}{\cdot}\right)\in \mathrm{2RV}_{-\alpha,\rho}$ with an auxiliary function $\tilde{A}(\cdot)$ is equivalent to $\hat{x}-U\in \mathrm{2RV}_{-1/\alpha,\rho/\alpha}$ with an auxiliary function $-\alpha^{-2}\tilde{A}\left(1/(\hat{x}-U(\cdot))\right)$. This result together with (\ref{eq:Weibull+2}) implies that
 \(\hat{x}-U_{X|Y>y_p} \in 2\mathrm{RV}_{-1/\alpha, \rho/\alpha}\) with an auxiliary function 
\begin{equation}\label{eq:Weibull+5}
C(\cdot)=-\alpha^{-2}B\left(1/(\hat{x}-U_{X|Y>y_p}(\cdot))\right).
\end{equation}
Similarly as above,  applying the Weibull case of Lemma \ref{lem:fifth}, by some simple computations, it yields that 
\begin{equation}\label{eq:A8}
\hat{x}-\mathrm{CoHM}_{p,q}(X|Y) = c_2 \left(\hat{x}-F_{X|Y>y_p}^{\leftarrow}\bigl(1 - (1 - q)^k\bigr)\right)\left[ 1 + J_{\alpha, \rho, k}^{(2)} C\left((1-q)^{-k}\right) (1+o(1))\right] 
\end{equation}
with  
\begin{eqnarray}\label{eq:Weibull+6}
C\left((1-q)^{-k}\right) 
&=& -\alpha^{-2} B\left(1/\left(\hat{x}-U_{X|Y>y_p}\left(\frac{1}{(1-q)^k}\right)\right)\right)\nonumber\\
&=& -\alpha^{-2} \left(A\left(1/\left(\hat{x}-U_{X|Y>y_p}\left(\frac{1}{(1-q)^k}\right)\right)\right)\right.\nonumber\\
&\quad&\left.+\frac{\alpha rG(y_p)}{1+rG(y_p)}\overline{F}\left(U_{X|Y>y_p}\left(\frac{1}{(1-q)^k}\right)\right)\right).
\end{eqnarray}

Finally, it suffices to convert 
\(F_{X|Y>y_p}^{\leftarrow}\bigl(1 - (1 - q)^k\bigr)\) into the form of 
\(F^{\leftarrow}\left(1 - (1 - q)^k\right)\) as before. By a treatment similar as in the proof of Theorem \ref{thm:main}, let $x_q=F^{\leftarrow}\left(1 - (1 - q)^k\right)=U\left((1-q)^{-k}\right)$, which implies that $x_q\uparrow\hat{x}$ as $q\uparrow 1$. It follows from (\ref{eq:A1}) and (\ref{eq:Weibull+1}) that
\begin{eqnarray}\label{eq:A9}
\frac{\overline{F}_{X|Y>y_p}(x_q)}{\overline{F}(x_q)}=1+rG(y_p)F(x_q)=m_q+o\left(A\left(\frac{1}{\hat{x}-x_q}\right)\right).
\end{eqnarray}
Recalling that \(\hat{x}-U_{X|Y>y_p} \in 2\mathrm{RV}_{-1/\alpha, \rho/\alpha}\) with an auxiliary function $C(\cdot)$ given by (\ref{eq:Weibull+5}), we conclude that
\begin{eqnarray*}
&\quad&\frac{\hat{x}-U_{X|Y>y_p}\left(\frac{1}{\overline{F}(x_q)}\right)}
{\hat{x}-U_{X|Y>y_p}\left(\frac{1}{\overline{F}_{X|Y>y_p}(x_q)}\right)}-\left(\frac{\overline{F}_{X|Y>y_p}(x_q)}{\overline{F}(x_q)}\right)^{-\frac{1}{\alpha}}\\
&=&H_{-\frac{1}{\alpha},\frac{\rho}{\alpha}}\left(m_1\right)(-\alpha^{-2})\left(A\left(z_q\right)+\frac{\alpha rG(y_p)}{1+rG(y_p)}\overline F\left(\hat{x}-\frac{1}{z_q}\right)\right)(1+o(1))
\end{eqnarray*}
with
\[z_q=\left(\hat{x}-U_{X|Y>y_p}\left(\frac{1}{\overline{F}_{X|Y>y_p}(x_q)}\right)\right)^{-1}.\]
Now, substituting (\ref{eq:A9}) into the relation above yields that 
\begin{eqnarray*}
&\quad&\frac{\hat{x}-U_{X|Y>y_p}\left(\frac{1}{\overline{F}(x_q)}\right)}
{\hat{x}-U_{X|Y>y_p}\left(\frac{1}{\overline{F}_{X|Y>y_p}(x_q)}\right)}
=m_q^{-1/\alpha}\left(1-\frac{1}{\alpha}m_q^{-1}o\left(A\left(\frac{1}{\hat{x}-x_q}\right)\right)\right)\\
&\quad&+H_{-\frac{1}{\alpha},\frac{\rho}{\alpha}}\left(m_1\right)(-\alpha^{-2})\left(A(z_q)+\frac{\alpha rG(y_p)}{1+rG(y_p)}\overline F\left(\hat{x}-\frac{1}{z_q}\right)\right)(1+o(1))\\
&=&m_q^{-1/\alpha}\left(1-\frac{m_1^{\rho/\alpha}-1}{\rho\alpha}\left(A\left(\frac{1}{\hat{x}-x_q}\right)+\frac{\alpha rG(y_p)}{1+rG(y_p)}(1-q)^k\right)(1+o(1))\right),
\end{eqnarray*}
where in the last step we used $z_q=(\hat{x}-x_q)^{-1}\left(1+o\left(A\left(\frac{1}{\hat{x}-x_q}\right)\right)\right)$  by (\ref{eq:Weibull+4}). 
Besides, following the same approach of (\ref{eq:relationship of U}), it is easy to verify 
\begin{equation*}
    \hat{x}-U_{X|Y>y_p}\left(\frac{1}{(1-q)^k}\right)=\hat{x}-U_{X|Y>y_p}\left(\frac{1}{\overline{F}(x_q)}\right)\left(1+o\left(A\left(\frac{1}{\hat{x}-x_q}\right)\right)\right).
\end{equation*}
Therefore, noticing that 
\begin{eqnarray*}
\frac{\hat{x}-F_{X|Y>y_p}^{\leftarrow}\bigl(1 - (1 - q)^k\bigr)}
{\hat{x}-F^{\leftarrow}\left(1 - (1 - q)^k\right)}
=\frac{\hat{x}-U_{X|Y>y_p}\left(\frac{1}{(1-q)^k}\right)}{\hat{x}-U_{X|Y>y_p}\left(\frac{1}{\overline{F}(x_q)}\right)}\cdot\frac{\hat{x}-U_{X|Y>y_p}\left(\frac{1}{\overline{F}(x_q)}\right)}
{\hat{x}-U_{X|Y>y_p}\left(\frac{1}{\overline{F}_{X|Y>y_p}(x_q)}\right)}\cdot\frac{z_q^{-1}}{\hat{x}-x_q},
\end{eqnarray*}
we conclude that
\begin{eqnarray}  \label{eq:A10}
&\quad&\frac{\hat{x}-F_{X|Y>y_p}^{\leftarrow}\bigl(1 - (1 - q)^k\bigr)}
{\hat{x}-F^{\leftarrow}\left(1 - (1 - q)^k\right)} \nonumber\\
&=&m_q^{-1/\alpha}\left(1-\frac{m_1^{\rho/\alpha}-1}{\rho\alpha}\left(A\left(\frac{1}{\hat{x}-x_q}\right)+\frac{\alpha rG(y_p)}{1+rG(y_p)}(1-q)^k\right)(1+o(1))\right). 
\end{eqnarray}
The relation obtained in (\ref{eq:A10}) above also means that
\begin{equation}\nonumber
\frac{\hat{x}-U_{X|Y>y_p}\left(\frac{1}{(1-q)^k}\right)}
{\hat{x}-U\left(\frac{1}{(1-q)^k}\right)}\sim m_q^{-\frac{1}{\alpha}}
\sim m_1^{-\frac{1}{\alpha}}.
\end{equation}
This together with \(A(\cdot)\in \mathrm{RV}_\beta\) and \(\overline{F}(\hat{x}-\frac{1}{\cdot})\in \mathrm{RV}_{-\alpha}\) yields that
\[
\frac{A\left(1/\left(\hat{x}-U_{X|Y>y_p}\left(\frac{1}{(1-q)^k}\right)\right)\right)}
{A\left(1/\left(\hat{x}-U\left(\frac{1}{(1-q)^k}\right)\right)\right)}
\sim\left(\frac{\hat{x}-U\left(\frac{1}{(1-q)^k}\right)}
{\hat{x}-U_{X|Y>y_p}\left(\frac{1}{(1-q)^k}\right)}\right)^\beta
\sim m_1^{\frac{\beta}{\alpha}}
\]
and
\[
\frac{\overline{F}\left(U_{X|Y>y_p}\left(\frac{1}{(1-q)^k}\right)\right)}
{\overline{F}\left(U\left(\frac{1}{(1-q)^k}\right)\right)}=\frac{\overline{F}\left(\hat{x}-\frac{1}{\hat{x}-U_{X|Y>y_p}\left(\frac{1}{(1-q)^k}\right)}\right)}
{\overline{F}\left(\hat{x}-\frac{1}{\hat{x}-U\left(\frac{1}{(1-q)^k}\right)}\right)}\sim
\left(\frac{\hat{x}-U\left(\frac{1}{(1-q)^k}\right)}
{\hat{x}-U_{X|Y>y_p}\left(\frac{1}{(1-q)^k}\right)}\right)^{-\alpha}
\sim m_1^{-1}.
\]
Thus, substituting all these results into (\ref{eq:Weibull+6}) leads to
\begin{equation}\label{eq:A11}
C\left((1-q)^{-k}\right)\sim
-\frac{1}{\alpha^2} \left(m_1^{\frac{\beta}{\alpha}} A\left(\frac{1}{\hat{x}-x_q}\right)+
\frac{\alpha rG(y_p)} {1+rG(y_p)} m_1^{-1}
(1-q)^k\right).
\end{equation}
Plugging (\ref{eq:A10})-(\ref{eq:A11}) into (\ref{eq:A8}) and omitting higher-order terms yield the desired result as required and this ends the proof.
\end{proof}

\subsection{Proof of Theorem \ref{thm:third}}

\begin{proof}
First, since $F\in \mathrm{MDA}(\Lambda)$ with an auxiliary function $a(\cdot)$, according to Lemma \ref{lem:eighth}, it holds that \(F_{X|Y>y_p} \in \mathrm{MDA}(\Lambda) \) with  the same auxiliary function $a(\cdot)$, which is equivalent to \(U_{X|Y>y_p} \in\mathrm{ERV}_{0} \) due to Theorem 1.1.6 in \citet{deHaan2006}.

Next, applying Lemma \ref{lem:fifth} directly gives 
\begin{equation}\label{eq:*1}
\mathrm{CoHM}_{p,q}(X|Y) = F_{X|Y>y_p}^{\leftarrow}\bigl(1 - (1 - q)^k\bigr) + 
a\left( \frac{1}{(1-q)^k} \right)
\lambda_k (1 + o(1))
\end{equation}
for \(\hat{x} = \infty\), and
\begin{equation}\label{eq:*2}
\hat{x}-\mathrm{CoHM}_{p,q}(X|Y) = \hat{x}- F_{X|Y>y_p}^{\leftarrow}
\bigl(1 - (1 - q)^k\bigr) - 
a\left( \frac{1}{(1-q)^k} \right)
\lambda_k (1 + o(1))
\end{equation}
for \(\hat{x} < \infty\).

Next, it suffices to convert 
\(F_{X|Y>y_p}^{\leftarrow}\bigl(1 - (1 - q)^k\bigr)\) into the form of 
\(F^{\leftarrow}\left(1 - (1 - q)^k\right)\) as before. To do so, we also let \(x_q=F^{\leftarrow}\left(1 - (1 - q)^k\right)=U\left((1-q)^{-k}\right)\).  It follows from Lemma \ref{lem:first+0} that \(\overline{F}(x_q)\sim(1-q)^k\) as \(q\uparrow1\), which together with Lemma \ref{lem:first+1} further implies 
\begin{eqnarray}\label{eq:Gumbel+0}
U\left((1-q)^{-k}\right)=U\left(\frac{1}{\overline F(x_q)}\pm 0\right)=U\left(\frac{1}{\overline F(x_q)}\right)+o\left(a\left(\frac{1}{\overline F(x_q)}\right)\right). \nonumber
\end{eqnarray}
Similarly, one can also derive
\begin{eqnarray}\label{eq:Gumbel+1}
U_{X|Y>y_p}\left((1-q)^{-k}\right)=U_{X|Y>y_p}\left(\frac{1}{\overline F(x_q)}\right)+o\left(a\left(\frac{1}{\overline F(x_q)}\right)\right). \nonumber
\end{eqnarray}
Moreover, it follows from (\ref{eq:A1}) that  
\begin{eqnarray}\label{eq:Gumbel+2}
\frac{\overline{F}_{X|Y>y_p}(x_q)}{\overline{F}(x_q)}=1+rG(y_p)F(x_q)=m_q+o\left(1\right).   \nonumber
\end{eqnarray}
Now we first deal with the case of $\hat{x}=\infty$. Note that
\begin{flalign}\label{eq:Gumbel+3}
&F_{X|Y>y_p}^{\leftarrow}\bigl(1 - (1 - q)^k\bigr)-
F^{\leftarrow}\left(1 - (1 - q)^k\right)      \nonumber\\
=&U_{X|Y>y_p}\left(\frac{1}{\overline{F}(x_q)}\right)-U_{X|Y>y_p}\left(\frac{1}{\overline{F}_{X|Y>y_p}(x_q)}\right)\nonumber\\
\quad&+U_{X|Y>y_p}\left(\frac{1}{\overline{F}_{X|Y>y_p}(x_q)}\right)-U\left(\frac{1}{\overline{F}(x_q)}\right)+o\left(a\left(\frac{1}{\overline F(x_q)}\right)\right).\nonumber
\end{flalign}
Clearly, it follows from the definition of ERV that
\begin{equation}\label{eq:Gumbel+4}
U_{X|Y>y_p}\left(\frac{1}{\overline{F}(x_q)}\right)-
U_{X|Y>y_p}\left(\frac{1}{\overline{F}_{X|Y>y_p}(x_q)}\right)= a\left(\frac{1}{\overline{F}_{X|Y>y_p}(x_q)}\right)
\log\left(m_1\right)\left(1+o(1)\right).     \nonumber
\end{equation}
The replacement $t$ by $1/\overline F(t)$ in Lemma \ref{lem:first+1} also yields 
\begin{equation}\label{eq:Gumbel+5}
 U\left(\frac{1}{\overline F(x_q)}\right)=x_q+o\left(a\left(\frac{1}{\overline F(x_q)}\right)\right).  \nonumber
\end{equation}
Similarly, we also have
\begin{equation}\label{eq:Gumbel+6}
U_{X|Y>y_p}\left(\frac{1}{\overline F_{X|Y>y_p}(x_q)}\right)=x_q+o\left(a\left(\frac{1}{\overline F_{X|Y>y_p}(x_q)}\right)\right). \nonumber
\end{equation}
Combining these results yields that
\begin{flalign}\label{eq:Gumbel+7}
F_{X|Y>y_p}^{\leftarrow}\bigl(1 - (1 - q)^k\bigr)=
F^{\leftarrow}\left(1 - (1 - q)^k\right)+a\left(\frac{1}{(1-q)^k}\right)
\log\left(m_1\right)\left(1+o(1)\right),
\end{flalign}
where we used 
\begin{flalign}
a\left(\frac{1}{\overline F_{X|Y>y_p}(x_q)}\right)\sim a\left(\frac{1}{\overline F(x_q)}\right)\sim a\left(\frac{1}{(1-q)^k}\right)\nonumber
\end{flalign}
due to $a(\cdot)\in\mathrm{RV}_0$. This leads to the desired result.\\
For the case of $\hat{x}<\infty$, by an analogous treatment, observe that
\begin{flalign}\label{eq:Gumbel+9}
&\left(\hat{x}-F_{X|Y>y_p}^{\leftarrow}\bigl(1 - (1 - q)^k\bigr)\right)-
\left(\hat{x}-F^{\leftarrow}\bigl(1 - (1 - q)^k\bigr)\right)      \nonumber\\
=&\left(\hat{x}-U_{X|Y>y_p}\left(\frac{1}{\overline{F}\left(\hat{x}-1/(\hat{x}-x_q)^{-1} \right)}\right)\right)-
\left(\hat{x}-U_{X|Y>y_p}\left(\frac{1}{\overline{F}_{X|Y>y_p}\left(\hat{x}-1/(\hat{x}-x_q)^{-1} \right)}\right)\right)\nonumber\\
\quad&+\left(\hat{x}-U_{X|Y>y_p}\left(\frac{1}{\overline{F}_{X|Y>y_p}\left(\hat{x}-1/(\hat{x}-x_q)^{-1} \right)}\right)\right)-
\left(\hat{x}-U\left(\frac{1}{\overline{F}\left(\hat{x}-1/(\hat{x}-x_q)^{-1} \right)}\right)\right) \nonumber \\
\quad&+o\left(a\left(\frac{1}{\overline F\left(\hat{x}-1/(\hat{x}-x_q)^{-1} \right)}\right)\right) \nonumber \\
=&\left(\hat{x}-U_{X|Y>y_p}\left(\frac{1}{\overline{F}(x_q)}\right)\right)-
\left(\hat{x}-U_{X|Y>y_p}\left(\frac{1}{\overline{F}_{X|Y>y_p}(x_q)}\right)\right)\nonumber\\
\quad&+\left(\hat{x}-U_{X|Y>y_p}\left(\frac{1}{\overline{F}_{X|Y>y_p}(x_q)}\right)\right)-
\left(\hat{x}-U\left(\frac{1}{\overline{F}(x_q)}\right)\right)+o\left(a\left(\frac{1}{\overline F(x_q)}\right)\right).
\nonumber
\end{flalign}
From the definition of ERV we obtain that
\begin{flalign*}
&\left(\hat{x}-U_{X|Y>y_p}\left(\frac{1}{\overline{F}(x_q)}\right)\right)-
\left(\hat{x}-U_{X|Y>y_p}\left(\frac{1}{\overline{F}_{X|Y>y_p}(x_q)}\right)\right)  \nonumber\\
=& a\left(\frac{1}{\overline{F}_{X|Y>y_p}(x_q)}\right)
\log\left(m_1\right)\left(1+o(1)\right).     \nonumber
\end{flalign*}
The replacement $t$ by $1/\overline F(t)$ or $1/\overline F\left(\hat{x}-1/(\hat{x}-t)^{-1}\right)$ in Lemma \ref{lem:first+1} yields 
\begin{equation}\label{eq:Gumbel+11}
 \hat{x}-U\left(\frac{1}{\overline F(x_q)}\right)=\hat{x}-x_q+o\left(a\left(\frac{1}{\overline F(x_q)}\right)\right), \nonumber
\end{equation}
and
\begin{equation}\label{eq:Gumbel+12}
\hat{x}-U_{X|Y>y_p}\left(\frac{1}{\overline F(x_q)}\right)=\hat{x}-x_q+o\left(a\left(\frac{1}{\overline F_{X|Y>y_p}(x_q)}\right)\right). \nonumber
\end{equation}
Combining these results yields that
\begin{equation}\label{eq:*4}
\hat{x}-F_{X|Y>y_p}^{\leftarrow}\bigl(1 - (1 - q)^k\bigr)
= \hat{x}-F^{\leftarrow}\left(1 - (1 - q)^k\right)
-a \left(F^{\leftarrow}\left(1 - (1 - q)^k\right)\right)\log(m_1)\left(1+o(1)\right).
\end{equation}
Finally, plugging (\ref{eq:Gumbel+7}) and (\ref{eq:*4}) into (\ref{eq:*1})-(\ref{eq:*2}) yields the desired result as required, and this ends the proof.
\end{proof}

\end{document}